\documentclass[11pt,a4paper]{article}
\pdfoutput=1
\usepackage{amssymb,amsmath,amsthm,titling}
\usepackage[english]{babel}
\usepackage{tikz,eucal}
\usepackage[top=2.5cm,right=3cm,left=3cm,bottom=2.5cm]{geometry}

\usepackage{color}
\definecolor{verde}{cmyk}{.83,.21,1,.08}

\usepackage[pdfstartview=FitH,colorlinks=true,linkcolor=black,anchorcolor=black,citecolor=black,urlcolor=black
]{hyperref}

\linespread{1.1}
\allowdisplaybreaks[4]

\numberwithin{equation}{section}


\usepackage[titles]{tocloft}

\addtolength{\cftbeforesecskip}{-7pt}
\addtolength{\cftbeforesubsecskip}{-1pt}
\setlength{\cftsubsecnumwidth}{2em}
\cftsetpnumwidth{9pt}
\setcounter{tocdepth}{2}


\newtheorem{prop}{Proposition}[section]

\newtheorem{lemma}[prop]{Lemma}
\newtheorem{cor}[prop]{Corollary}
\newtheorem{df}[prop]{Definition}
\newtheorem{rem}[prop]{Remark}

\theoremstyle{definition}
\newtheorem{ex}[prop]{Example}

\newcommand{\F}{\mathbb{H}}

\newcommand{\be}{\begin{equation}}
\newcommand{\ee}{\end{equation}}

\newcommand{\pa}{{\mathcal{P}(\A)}}
\newcommand{\sa}{{\mathcal{S}(\A)}}

\newcommand{\A}{\mathcal{A}}
\newcommand{\bS}{\mathbb{S}}
\newcommand{\T}{\mathbb{T}}
\newcommand{\B}{\mathcal{B}}
\newcommand{\K}{\mathcal{K}}

\newcommand{\OO}{\mathcal{O}}

\newcommand{\HH}{\mathcal{H}}
\newcommand{\N}{\mathbb{N}}
\newcommand{\NN}{\mathcal{N}}
\newcommand{\Z}{\mathbb{Z}}
\newcommand{\R}{\mathbb{R}}
\newcommand{\C}{\mathbb{C}}
\newcommand{\CP}{\mathbb{C}\mathrm{P}}
\newcommand{\I}{\mathbb{I}}
\newcommand{\de}{\mathrm{d}}
\newcommand{\inner}[1]{\left<#1\right>}
\newcommand{\mat}[1]{\bigg(\!\begin{array}{cc}#1\end{array}\!\bigg)}

\newcommand{\tr}{\mathrm{Tr}}

\newcommand{\mc}{\mathcal}
\newcommand{\mf}{\mathfrak}
\newcommand{\id}{\textup{\textsf{id}}}
\newcommand{\ketbra}[2]{\left|\smash[t]{#1}\right>\!\left<\smash[t]{#2}\right|}
\newcommand{\ket}[1]{\left|\smash[t]{#1}\right>}
\newcommand{\qA}{\mathbb{A}}
\newcommand{\qH}{\mathbb{H}}
\newcommand{\qD}{\mathbb{D}}


\begin{document}

\setlength{\droptitle}{-3pc}
\pretitle{\begin{flushright}\small
ICCUB-13-069
\end{flushright}\vspace*{2pc}%
\begin{center}\LARGE}
\posttitle{\par\end{center}}

\title{Spectral geometry with a cut-off:\\[3pt] topological and metric aspects \\[20pt]}

\author{\hspace*{10pt}Francesco D'Andrea$^{1,3}$, Fedele Lizzi$\hspace{1pt}^{2,3,4}$ and Pierre Martinetti$\hspace{1pt}^{2,3}$ \\[12pt]
{\footnotesize $^1$ Dipartimento di Matematica e Applicazioni, Universit\`a di Napoli {\sl Federico II}. } \\[3pt]
{\footnotesize $^2$ Dipartimento di Fisica, Universit\`a di Napoli {\sl Federico II}.}\\
{\footnotesize $^3$ I.N.F.N. -- Sezione di Napoli.} \\[3pt]
{\footnotesize $^4$ Departament de Estructura i Constituents de la Mat\`eria.} \\
{\footnotesize Institut de Ci\'encies del Cosmos, Universitat de Barcelona.}}

\date{}

\maketitle

\begin{abstract}
Inspired by regularization in quantum field theory, we study topological and metric properties of spaces in which a cut-off is introduced. We work in the framework
of noncommutative geometry, and focus on Connes distance associated to a spectral triple $(\A, \HH, D)$. A high momentum (short distance) cut-off is implemented by the action of a projection $P$ on the Dirac operator $D$ and/or on the algebra $\A$. This action induces two new distances. We individuate conditions making them equivalent to the original distance. We also study the Gromov-Hausdorff limit of the set of truncated states, first for compact quantum metric spaces in the sense of Rieffel, then for arbitrary spectral triples. To this aim, we introduce a notion of ``state with finite moment of order $1$'' for noncommutative algebras. We then focus on the commutative case, and show that the cut-off induces a minimal length between points, which is infinite if $P$ has finite rank.
When $P$ is a spectral projection of $D$, we work out an approximation of points by non-pure states that are at finite distance from each other. On the circle, such approximations are given by Fej\'er probability distributions.  Finally we apply the results to Moyal plane and the fuzzy sphere, obtained as Berezin quantization of the plane and the sphere respectively. 
\end{abstract}

\pagebreak

\vspace*{3cm}

\begin{center}
\begin{minipage}{14cm}
\tableofcontents
\end{minipage}
\end{center}

\pagebreak


\section{Introduction}
We study the  topological and metric aspects of spaces in which a cut-off is implemented. The physical motivation is the presence of divergent quantities in quantum field theory, and the techniques used to tame them. Specifically, the calculations in quantum field theory are usually done in the Fourier space and there are divergences due to the high values of the momentum. To obtain finite quantities, one introduces a large scale (usually called a cut-off) which represents the maximum attainable momentum. 
In this paper we are interested in the geometrical consequence of the cut-off: by momentum/position duality, cutting away high momenta means cutting away short distances. This means that the usual tools of differential geometry do not apply anymore, but the setting is ideal for the methods of noncommutative geometry~\cite{Con94,GVF01,Lan02,CM08}. The latter provides a translation of Riemannian geometry in completely algebraic terms, using a $*$-algebra $\A$ represented on a Hilbert space $\cal H$ (which capture the topological aspects), and a not necessarily bounded generalized Dirac operator $D$ (which captures the metric aspects). These elements form a \emph{spectral triple} and are at the basis of the construction. These ingredients are naturally present in any quantum field theory: the algebra is the one of complex-valued functions on spacetime $M$, the Hilbert space is the one representing the matter fields of the theory, and the generalized Dirac operator contains the relevant physical information of the system.

A cut-off $\Lambda>0$ is naturally implemented through the action of a suitably chosen projection. For instance when $M$ is closed, a natural way to define a regularized partition function is to decompose the infinite dimensional $\HH=\bigoplus_{\lambda}V_\lambda$ as the (completed) direct sum of eigenspaces of the Dirac operator $D$. Let $P_\Lambda$ denote the projection on the direct sum of eigenspaces with eigenvalue $\lambda$ not greater (in absolute value) than the cut-off,  $|\lambda|\leq \Lambda$. Since $D$ has compact resolvent, the projection $P_\Lambda$ and the operator
\begin{equation}\label{eq:28}
D_\Lambda := P_\Lambda D P_\Lambda \;
\end{equation} are of finite rank. The determinant of $D_\Lambda$ (thought of as an operator on $\HH_\Lambda:=P_\Lambda\HH$)
is well defined and gives a regularized partition function. This procedure is called \emph{finite mode regularization} \cite{AndrianovBonora1, Fujikawa, AKL11}. Although it is very much in the spirit of Noncommutative Geometry, it was in fact originally developed~before~it.

As recalled in \S\ref{basics}, for $M$ a Riemannian spin manifold the Dirac operator $D$ induces a distance $d_{\A, D}$ on the state space of the algebra $\A = C_0(M)$. This distance coincides with the geodesic distance if the states are pure (Dirac deltas), and with the Wasserstein distance of order $1$ of transport theory -
with cost the geodesic distance - if the states are given by arbitrary probability distributions on $M$ (see e.g.~\cite{DM09}). In this paper we give an answer to the question: ``how the metric properties of the state space changes upon the replacement of $D$ with $D_\Lambda$?''
In particular, we investigate whether the regularized theory is an ``approximation'' of the original theory in some precise mathematical sense.

\medskip

Although in finite mode regularization $P_\Lambda$ is an eigenprojection of $D$, i.e.~$[D,P_\Lambda]~=~0$, we work under the general hypothesis that $P_\Lambda$ is any projection on $\HH$, non-necessarily commuting with $D$. Furthermore, our starting point is any (not necessarily commutative) unital spectral triple $(\A,\HH,D)$. Except for sections \ref{sec5} and \ref{sectionBerezin}, we do not assume this is the canonical spectral triple of a closed Riemannian spin manifold. 
\medskip

The main results of that paper are the following:

\begin{list}{$-$}{\itemsep=0pt \itemindent=2em \leftmargin=0em}
\item\emph{Equivalence of the topologies induced by truncated distances:} we introduce two new distances - $d_{\A, D_\Lambda}$ and  $d_{\OO_\Lambda, D_\Lambda}$- corresponding to the truncation of the Dirac operator $D$ only, and to the truncation of both the Dirac operator and the algebra. The main result is Prop.~\ref{propequiv}, in which we work out some conditions on $D$ and $P_\Lambda$ under which these two distances are equivalent to the initial one $d_{\A, D}$. With weaker conditions, we also obtain in Prop.~\ref{propineq} some inequalities between the three distances.

\item\emph{Approximation of states and Gromov-Hausdorff convergence}: for compact quantum metric spaces in the sense of Rieffel, we show in Prop.~\ref{prop:3.12} that
any normal state of the initial algebra $\A$ can be approximated in the metric topology of $d_{\A, D}$ by a sequence of truncated states. We also show in Prop.~\ref{prop:4.3} that the convergence holds true not only for individual state, but also for metric spaces, in the Gromov-Hausdorff sense. Similar results are obtained in Prop.~\ref{propconvmetric} and Cor.~\ref{cormetrictop} for spectral triples that are not quantum metric spaces. There, normal states are substituted by a noncommutative generalization of states with finite moment of order $1$, a notion which is introduced in Def.~\ref{defmoment}. The difference between the weak$^*$ and the metric topologies is illustrated on a simple example: the lattice $\Z$. It is shown in Prop.~\ref{conlattice} that the closure of the inductive limit of truncated normal states of $C_0(\Z)$ is the space of normal states for the weak$^*$ topology, and the space of states with finite moment of order $1$ for the metric topology.

\item\emph{Approximation of points}: we show in Prop.~\ref{prop:1} that on a \emph{commutative} spectral geometry with a cut-off, the distance $d_{\A, D_\Lambda}$ between points (i.e.~pure states) is never smaller than the cut-off, and is even infinite when $P_\Lambda$ has finite rank (Prop.~\ref{propfiniterank}). Coming back to the original physical motivation, namely for $P_\Lambda$ an eigenprojection of the Dirac operator, we show in Prop.~\ref{prop:6.10} how to approximate points by non-pure states that remain at finite distance from one another. Specifically on the real line, we work out an approximation of points by non-pure states such that both distances $d_{\A,D_\Lambda},d_{\OO_\Lambda,D_\Lambda}$ are finite, and the latter actually coincides with $d_{\A, D}$ between points (Prop.~\ref{prop:realline}). On the circle, we approximate points by the non-pure state given by the evaluation of the Fej\'er transform of $f$. We show in Prop.~\ref{lemma:dgeo} that the distances are always smaller than the geodesic one and converge to it as the rank of the Fej\'er transform goes to infinity. We also provide a tight lower bound. 

\item\emph{Wasserstein distance and Berezin quantization}: in Prop.~\ref{prop:MoyalBerezin} we recall how Moyal plane can be seen as the complex plane with a cut-off, taking for $P_\Lambda$ the projection  on holomorphic functions (Berezin-Toeplitz quantization). We obtain in Prop.~\ref{cor:coherent} a new proof that the distance between coherent states of Moyal plane is the Euclidean distance. 
Similar techniques are applied to the sphere and allows to obtain new results on the metric properties of the fuzzy sphere (Prop.~\ref{ineqfuzzysphere} and \ref{prop:fuzzysphere}).
\end{list}

\section{Preliminaries}\label{basics}
We recall some basics on the metric aspect of Connes noncommutative geometry, Rieffel theory of compact quantum metric spaces, and Hausdorff convergence.

\subsection{Metric aspect of noncommutative geometry}\label{section2.1}
A spectral triple $(\A,\HH,D)$ is the datum of a separable Hilbert space $\HH$, a $*$-subalgebra $\A\subset\B(\HH)$ and a self-adjoint operator $D$ on $\HH$ such that, for all $a\in\A$, $[D,a]\in\B(\HH)$ and $a(D+i)^{-1}\in\K(\HH)$. We say that $(\A,\HH,D)$ is \emph{unital} if $\A$ is a unital algebra. The latter condition is then equivalent to $D$ having compact resolvent.

Although one can work with real Hilbert spaces and algebras (as for the spectral triple of the Standard Model \cite{AC2M2}), we assume that $\HH$ and $\A$ are complex. With some additional assumptions, from any commutative unital spectral triple one reconstructs an underlying Riemaniann manifold $M$ \cite{connesreconstruct}. That is the reason why a spectral triple over a noncommutative algebra is viewed as the noncommutative analogue of a manifold.

\smallskip

A state of a $C^*$-algebra $A$ is a positive linear functional $\varphi:A\to\C$ with norm $1$. The set of states, denoted $\sa$, is convex, with extreme points the
\emph{pure states}. It is often convenient to work with a dense subalgebra $\A$ of $A$ (like $C^\infty_0(M)\subset C_0(M)$). In that case, by a state of $\A$ we mean a state of its $C^*$-completion $A$. 

An important class of states is given by normal states. They are usually defined for von Neumann algebras as completely additive states \cite[Def.~7.1.1]{Kadison1983}. We
use the following equivalent definition \cite[Thm.~2.4.21]{BR96}, generalized to $C^*$-algebras \cite{Robinson:1970fk}\cite[Def.~2.4.25]{BR96}): a state $\varphi$ of $\A\subset\B(\HH)$ is \emph{normal} if there exists a positive trace-class operator $R$ on $\HH$ with $\tr(R)=1$, called \emph{density matrix}, such that
\begin{equation}\label{eq:normalstate}
\varphi(a)=\tr(Ra) \qquad\forall\;a\in\A \;.
\end{equation}
We denote by $\mathcal{N}(\A)$ the set of all normal states of $\A$. Notice that the map \eqref{eq:normalstate} from density matrices to ${\cal N}(\A)$ is surjective but not always injective: as soon as $\A$ is not the whole of $\B(\HH)$, there may be different density matrices associated to the same state.

\smallskip

Given $(\A, \HH, D)$ with $\A$ a (pre) $C^*$-algebra, $\,\mc{S}(\A)$ is an extended metric space\footnote{An extended metric space is a pair $(X,d)$ with $X$ a
set and $d: X \times X \to [0,\infty]$ a symmetric map satisfying the triangle inequality and such that $d(x,y) = 0$ iff $x = y$. It differs from an ordinary metric only in that the value $+\infty$ is allowed.} with distance
\begin{equation}\label{eq:spectraldistance}
d_{\A,D}(\varphi,\varphi') :=
\sup_{a\in\A^{\mathrm{sa}}}\big\{\,\varphi(a)-\varphi'(a)\,:\,
L_D(a)\leq 1 \,\big\}
\end{equation}
for all $\varphi,\varphi'\in \mc{S}(\A)$, where $L_D$ denotes the
seminorm defined on $\A$ by the operator $D$,
\begin{equation}\label{eq:5}
L_D(a):= \|[D,a]\|.
\end{equation}
We refer to it as the \emph{spectral distance}. Although in the original definition \cite{Con89} the supremum is over all $a\in\A$ obeying the side condition, it was noted in~\cite{Iochum:2001fv} that the supremum can be equivalently searched on the set $\A^{\mathrm{sa}}$ of self-adjoint elements of $\A$.

When $\A=C_0^\infty(M)$ for $M$ a Riemannian (spin) manifold and $D$ is a Dirac type operator, the spectral distance \eqref{eq:spectraldistance} on pure states coincides with the geodesic distance of the Riemannian metric.\footnote{Any point $x\in M$ is recovered as the pure state ``evaluation at $x$'', $\delta_x(f) := f(x)$, and any pure state of $C_0^\infty(M)$ comes from a point.} On arbitrary states, if $M$ is complete, it coincides with the Wasserstein distance of optimal transport theory (see e.g.~\cite{DM09}).

\subsection{Compact quantum metric spaces}
An \emph{order unit space} \cite{Kadison1983} is a real partially ordered vector space $\OO$ with a distinguished element $e$, called the \emph{order unit}, such that:
i) $\forall\;a\in \OO\;\exists\;r\in\R$ such that $a\leq re$; ii) if $a\leq re\;\forall\;r>0$, then $a\leq 0$. A norm on $\OO$ is given by
\begin{equation}\label{eq:58}
  \|a\| := \inf \big\{ r>0 : -re \leq a \leq re\big\} \,.
\end{equation}
A state on $\OO$ is a bounded linear map $\varphi:\OO\to\R$ with norm $1$, that is \cite[Thm.~4.3.2]{Kadison1983} $\varphi(e)=1$.  States are automatically positive. The collection $\mc{S}(\OO)$ of all states of $\OO$ is a compact topological space with respect to the weak$^*$ topology.

\smallskip

Any real vector subspace $\OO$ of $\B(\HH)^{\text{sa}}$ containing the identity $1$ is an order unit space for the partial ordering of operators, with order unit $e=1$. Actually any order unit space comes in this way \cite{Rie04}, so it makes sense to talk about normal states for order~unit~spaces.

\smallskip

A seminorm $L$ on $\OO$ defines on $\mc{S}(\OO)$ an extended metric
\begin{equation}\label{eq:orderunitd}
\rho_L(\varphi,\varphi') := \sup_{a\in\OO}\big\{\,\varphi(a)-\varphi'(a)\,:\,L(a) \leq 1 \,\big\} \;.
\end{equation}
Given a \emph{unital} spectral triple $(\A,\HH ,D)$, taking $\OO=\A^{\mathrm{sa}}$ and $L=L_D$, one recovers the spectral distance \eqref{eq:spectraldistance}. 
The seminorm $L$ is called \emph{Lipschitz} \cite{Rieffel:1999ec} if $L(a)=0$ implies $a\in\R e$. This is a necessary (but not sufficient) condition in order for $\rho_L$ to be finite.

\smallskip

If $X$ is a compact metric space and $\OO=C(X,\R)$, then
\begin{equation}
L(f)=\sup_{x\neq y}|f(x)-f(y)|/d(x,y)
\label{eq:21}
\end{equation}
is a Lipschitz seminorm and the associated metric $\rho_L$ induces on $\mc{S}(\OO)$ the weak$^*$ topology. This motivates the definition of a \emph{compact quantum metric space} \cite{rieffel2003,Rieffel:1999ec} as a pair $(\OO,L)$ such that $L$ is Lipschitz and $\rho_L$ induces on $\mc{S}(\OO)$ the weak$^*$ topology. These two conditions
guarantee that $\rho_L$ is finite on $S(\OO)$ \cite[Thm.~2.1]{rieffel2003}.

Locally compact quantum metric spaces have been recently introduced in \cite{Latremoliere:2012fk}.
An approach based on von Neumann algebras is in \cite{KW10}.

\subsection{Hausdorff and Gromov-Hausdorff distance}\label{sec:GH}
Let $X,Y$ be subsets of a metric space $(M,d)$, $d(x,Y):=\inf\nolimits_{y\in Y}d(x,y)$ the distance between $x\in X$ and the set $Y$, and $d(X,Y):=\sup\nolimits_{x\in X}d(x,Y)$ the largest possible distance between a point of $X$ and the set $Y$. The Hausdorff distance between $X$ and $Y$ is (see e.g.~\cite{BBI01,Gro07}):
\begin{equation}
d_H(X,Y):=\max \big\{d(X,Y),d(Y,X) \big\} \;.
\end{equation}
It is a semi-metric on the set of subsets of $M$ (meaning that distinct subsets of $M$ can be at zero distance), as $d_H(X,Y)=0$ iff $X$ and $Y$ have the same closure.
It becomes an extended metric if we consider only closed subsets of $M$ \cite[Prop.~7.3.3]{BBI01}. In particular, $d_H$ is an extended metric on the collection of compact subsets of $M$, and a metric if $M$ has finite diameter.\footnote{What we call ``extended metric'' is simply called a ``metric'' in \cite{BBI01}.}

A net of subsets $X_k$ of $M$ has Hausdorff limit $X\subset M$ if $\lim d_H(X_k,X)=0$.  This limit may not be unique, but it becomes unique if we require $X$ to be closed.

\smallskip

The Gromov-Hausdorff distance $d_{GH}(X,Y)$ between two complete metric spaces $(X,d_X)$ and $(Y,d_Y)$ is the infimum of the Hausdorff distance $d_H(f(X),g(Y))$ over all the isometric embeddings $f:X\to M$ and $g:Y\to M$ into a metric space $M$ \cite{Gro07}. It is generalized to compact quantum metric spaces by Rieffel in \cite{Rie04}. For subsets of a metric space $(M,d)$, the Hausdorff convergence implies the Gromov-Hausdorff convergence.

\section{Truncations}\label{truncations}
The regularization procedure motivated by quantum field theory, consisting in cutting off the spectrum of $D$, is implemented by the action of a finite-rank projection $P_\Lambda\in\B(\HH)$. Substituting in \eqref{eq:spectraldistance} the Dirac operator with $D_\Lambda$ as in \eqref{eq:28} modifies the spectral distance. In this section we study the relation between the distances associated to $D$ and~$D_\Lambda$. 
\subsection{Regularization of the geometry}
Since $D_\Lambda$ is bounded (it has finite rank) and compact operators form a two-sided ideal in $\B(\HH)$, if $b=a(D_\Lambda +i)^{-1}$ is compact then $a=b(D_\Lambda +i)$ is compact too. Thus $a(D_\Lambda +i)^{-1}$ cannot be compact for all $a\in\A$, unless $\A\subset \K(\HH)$. So $(\A,\HH,D_\Lambda )$ in general is not a spectral triple. Nevertheless $[D_\Lambda , a]$ is bounded for any $a\in\A$ and equation \eqref{eq:spectraldistance} still defines an extended metric $d_{\A, D_\Lambda }$ on $\mc{S}(\A)$.

One can also consider the action of $P_\Lambda$ on the algebra. Assuming that $\A$ is unital, let $\pi_\Lambda :\B(\HH)\to\B(\HH)$ be the linear map
\begin{equation}\label{eq:piN}
\pi_\Lambda (a):=P_\Lambda a P_\Lambda
\end{equation}
and $\OO_\Lambda$ the image of $\A^{\mathrm{sa}}$: 
\begin{equation}
\OO_\Lambda:=\pi_\Lambda (\A^{\mathrm{sa}})
\end{equation}

\begin{prop}
\label{prop3.1}
$\OO_\Lambda$ is a finite-dimensional order unit space, with order unit $P_\Lambda$. Every state of $\OO_\Lambda$ is normal,
\begin{equation}
\mc{S}(\OO_\Lambda)=\mc{N}(\OO_\Lambda).
\label{eq:19}
\end{equation}
Furthermore, if
 \begin{equation}\label{eq:30}
L_\Lambda(.):=\|[D_\Lambda , .]\|
\end{equation} is a Lipschitz seminorm, then $(\OO_\Lambda,L_\Lambda)$ is a compact quantum metric space.
\end{prop}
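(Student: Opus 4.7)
The plan is to verify the three claims one after the other, each resting on finite-dimensionality of $\OO_\Lambda$.

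First I would treat the order unit space structure. Since $\A$ is unital, $\pi_\Lambda(1)=P_\Lambda$ lies in $\OO_\Lambda$, so $P_\Lambda$ is a candidate order unit. Every element $a\in\OO_\Lambda$ satisfies $a=P_\Lambda a P_\Lambda$, hence $\OO_\Lambda$ is a real subspace of the self-adjoint part of the finite-dimensional $C^*$-algebra $\mc{M}:=P_\Lambda\B(\HH)P_\Lambda\cong M_n(\C)$ with $n=\mathrm{rank}(P_\Lambda)$; this yields finite dimension. To check the two axioms after \eqref{eq:58}: given $a\in\OO_\Lambda$, since $a$ is supported on $P_\Lambda\HH$ I have $\pm a\leq\|a\|P_\Lambda$ as operators, which settles (i); for (ii), if $a\leq rP_\Lambda$ for every $r>0$, then $\langle\xi,a\xi\rangle\leq r\|P_\Lambda\xi\|^{2}$ for all $\xi\in\HH$ and all $r>0$, so $a\leq 0$.

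Next, for the equality $\mc{S}(\OO_\Lambda)=\mc{N}(\OO_\Lambda)$, the plan is to extend and then lift. Given a state $\varphi$ of $\OO_\Lambda$, since $P_\Lambda$ is the unit of $\mc{M}$ and lies in $\OO_\Lambda$, the Krein extension theorem yields a positive linear functional $\tilde\varphi$ on $\mc{M}^{\mathrm{sa}}$ with $\tilde\varphi(P_\Lambda)=1$; that is, a state of the finite-dimensional matrix algebra $\mc{M}$. By the classical representation, $\tilde\varphi(x)=\mathrm{Tr}_{P_\Lambda\HH}(Rx)$ for a density matrix $R$ on $P_\Lambda\HH$. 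Padding $R$ with zero on $(1-P_\Lambda)\HH$ produces a trace-class positive operator of trace $1$ on $\HH$ which, because any $a\in\OO_\Lambda$ vanishes off $P_\Lambda\HH$, still satisfies $\tr(Ra)=\tilde\varphi(a)=\varphi(a)$; this is exactly \eqref{eq:normalstate}, so $\varphi$ is normal.

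For the third claim, assuming $L_\Lambda$ is Lipschitz I need to show that $\rho_{L_\Lambda}$ induces the weak$^*$ topology on $\mc{S}(\OO_\Lambda)$. The Lipschitz hypothesis means that $L_\Lambda$ descends to a genuine norm on the finite-dimensional quotient $\OO_\Lambda/\R P_\Lambda$. On that quotient all norms are equivalent, so the image of the unit ball $\{a:L_\Lambda(a)\leq 1\}$ in $\OO_\Lambda/\R P_\Lambda$ is bounded, hence totally bounded, and Rieffel's criterion (\cite{rieffel2003}, Thm.~2.1) applies. Equivalently, and more directly: the differences $\varphi-\varphi'$ of two states vanish on $\R P_\Lambda$, so $\rho_{L_\Lambda}(\varphi,\varphi')$ coincides with the dual of the norm $L_\Lambda$ on $(\OO_\Lambda/\R P_\Lambda)^{*}$, and in finite dimension this dual norm is equivalent to the seminorm of weak$^*$ convergence defined by any basis. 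Thus $\rho_{L_\Lambda}$ is a finite metric generating the weak$^*$ topology on $\mc{S}(\OO_\Lambda)$, making $(\OO_\Lambda,L_\Lambda)$ a compact quantum metric space.

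The hard part is essentially cosmetic: one needs to be careful with the lifting from $P_\Lambda\HH$ to $\HH$ in the proof of normality, using crucially that elements of $\OO_\Lambda$ are supported on $P_\Lambda\HH$ so that the zero-padded density matrix still implements $\varphi$ on the original subspace. Everything else reduces to a mixture of standard finite-dimensional linear algebra, the Krein extension theorem, and Rieffel's equivalence for compact quantum metric spaces.
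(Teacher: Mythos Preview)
Your proof is correct and follows essentially the same route as the paper's. The paper is more terse: for the order unit structure it invokes the general fact (stated in \S2.2) that any real subspace of $\B(\HH_\Lambda)^{\mathrm{sa}}$ containing the identity is an order unit space, for normality it cites the state extension theorem \cite[Thm.~4.3.13(ii)]{Kadison1983} (which is your Krein extension), and for the compact quantum metric space claim it simply refers to \cite[Prop.~4.2]{CDMW09}; you have spelled out each of these steps explicitly, but the underlying arguments are the same.
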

\begin{proof}
Call $\HH_\Lambda:=P_\Lambda\HH$ the range of $P_\Lambda$. Then $\OO_\Lambda$ is a real vector subspace of $\B(\HH_\Lambda)^{\mathrm{sa}}$. Since $1\in\A$, $\OO_\Lambda$ contains the identity operator of $\HH_\Lambda$, that is $P_\Lambda=\pi_\Lambda (1)$. Hence $\OO_\Lambda$ is an order unit space. By \cite[Thm.~4.3.13(ii)]{Kadison1983} every state of $\OO_\Lambda$ can be extended to a state of $\B(\HH_\Lambda)$, hence it is normal, being the latter a finite dimensional matrix algebra.

For the second statement, one can repeat verbatim the proof of \cite[Prop.~4.2]{CDMW09}.
\end{proof}

\begin{rem}\label{footunitization}
If $\A$ is not unital, one may consider its minimal unitization $\A^+ = \A\oplus\C$ with $z\in\C$ acting on $\HH$ as a multiple of the identity operator $1$. $(\A^+,\HH,D)$ may not be a spectral triple (if $D$ has not a compact resolvent, the condition $a(D+i)^{-1}$ is not satisfied). Nevertheless the spectral distance $d_{\A^+,D}$ is still well-defined on $\mc{S}(\A^+)$, and coincides with $d_{\A,D}$ on $\sa\subset\mc{S}(\A^+)$ \cite[Lemma V.4]{MT11}. The same is true for $d_{\A^+, D_\Lambda }$ and $d_{\A,D_\Lambda}$.
\end{rem}

The following example shows the importance of working with ordered unit spaces, rather than only with spectral triples.
\begin{ex}
The complexification $\pi_\Lambda (\A)$ of $\OO_\Lambda$ is a vector subspace of $\B(\HH_\Lambda)$ but not necessarily a subalgebra. For instance take $\HH=\C^4$, $\A\simeq M_2(\C)$ the subalgebra of $M_4(\C)$ of block-diagonal matrices with identical blocks:
\begin{equation}
\hspace{2cm}
\begin{bmatrix}
a_{11} & a_{12} & 0 & 0 \\
a_{21} & a_{22} & 0 & 0 \\
0 & 0 & a_{11} & a_{12} \\
0 & 0 & a_{21} & a_{22}
\end{bmatrix}
\;,\qquad
a_{ij}\in\C \;,
\end{equation}
and $P_\Lambda=\mathrm{diag}(1,1,1,0)$. Every element of $\pi_\Lambda (\A)$ is a matrix with the same element in position $(1,1)$ and $(3,3)$. If $a\in\A$ is
the element with $a_{11}=a_{22}=0$ and $a_{12}=a_{21}=1$, clearly
\begin{equation}
\pi_\Lambda (a)\cdot\pi_\Lambda (a)=
\begin{bmatrix}
1 & 0 & 0 & 0 \\
0 & 1 & 0 & 0 \\
0 & 0 & 0 & 0 \\
0 & 0 & 0 & 0
\end{bmatrix}
\end{equation}
is not in $\pi_\Lambda (\A)$, meaning that $\pi_\Lambda (\A)$ is not a subalgebra of $\B(\HH)$. However $\pi_\Lambda (\A)$ is an algebra as soon as $P_\Lambda$ is in $\A$ or in its commutant $\A'$ (that is $[P_\Lambda,a]~=~0$ for all $a~\in~\A$).$\square$
\end{ex}

\subsection{Truncated topologies}\label{sec:3.2}
Given a unital spectral triple $(\A,\HH,D)$ and a finite-rank projection $P_\Lambda$, we thus obtain three distinct extended metric spaces:
\begin{align}\label{eq:metricspaces}
\big(\mc{S}(\A),d_{\A,D}\big) \,, &&
\big(\mc{S}(\A),d_{\A,D_\Lambda }\big) \,, &&
\big(\mc{S}(\OO_\Lambda),d_{\OO_\Lambda,D_\Lambda }\big) 
\end{align}
where $d_{\OO_\Lambda,D_\Lambda }$ denotes the distance defined by \eqref{eq:orderunitd} on $\mc{S}(\OO_\Lambda)$ by the seminorm $L_\Lambda$. In the passage $d_{\A,D}\to d_{\A,D_\Lambda }$, only the metric structure incoded in $D$ changes. In the passage $d_{\A,D_\Lambda }\to d_{\OO_\Lambda,D_\Lambda }$ the state space itself is modified. We aim at answering two questions:
\begin{itemize}

\item\emph{Equivalence:} Under which conditions are these distances equivalent?
Here the relevant notion is \emph{strong equivalence}: two distances $d_1$ and $d_2$ on a set $X$
are strongly equivalent if there exists positive constants $\alpha,\beta$ such that $\forall\;x,y\in X$:
$$
\alpha d_1(x,y)\leq d_2(x,y)\leq \beta d_1(x,y)\;.
$$

\item\emph{Convergence:}
Given $(\A, \HH, D)$ with infinite dimensional $\A$, can the extended metric space $\big(\mc{S}(\A),d_{\A,D}\big)$  be approximated in a suitable sense by a sequence of metric spaces associated to truncated operators $D_\Lambda $?
\end{itemize}
We study below the first point. The second one is discussed in the next section.

\smallskip

To begin with, let us note that the last distance in \eqref{eq:metricspaces} is defined on a different space than the other two. So in order to compare them one needs to clarify the relation between states of $\A$ and of $\OO_\Lambda$. This is where normal states turn out to be important.

\begin{lemma}\label{prop:3.6}
There is an injective (not necessarily surjective) map $\sharp:\mc{S}(\OO_\Lambda)\to\mc{N}(\A)$, $\varphi\mapsto\varphi^\sharp$, given by:
\begin{equation}
\varphi^\sharp:=\varphi\circ\pi_\Lambda  \;.
\end{equation}
\end{lemma}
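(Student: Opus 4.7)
The plan is to verify three things: that $\varphi^\sharp = \varphi \circ \pi_\Lambda$ is well-defined as a state on $\mathcal{A}$, that it is normal (hence lands in $\mathcal{N}(\mathcal{A})$), and that the assignment $\varphi \mapsto \varphi^\sharp$ is injective. The heart of the argument is the normality, which I would handle by producing an explicit density matrix for $\varphi^\sharp$ on $\mathcal{H}$, lifting one that exists on $\mathcal{H}_\Lambda$ by Proposition~\ref{prop3.1}.

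First I would check that $\varphi^\sharp$ is a state on $\mathcal{A}$. Here $\pi_\Lambda$ is a positive complex-linear map from $\mathcal{A}$ into the complexification of $\mathcal{O}_\Lambda$, since $a\geq 0$ implies $P_\Lambda a P_\Lambda\geq 0$ as operators. Composition with the positive functional $\varphi$ on $\mathcal{O}_\Lambda$ (uniquely extended by complex linearity to $\pi_\Lambda(\mathcal{A})$) is therefore positive. Normalization is immediate: $\varphi^\sharp(1) = \varphi(P_\Lambda 1 P_\Lambda) = \varphi(P_\Lambda) = 1$ because $P_\Lambda$ is the order unit of $\mathcal{O}_\Lambda$.

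Next, normality. By Proposition~\ref{prop3.1}, $\varphi \in \mathcal{S}(\mathcal{O}_\Lambda) = \mathcal{N}(\mathcal{O}_\Lambda)$, so there is a density matrix $R$ on the finite dimensional Hilbert space $\mathcal{H}_\Lambda$ with $\varphi(T) = \mathrm{tr}_{\mathcal{H}_\Lambda}(RT)$ for all $T\in \mathcal{O}_\Lambda$. Viewing $R$ as an operator on $\mathcal{H}$ through the inclusion $\mathcal{H}_\Lambda \hookrightarrow \mathcal{H}$ (so that $R = P_\Lambda R P_\Lambda$ and $\mathrm{tr}_{\mathcal{H}}(R) = \mathrm{tr}_{\mathcal{H}_\Lambda}(R)=1$), I would compute, for any $a\in \mathcal{A}$,
\begin{equation}
\varphi^\sharp(a) \;=\; \varphi\bigl(P_\Lambda a P_\Lambda\bigr) \;=\; \mathrm{tr}_{\mathcal{H}_\Lambda}\bigl(R\,P_\Lambda a P_\Lambda\bigr) \;=\; \mathrm{tr}_{\mathcal{H}}\bigl(P_\Lambda R P_\Lambda\,a\bigr) \;=\; \mathrm{tr}_{\mathcal{H}}(Ra) \;,
\end{equation}
using cyclicity of the trace on $\mathcal{H}$ and the fact that $R$ is supported on $\mathcal{H}_\Lambda$. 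Thus $\varphi^\sharp$ admits the density matrix $R$ on $\mathcal{H}$ and belongs to $\mathcal{N}(\mathcal{A})$ in the sense of \eqref{eq:normalstate}.

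Finally, injectivity is straightforward: if $\varphi_1^\sharp = \varphi_2^\sharp$, then $\varphi_1(P_\Lambda a P_\Lambda) = \varphi_2(P_\Lambda a P_\Lambda)$ for every $a\in\mathcal{A}^{\mathrm{sa}}$, i.e.\ $\varphi_1$ and $\varphi_2$ agree on the whole order unit space $\mathcal{O}_\Lambda = \pi_\Lambda(\mathcal{A}^{\mathrm{sa}})$, whence $\varphi_1=\varphi_2$ by definition of a state on $\mathcal{O}_\Lambda$. The only mild subtlety is the normality step, since one must take care that the density matrix furnished by Proposition~\ref{prop3.1} lives naturally on $\mathcal{H}_\Lambda$ and must be extended by zero to $\mathcal{H}$; non-surjectivity is not part of the statement, so I would leave it without proof (it follows, e.g., from the non-uniqueness of density matrices mentioned after \eqref{eq:normalstate}, or from explicit examples later in the paper).
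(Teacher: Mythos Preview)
Your proof is correct and follows essentially the same approach as the paper's own proof: positivity of $\pi_\Lambda$ together with $\varphi(P_\Lambda)=1$ for the state property, the density matrix furnished by Proposition~\ref{prop3.1} (extended by zero to $\mathcal{H}$) for normality, and surjectivity of $\pi_\Lambda:\mathcal{A}^{\mathrm{sa}}\to\mathcal{O}_\Lambda$ for injectivity. The only cosmetic difference is the order of the three steps.
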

\begin{proof}
Since $\pi_\Lambda $ in (\ref{eq:piN}) preserves positivity, and $\varphi^\sharp(1)=\varphi(\pi_\Lambda (1))=\varphi(P_\Lambda)=1$, clearly $\varphi^\sharp$ is a state of $\A$ (it is actually a state of $\A^{\mathrm{sa}}$, extended in a unique way to $\A$ by $\C$-linearity).

Let $\varphi,\psi\in\mc{S}(\OO_\Lambda)$ and suppose $\varphi^\sharp(a)=\psi^\sharp(a)$ for all $a\in\A^{\mathrm{sa}}$. Since $\OO_\Lambda=\pi_\Lambda (\A^{\mathrm{sa}})$, for any $b\in\OO_\Lambda$ there exists $a\in\A$ such that $b=\pi_\Lambda (a)$. One has $\varphi(b)=\varphi^\sharp(a)=\psi^\sharp(a)=\psi(b)$. Hence $\varphi=\psi$, and
the map $\varphi\mapsto\varphi^\sharp$ is injective.

Any $\varphi\in\mc{S}(\OO_\Lambda)$ is normal. Let $R$ be a density matrix for $\varphi$.  For any $a\in\A$ one has $\varphi^\sharp(a)=\tr(RP_\Lambda aP_\Lambda) = \tr(\rho a)$ with $\rho:= P_\Lambda RP_\Lambda$, meaning that $\rho$ is a density matrix for $\varphi^\sharp$, hence the latter is normal.
\end{proof}

The map $\sharp$ being injective allows to define two extended metrics on $\mc{S}(\OO_\Lambda)$:
\begin{equation}
d_{\A,D}^{\,\flat}(\varphi,\psi)
:= d_{\A,D}(\varphi^\sharp,\psi^\sharp) \;\quad\text{ and }\quad
d_{\A,D_\Lambda }^{\,\flat}(\varphi,\psi)
:= d_{\A,D_\Lambda }(\varphi^\sharp,\psi^\sharp).
\end{equation}
In the following proposition we discuss conditions for their equivalence.

\begin{prop}\label{propequiv}
If $L_\Lambda$ is a Lipschitz seminorm, then $\,d_{\A,D_\Lambda }^{\,\flat}\!$ and $d_{\OO_\Lambda,D_\Lambda}\!$ are strongly equivalent on $\mc{S}(\OO_\Lambda)$.
If in addition i) $L_D$ is Lipschitz, or ii) $P_\Lambda$ is in the commutant $\A'$ of $\A$, or iii) $[D,P_\Lambda]=0$, then $d_{\A,D}^{\,\flat}$ and $d_{\OO_\Lambda,D_\Lambda }$ are strongly equivalent.
\end{prop}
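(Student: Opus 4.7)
My plan is to transport all three distances to the finite-dimensional order unit space $\OO_\Lambda$, where strong equivalence of seminorms reduces to matching kernels plus a quantitative finite-dimensional bound. Via the injection $\sharp$ from Lemma~\ref{prop:3.6}, each distance on $\mc{S}(\OO_\Lambda)$ can be written as the supremum of $(\varphi-\psi)(b)$ over $b\in\OO_\Lambda$ subject to a seminorm constraint: $d_{\OO_\Lambda,D_\Lambda}$ uses $L_\Lambda(b)\leq 1$, while $d^\flat_{\A,D_\Lambda}$ and $d^\flat_{\A,D}$ use, respectively, the quotient seminorms $\tilde L_{D_\Lambda}(b)$ and $\tilde L_D(b)$, where
\[
\tilde L_\bullet(b):=\inf\{L_\bullet(a):a\in\A^{\mathrm{sa}},\,\pi_\Lambda(a)=b\}.
\]
Strong equivalence of distances on $\mc{S}(\OO_\Lambda)$ is thus equivalent to strong equivalence of these seminorms on the finite-dimensional $\OO_\Lambda$.

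\textbf{First statement and the easy half of the second.} The key commutator identity $[D_\Lambda,\pi_\Lambda(a)]=P_\Lambda[D_\Lambda,a]P_\Lambda$, immediate from $P_\Lambda D_\Lambda=D_\Lambda=D_\Lambda P_\Lambda$, gives $L_\Lambda(\pi_\Lambda(a))\leq L_\Lambda(a)$ and hence $\tilde L_\Lambda\geq L_\Lambda$. For the reverse bound I fix a linear section $\sigma:\OO_\Lambda\to\A^{\mathrm{sa}}$ of $\pi_\Lambda$ normalised so that $\sigma(P_\Lambda)=1$, and pick any vector-space complement $W$ of $\R P_\Lambda$ in $\OO_\Lambda$. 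Lipschitzness of $L_\Lambda$ turns $L_\Lambda|_W$ into a norm, so by equivalence of norms in finite dimension $\|w\|\leq C_1 L_\Lambda(w)$ on $W$; and on the finite-dimensional image $\sigma(W)\subset\A^{\mathrm{sa}}$ every seminorm is dominated by the operator norm, yielding $L_\Lambda(\sigma(w)),L_D(\sigma(w))\leq C_2\|w\|$. Since $\sigma(P_\Lambda)=1$ commutes with both $D$ and $D_\Lambda$, composing gives $\tilde L_\Lambda\leq CL_\Lambda$ and simultaneously $\tilde L_D\leq C'L_\Lambda$. This establishes the first statement and already produces one of the two inequalities needed for the second.

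\textbf{Cases (ii) and (iii).} What remains is the opposite bound $L_\Lambda\leq c\,\tilde L_D$, i.e.\ $L_\Lambda(\pi_\Lambda(a))\lesssim L_D(a)$ for every lift $a$ of $b$. Inserting $1=P_\Lambda+P_\Lambda^\perp$ and expanding yields
\[
P_\Lambda[D,a]P_\Lambda=[D_\Lambda,\pi_\Lambda(a)]+P_\Lambda D P_\Lambda^\perp a P_\Lambda-P_\Lambda a P_\Lambda^\perp D P_\Lambda.
\]
Under hypothesis (ii) the identity $[P_\Lambda,a]=0$ forces $P_\Lambda^\perp a P_\Lambda=0$; under (iii) the identity $[D,P_\Lambda]=0$ forces $P_\Lambda D P_\Lambda^\perp=0$. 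In either case the two off-diagonal correction terms vanish, so $[D_\Lambda,\pi_\Lambda(a)]=P_\Lambda[D,a]P_\Lambda$, whence $L_\Lambda(\pi_\Lambda(a))\leq\|[D,a]\|=L_D(a)$ with constant $c=1$, closing the argument.

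\textbf{Case (i): the main obstacle.} When only Lipschitzness of $L_D$ is assumed, the off-diagonal corrections above no longer cancel and a direct commutator estimate is unavailable. The strategy I would pursue is to exploit finite-dimensionality of $\OO_\Lambda$: two seminorms on a finite-dimensional vector space are automatically strongly equivalent once they have the same kernel. Here $\ker L_\Lambda=\R P_\Lambda$ by the Lipschitz hypothesis on $L_\Lambda$, and the inclusion $\R P_\Lambda\subseteq\ker\tilde L_D$ is immediate (take $a=r\cdot 1$). The technical heart of the case is the reverse inclusion $\ker\tilde L_D\subseteq\R P_\Lambda$: given an approximating sequence $a_n\in\A^{\mathrm{sa}}$ with $\pi_\Lambda(a_n)=b$ and $L_D(a_n)\to 0$, one must deduce $b\in\R P_\Lambda$. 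Lipschitzness of $L_D$ promotes $L_D$ to a genuine norm on $\A^{\mathrm{sa}}/\R\cdot 1$, so the problem is equivalent to continuity of the induced map $\pi_\Lambda:(\A^{\mathrm{sa}}/\R\cdot 1,L_D)\to(\OO_\Lambda/\R P_\Lambda,L_\Lambda)$, i.e.\ an ``oscillation-type'' estimate $L_\Lambda(\pi_\Lambda(a))\leq CL_D(a)$ valid on all of $\A^{\mathrm{sa}}$. Producing such an estimate purely from the Lipschitz hypothesis on $L_D$ is where I expect the most delicate work to lie.
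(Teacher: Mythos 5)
Your overall strategy --- transport all three distances to the finite-dimensional space $\OO_\Lambda$ (modulo multiples of the order unit $P_\Lambda$), realize them as duals of seminorms there, and invoke equivalence of norms in finite dimensions --- is exactly the paper's, which works with the traceless part $V_\Lambda$ of $\OO_\Lambda$ instead of the quotient $\OO_\Lambda/\R P_\Lambda$. Your proof of the first statement and of cases (ii) and (iii) is correct and complete: the inequality $L_\Lambda(\pi_\Lambda(a))\leq L_\Lambda(a)$ is the paper's \eqref{eq:16}, and your off-diagonal decomposition of $P_\Lambda[D,a]P_\Lambda$ is a cleaner version of the paper's identity \eqref{eq:67}; your explicit linear section $\sigma$ gives a more careful justification of the bounds $\tilde L_{D_\Lambda}\leq C L_\Lambda$ and $\tilde L_D\leq C'L_\Lambda$ than the paper's bare appeal to norm equivalence, and correctly shows that this half of the second statement needs none of the hypotheses (i)--(iii).

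The genuine gap is case (i), which you leave open. Note first that the quantitative oscillation estimate $L_\Lambda(\pi_\Lambda(a))\leq C L_D(a)$ on all of $\A^{\mathrm{sa}}$, which you identify as the target, is more than you need: since the quotient seminorm $\tilde L_D$ is finite-valued on the finite-dimensional space $\OO_\Lambda/\R P_\Lambda$, it suffices to prove it is positive-definite there, and then equivalence of norms in finite dimensions does the rest --- you in fact say this yourself, so the only missing step is the inclusion $\ker\tilde L_D\subseteq\R P_\Lambda$. The paper's argument for that inclusion is the kernel computation you sketch: $L'(b)=0$ with $\tr(b)=0$ forces $L_D(a)=0$ for a lift $a$ of $b$, hence $a\in\R 1$ by Lipschitzness of $L_D$, hence $b\in\R P_\Lambda\cap V_\Lambda=\{0\}$. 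You should at least have recorded this as the intended closing move. That said, your hesitation points at a real soft spot shared with the paper: the paper's displayed definition of $L'$ takes a supremum over lifts (under which the kernel computation is immediate but the identification of $\{L'\leq 1\}$ with the constraint set defining $d^{\,\flat}_{\A,D}$ fails), whereas with the correct infimum the implication ``$\tilde L_D(b)=0\Rightarrow$ some lift satisfies $L_D(a)=0$'' requires the infimum to be attained, or some compactness of the set of lifts in the $L_D$-topology, which neither you nor the paper supplies. So the proposal is incomplete on case (i), and the honest completion requires an argument beyond what the paper itself writes down.
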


\begin{proof}
Since $P_\Lambda$ has finite rank, $\OO_\Lambda\subset\mc{L}^1(\HH)$ and we can consider traceless operators
\begin{equation}
V_\Lambda=\big\{b\in\OO_\Lambda:\tr(b)=0 \big\}.
\end{equation}
By \eqref{eq:28} and \eqref{eq:piN} one has $[D_\Lambda ,P_\Lambda]=0$, thus adding a multiple of the order unit $P_\Lambda$ to $b$ does not change $L_\Lambda(b)$ nor $\varphi(b)-\varphi'(b)$. Therefore for any $\varphi,\varphi'\in\mc{S}(\OO_\Lambda)$:
\begin{subequations}
\begin{equation}\label{eq:3.8a}
d_{\OO_\Lambda,D_\Lambda }(\varphi,\psi)=\sup_{b\in V_\Lambda}\big\{\varphi(b)-\psi(b):L_\Lambda(b)\leq 1\big\} \;.
\end{equation}
Likewise, adding a multiple of $1$ to $a\in\A$ doesn't change $L_D(a)$,$L_\Lambda(a)$ nor \mbox{$\varphi^\sharp(a)-\psi^\sharp(a)$, so}
\begin{align}\label{eq:3.8b}
d_{\A,D}^{\,\flat}(\varphi,\psi)
&= \sup_{a\in\pi_\Lambda ^{-1}(V_\Lambda),\; L_D(a)\leq 1} \; \varphi^\sharp(a)-\psi^\sharp(a)
\; =\sup_{b\in V_\Lambda,\; L'(b)\leq 1}\;\varphi(b)-\psi(b),\\[5pt]
d_{\A,D_\Lambda }^{\,\flat}(\varphi,\psi)
&=\sup_{a\in\pi_\Lambda ^{-1}(V_\Lambda),\; L_\Lambda(a) \leq 1}\;
\varphi^\sharp(a)-\psi^\sharp(a)\; 
=\; \sup_{b\in V_\Lambda,\; L''(b)\leq 1} \; \varphi(b)-\psi(b), 
\label{eq:3.8c}
\end{align}
\end{subequations}
where $\pi_\Lambda ^{-1}(V_\Lambda)$ is the preimage of $V_\Lambda$ in $\A$ and we call
\begin{equation}
L'(b):=\sup_{a\in\pi_\Lambda ^{-1}(V_\Lambda),\, \pi_\Lambda (a)=b} L_D(a)
\;,\qquad
L''(b):=\sup_{a\in\pi_\Lambda ^{-1}(V_\Lambda),\,\pi_\Lambda (a)=b} L_\Lambda(a)
\;.
\end{equation}
The proposition is proved as soon as we show that $L_\Lambda, L'$ and $L''$ are 
norm on $V_\Lambda$: all norms on a finite-dimensional vector space are equivalent,
so that the distances \eqref{eq:3.8a}, \eqref{eq:3.8b} and \eqref{eq:3.8c}
dual to $L_\Lambda$, $L'$ and $L''$ are strongly equivalent.
$L_\Lambda$ being a norm follows from the Lipschitz hypothesis: $L_\Lambda(b)=0$ implies $b=\lambda P_\Lambda$, but
since $b$ is traceless, it must be $b=0$. Regarding $L'$ and $L''$, one easily checks
the triangle inequality, so they are seminorms. 
That $L''$ is actually a norm comes from $[D_\Lambda, P_\Lambda]=0$
that, for any $b =\pi_\Lambda(a)$, implies
\begin{equation}\label{eq:16}
L_\Lambda(a) \geq L_\Lambda(b) \;.
\end{equation}
Hence $L''(b) \geq L_\Lambda(b)$,
and $L''$ is 
a norm too on $V_\Lambda$.

For $L'$, in case $L_D$ is
Lipschitz one has 
\begin{equation}
L'(b)=0, \; b\in V_\Lambda  \Longrightarrow  L_D(a) = 0\quad
\forall a\in\pi_\Lambda^{-1}(V_\Lambda),
\label{eq:17}
\end{equation}
hence $a=\lambda 1$, so that $b=\lambda P_\Lambda =0$ because of the traceless condition. Thus $L'$ is a norm.
 Otherwise one notices that 
\begin{equation}
[D_\Lambda ,\pi_\Lambda(a)]=P_\Lambda\left([D,a]+\big[[P_\Lambda,a],[D,P_\Lambda]\big]\right)P_\Lambda,
\label{eq:67}
\end{equation}
so that $[P,D_\Lambda]=0$ or $P_\Lambda\in\A'$ implies 
\begin{equation}
L_D(a)  \geq L_\Lambda(b) \label{eq:13}
\end{equation}
for all $b=\pi_\Lambda (a)$.  Hence $L'(b)\geq
L_\Lambda(b)$ and $L'$ is a norm. 
\end{proof}

With some conditions on the projection  $P_\Lambda$, but no Lipschitz
condition on $L_D$ nor $L_\Lambda$, one gets that truncating the algebra yields
fewer states (the map $\sharp$ is not surjective on $\sa$) but larger distances.

\begin{prop}
\label{propineq}
For all $\varphi,\psi\in\mc{S}(\OO_\Lambda)$,
\begin{equation}\label{eq:twothree}
d_{\A,D_\Lambda }^\flat(\varphi, \psi)\leq d_{\OO_\Lambda,D_\Lambda }(\varphi,\psi),
\end{equation}
with equality if $P_\Lambda\in\A$. If $[D,P_\Lambda]=0$ or $P_\Lambda\in\A'$, then
\begin{equation}\label{eq:onetwo}
d_{\A,D}^\flat(\varphi, \psi)\leq d_{\OO_\Lambda, D_\Lambda }(\varphi,\psi) \;.
\end{equation}
If $[D,P_\Lambda]=0$ and $P_\Lambda\in\A$, then
\begin{equation}\label{eq:onetwothree}
d_{\OO_\Lambda, D_\Lambda }(\varphi,\psi)  \leq d_{\A,D}^\flat(\varphi, \psi) \;.
\end{equation}
Furthermore, if $P_\Lambda\in\A'$, one also has for all
$\varphi,\psi\in\mc{S}(\A)$:
\begin{equation}
\label{eq3:11}
d_{\A,D}(\varphi,\psi)\leq d_{\A,D_\Lambda }(\varphi,\psi).
\end{equation}
\end{prop}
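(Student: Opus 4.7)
The proof is a sequence of four inequalities, each obtained by comparing the two seminorms that define the distances on the two sides and then translating set inclusions of unit balls into the opposite inclusions on the dual suprema. The key technical ingredient is the compression formula
\begin{equation*}
[D_\Lambda ,\pi_\Lambda(a)]=P_\Lambda\bigl([D,a]+[[P_\Lambda,a],[D,P_\Lambda]]\bigr)P_\Lambda,
\end{equation*}
already recorded as \eqref{eq:67}, together with the trivial identity $[D_\Lambda,P_\Lambda]=0$. My plan is to handle the four statements separately, using the explicit rewritings of the distances established in Prop.~\ref{propequiv}.

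For \eqref{eq:twothree}, I start from $d_{\A,D_\Lambda}^{\,\flat}(\varphi,\psi)=\sup_{a\in\A^{\mathrm{sa}},\,L_\Lambda(a)\leq 1}\varphi(\pi_\Lambda(a))-\psi(\pi_\Lambda(a))$ and note, using $[D_\Lambda,P_\Lambda]=0$, that $L_\Lambda(\pi_\Lambda(a))=\|P_\Lambda[D_\Lambda,a]P_\Lambda\|\leq L_\Lambda(a)$; so the image $b=\pi_\Lambda(a)\in\OO_\Lambda$ satisfies $L_\Lambda(b)\leq 1$, and the sup is controlled by $d_{\OO_\Lambda,D_\Lambda}(\varphi,\psi)$. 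When $P_\Lambda\in\A$, every $b\in\OO_\Lambda$ is itself self-adjoint in $\A$ and is a fixed point of $\pi_\Lambda$, so it is admissible as its own lift, which turns the previous inequality into an equality. For \eqref{eq:onetwo}, the same scheme is applied with $L_D$ in place of $L_\Lambda$ on the left-hand side: the compression formula shows that when either $[D,P_\Lambda]=0$ or $P_\Lambda\in\A'$ the commutator $[[P_\Lambda,a],[D,P_\Lambda]]$ vanishes, hence $L_\Lambda(\pi_\Lambda(a))\leq L_D(a)$, and the argument proceeds as above.

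For \eqref{eq:onetwothree}, I exploit both hypotheses together. Assuming $P_\Lambda\in\A$, every $b\in\OO_\Lambda$ lies in $\A^{\mathrm{sa}}$ and equals $\pi_\Lambda(b)$. Assuming $[D,P_\Lambda]=0$, a direct computation using $P_\Lambda bP_\Lambda=b$ yields $[D,b]=P_\Lambda[D,b]P_\Lambda=[D_\Lambda,b]$, so $L_D(b)=L_\Lambda(b)$ on $\OO_\Lambda$. Consequently the supremum defining $d_{\OO_\Lambda,D_\Lambda}$ is the supremum over $\{b\in\OO_\Lambda:L_D(b)\leq 1\}\subset\{a\in\A^{\mathrm{sa}}:L_D(a)\leq 1\}$ of $\varphi(\pi_\Lambda(a))-\psi(\pi_\Lambda(a))$, which is dominated by $d_{\A,D}^{\,\flat}(\varphi,\psi)$. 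Finally, for \eqref{eq3:11}, if $P_\Lambda\in\A'$ then $[D_\Lambda,a]=P_\Lambda[D,a]P_\Lambda$ for every $a\in\A$, so $L_\Lambda(a)\leq L_D(a)$, and the inclusion $\{L_D\leq 1\}\subset\{L_\Lambda\leq 1\}$ gives the claimed inequality of the associated distances (valid for arbitrary states in $\sa$, not merely for those in the image of $\sharp$).

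I do not expect a real obstacle here: all four statements follow by the same passage between primal and dual suprema, and the case distinctions are exactly those needed to make the relevant correction terms in \eqref{eq:67} vanish. The slightly delicate point is the equality case in \eqref{eq:twothree}, where one must check that when $P_\Lambda\in\A$ the reverse inequality does not require any control on arbitrary lifts of $b\in\OO_\Lambda$; this is handled cleanly by using $b$ itself as the lift of $b$, observing that $\pi_\Lambda(b)=P_\Lambda^2 a P_\Lambda^2=b$.
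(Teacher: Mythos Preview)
Your proof is correct and follows essentially the same approach as the paper's own proof: both arguments reduce each inequality to a comparison of seminorms via the compression identity \eqref{eq:67} and the fact that $[D_\Lambda,P_\Lambda]=0$, then pass to the dual suprema. Your treatment of \eqref{eq:onetwothree} (showing $[D,b]=[D_\Lambda,b]$ on $\OO_\Lambda$) and of the equality case in \eqref{eq:twothree} (using $b$ as its own lift) is slightly more explicit than the paper's, but the logic is identical; the reference in your plan to the rewritings from Prop.~\ref{propequiv} is unnecessary, since you never actually use the traceless reduction.
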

\begin{proof}
Eq.~\eqref{eq:twothree} is the dual of \eqref{eq:16}:
\begin{align}
d^\flat_{\A,D_\Lambda }(\varphi,\psi) &=\sup_{a\in\A^{\mathrm{sa}}}\big\{ \varphi^\sharp(a)-\psi^\sharp(a): L_\Lambda(a)\leq 1\big\} 
=\sup_{a\in\A^{\mathrm{sa}}}\big\{ \varphi(b)-\psi(b): L_\Lambda(a)\leq 1\big\} \notag\\
&\leq\sup_{b\in\OO_\Lambda}\big\{ \varphi(b)-\psi(b): L_\Lambda(b)\leq 1\big\}
=d_{\OO_\Lambda,D_\Lambda }(\varphi,\psi).
\end{align}
If $P_\Lambda\in\A$, then $\OO_\Lambda\subset\A$ and 
$d_{\A,D_\Lambda }(\varphi^\sharp,\psi^\sharp)\geq
d_{\OO_\Lambda,D_\Lambda }(\varphi,\psi)$, showing the previous inequality is an
equality. 
Eq.~\eqref{eq:onetwo} is the dual of
\eqref{eq:13}. Eq.~\eqref{eq3:11} is the dual of 
\begin{equation}
L_D(a)\geq L_\Lambda(a)\label{eq:68}
\end{equation}
which follows from \eqref{eq:13} and the observation that 
$P_\Lambda\in\A'$ implies $L_\Lambda(a) = L_\Lambda(b)$.

Assume $[D, P_\Lambda]= 0$ and $P_\Lambda\in\A$. Then $L_\Lambda(b) = L_D(b)$ for any
$b\in\OO_\Lambda$ so that
\begin{subequations}
\begin{align}\label{eq:10}
  d_{\OO_\Lambda, D_\lambda}(\varphi, \psi) &=
  \sup_{b\in\OO_\Lambda}\left\{ \varphi(b) - \psi(b) : L_D(b) =
    1\right\}\\
&\leq \sup_{a\in\A}\left\{ \varphi^\sharp(a) - \psi^\sharp(a) : L_D(a) =
    1\right\} =d^\flat_{\A, D}(\varphi, \psi),
\end{align}
\end{subequations}
where we identify $\varphi, \psi$ (defined on the subalgebra
$\pi_\Lambda(\A)\subset \A$) to their extension $\varphi^\sharp,
\psi^\sharp$.
\end{proof}

\noindent
Note that unlike proposition \ref{propequiv}, proposition \ref{propineq} does not require $P_\Lambda$ to be finite rank, not even $D_\Lambda$ to be bounded.

\begin{rem}
When $P_\Lambda$ is a central projection commuting with $D$, then
\eqref{eq:onetwo} and \eqref{eq:onetwothree} combine to give
$d^\flat_{\A, D} = d_{\OO_\Lambda, D_\Lambda}$. This is Lemma 1 of
\cite{Martinetti:2002ij}, that allows to compute the distance in
the spectral triple of the Standard Model,
by reducing $\C\oplus \mathbb{H} \oplus M_3(\C)$ to $\R \oplus \R$.
\end{rem}

\section{Convergence of truncations}

In this section we discuss how elements of $\sa$ can be approximated by sequences of
elements in $\mc{S}(\OO_N)$, as well as the convergence of $(\mc{S}(\OO_N),d^\flat_{\A,D})$
to $(\sa,d_{\A,D})$ in the Gromov-Hausdorff sense.
In \S\ref{sec:3.3} we focus on compact quantum metric spaces, where the equivalence
of the weak* and metric topologies is crucial.
In \S\ref{sec:4.2} we no longer assume this equivalence, and
extend some of the previous results. 
As shown in \cite{CDMW09}, an example where the two topologies are not equivalent is the Moyal plane.
In \S\ref{sectionlattice} we give another example, the lattice $\Z$:
we prove that, while in the weak$^*$ topology every state can be approximated by truncated states,
in the metric topology this holds true only for states with
finite moment of order $1$ (cf.~Prop.~\ref{conlattice}).

In all this section, $\left\{P_N\right\}_{N\in\N}$ is a sequence of
increasing finite-rank projections\begin{equation}
  \label{eq:90}
  P_N \leq P_{N+1}, \quad \forall N\in\N.
\end{equation}
 We make the
extra-assumption that they converge to $1$ in the weak operator topology, that is
\begin{equation}
\lim_{N\to\infty}(v,P_N w)=(v,w) \quad\text{ for all }v,w\in\HH.
\label{eq:14}
\end{equation}

\subsection[Compact quantum metric spaces \& Gromov-Hausdorff convergence]{Compact quantum metric spaces \& Gromov-Hausdorff convergence}\label{sec:3.3}

Let us begin with a technical lemma.
\begin{lemma}\label{wopconv}
For any $a\in{\mathcal B}(\HH)$, the sequence $\{\pi_N(a)\}_{N\in\N}$ weakly
converges to $a$.
\end{lemma}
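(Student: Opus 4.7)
The plan is to first upgrade the weak operator convergence $P_N\to 1$ to strong operator convergence, and then deduce the claim by a straightforward ``triangle inequality'' argument on inner products. The key observation is that for an orthogonal projection $P$ on $\HH$ one has the identity
\begin{equation*}
\|(1-P)v\|^2 = \|v\|^2 - 2(v,Pv) + (Pv,Pv) = \|v\|^2 - (v,Pv),
\end{equation*}
using $P^2=P=P^*$. Applied to $P=P_N$, the hypothesis \eqref{eq:14} in the diagonal case $v=w$ gives $(v,P_Nv)\to(v,v)=\|v\|^2$, hence $\|(1-P_N)v\|\to 0$ for every $v\in\HH$. Thus $P_N\to 1$ strongly.

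Once SOT convergence of $P_N$ is in hand, for any fixed $a\in\B(\HH)$ and $v,w\in\HH$ I would write
\begin{equation*}
(v,P_NaP_Nw)-(v,aw) = (P_Nv-v,\,aP_Nw) + (v,\,a(P_Nw-w)),
\end{equation*}
which, by Cauchy--Schwarz and boundedness of $a$, is majorized in absolute value by
\begin{equation*}
\|P_Nv-v\|\cdot\|a\|\cdot\|w\| + \|v\|\cdot\|a\|\cdot\|P_Nw-w\|,
\end{equation*}
and both summands tend to $0$ by the previous step. This gives the desired weak convergence $\pi_N(a)\to a$. In fact the same estimate applied to $(v,w)=(P_NaP_Nw-aw,\,P_NaP_Nw-aw)$-type quantities shows that $\pi_N(a)\to a$ strongly; but the lemma only requires the weak statement, which is enough for its intended use.

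I do not anticipate any real obstacle: the result is essentially the standard fact that sandwiching by an increasing sequence of projections converging to $1$ in WOT yields pointwise convergence. The only mildly subtle point is the first paragraph, namely the automatic upgrade from WOT to SOT convergence for a net of projections with projection limit, which uses the identity $P_N^2=P_N$ in an essential way (and would fail for a general WOT-convergent sequence of bounded operators).
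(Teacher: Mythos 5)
Your proof is correct. The decomposition $P_NaP_N-a=(P_N-1)aP_N+a(P_N-1)$, hence $\inner{v,(P_NaP_N-a)w}=\inner{(P_N-1)v,aP_Nw}+\inner{v,a(P_N-1)w}$, is exactly the one used in the paper. Where you differ is in how the two terms are estimated: the paper tries to control both directly from the weak convergence \eqref{eq:14} together with $\|P_N\|\leq 1$, claiming the bound $2\|a\|\,|\inner{v,(P_N-1)w}|$, whereas you first upgrade the hypothesis to strong operator convergence via the projection identity $\|(1-P_N)v\|^2=\|v\|^2-\inner{v,P_Nv}\to 0$, and then apply Cauchy--Schwarz to each term. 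Your route is actually the more robust one: the term $\inner{(P_N-1)v,aP_Nw}$ involves the $N$-dependent vector $aP_Nw$, so bounding it genuinely requires $\|(P_N-1)v\|\to 0$ (i.e.\ strong convergence) rather than a single weak-convergence statement --- the paper's displayed inequality is not literally valid since one cannot replace $\inner{a^*v,(P_N-1)w}$ by $\|a\|\,\inner{v,(P_N-1)w}$ for noncommuting $a$ and $P_N$. Your observation that the upgrade from WOT to SOT uses $P_N^2=P_N=P_N^*$ in an essential way is exactly the right point to emphasize; the monotonicity \eqref{eq:90} is not even needed. The closing remark that the same estimates give strong convergence of $\pi_N(a)$ to $a$ is also correct, though not needed for the applications in the paper.
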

\begin{proof}
For any $v,w\in\HH$ one has
\begin{equation}
\inner{v,(P_N aP_N-a)w}=
\inner{v,a(P_N-1)w}
+\inner{(P_N-1)v,aP_N w}.
\end{equation}
Since $\|P_N\|\leq 1$,
\begin{equation}
\left|\inner{v,(P_N a P_N-a)w}\right|\leq 2\|a\|\cdot\left|\inner{v,(P_N-1)w}\right| \;.
\end{equation}
From \eqref{eq:14} it follows $\lim_{N\to\infty}\inner{v,(P_N-1)w}=0$, that concludes the proof.
\end{proof}

For any projection $P_N$, we call ``truncated states'' the image in
  ${\cal  S}(\A)$  of the map $\sharp$ defined in Lemma
  \ref{prop:3.6}.
Any normal state of a unital subalgebra $\A\subset\B(\HH)$ 
can be weakly approximated by a sequence of truncated states.

\begin{prop}\label{prop:3.12}
For any $\varphi\in\mc{N}(\A)$ there is a
sequence of states $\{\varphi_N\}_{N\in\N}$ such that:\\
i) $\varphi_N\in\mc{S}(\OO_N)$ for all $N\geq 0$;
ii) $\varphi_N^{\,\sharp}\to\varphi$ in the weak$^*$ topology. 
\end{prop}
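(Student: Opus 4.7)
The plan is to build the approximating states directly from the density matrix of $\varphi$ and then invoke Lemma \ref{wopconv} for the convergence. Since $\varphi\in\mc{N}(\A)$, pick a density matrix $R$ with $\varphi(a)=\tr(Ra)$. Set
\begin{equation*}
t_N := \tr(RP_N) = \tr(P_NRP_N),
\end{equation*}
and, for $N$ large enough that $t_N>0$, define
\begin{equation*}
\varphi_N(b) := t_N^{-1}\,\tr\bigl(P_NRP_N\, b\bigr), \qquad b\in\OO_N.
\end{equation*}
Then $t_N^{-1}P_NRP_N$ is a density matrix supported on $\HH_N$, so $\varphi_N$ is positive; it also sends the order unit $P_N$ to $1$, hence is a state of $\OO_N$. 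For the finitely many initial indices where $t_N$ may vanish, take $\varphi_N$ to be any state of $\OO_N$; this does not affect the asymptotics.

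For item (ii), using $P_N^2=P_N$ and the cyclicity of the trace,
\begin{equation*}
\varphi_N^\sharp(a)=\varphi_N\bigl(\pi_N(a)\bigr)=t_N^{-1}\,\tr\bigl(P_NRP_N\cdot P_NaP_N\bigr)=t_N^{-1}\,\tr\bigl(R\,\pi_N(a)\bigr).
\end{equation*}
By Lemma \ref{wopconv}, $\pi_N(a)$ converges to $a$ in the weak operator topology, with $\|\pi_N(a)\|\leq\|a\|$. Since $R$ is trace-class, the linear functional $b\mapsto\tr(Rb)$ on $\B(\HH)$ is weak-operator continuous on norm-bounded subsets: indeed $\B(\HH)\simeq\LL^1(\HH)^*$ via the trace pairing, and on bounded sets the weak$^*$ topology coincides with the WOT. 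Hence $\tr\bigl(R\,\pi_N(a)\bigr)\to\tr(Ra)=\varphi(a)$ for every $a\in\A$. Specialising to $a=1$ yields $t_N\to 1$, so the normalisation is harmless; combining,
\begin{equation*}
\varphi_N^\sharp(a)=\frac{\tr\bigl(R\,\pi_N(a)\bigr)}{t_N}\longrightarrow\varphi(a),\qquad\forall\,a\in\A,
\end{equation*}
which is precisely weak$^*$ convergence.

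The only non-trivial input is the weak-operator continuity of $\tr(R\,\cdot)$ on bounded sets, a standard consequence of $\B(\HH)=\LL^1(\HH)^*$. Everything else is bookkeeping: constructing the obvious state from the restricted density matrix $P_NRP_N$ and checking that the normalisation $t_N$ tends to $1$. I do not expect any hidden obstacle.
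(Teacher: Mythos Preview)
Your proof is correct and follows essentially the same route as the paper's: both construct $\varphi_N$ from the truncated density matrix $t_N^{-1}P_NRP_N$, invoke Lemma~\ref{wopconv} for $\pi_N(a)\to a$ in the weak operator topology, and use that $\tr(R\,\cdot\,)$ is WOT-continuous on norm-bounded sets to conclude. The only cosmetic differences are that the paper cites \cite[Thm.~7.1.12]{Kadison1983} for this continuity (where you use the duality $\B(\HH)=\LL^1(\HH)^*$ and the coincidence of the $\sigma$-weak and weak operator topologies on bounded sets), and obtains $t_N\to 1$ directly from $P_N\to 1$ rather than by specializing to $a=1$.
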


\begin{proof}
Let us choose a density matrix $R$ for $\varphi$ and define  
\begin{equation}
 Z_N:=\tr(RP_N)=\tr(\pi_N (R)) \;.
\end{equation}
Any normal states is continuous on the unit ball of $\B(\HH)$ for the weak operator topology
 \cite[Thm.~7.1.12]{Kadison1983}. 
Since $\|\pi_N (R)\|\leq 1$, Lemma \ref{wopconv} yields
\begin{equation}\label{eq:ZN}
\lim_{N\to\infty} Z_N=1 \;.
\end{equation}
Let $N_\varphi$ denote the smallest integer such that $Z_{N_\varphi}\neq 0$.
Since $\pi_{N_\varphi}\circ\pi_N =\pi_{N_\varphi}$ for all $N\geq N_\varphi$,
we have $Z_N\neq 0$ for all $N\geq N_\varphi$. Thus
\begin{equation}
\varphi_N(b):=Z_N^{-1}\,\tr(Rb)
\label{eq:32}
\end{equation}
is a well-defined state of $\OO_N$ for all $N\geq N_\varphi$. 
For $N<N_\varphi$, choosing arbitrary states
$\varphi_N\in\mc{S}(\OO_N)$ does not modify the limit, and
we do not loose generality assuming $N_\varphi=0$.

We now prove the weak limit.
Due to the linearity of states, it is enough to show it for $\|a\|\leq 1$.
Note that in this case $\|\pi_N(a)\|\leq 1$ too.
By Lemma \ref{wopconv}, $\pi_N(a)\to a$ weakly.
Again by \cite[Thm.~7.1.12]{Kadison1983}, $\tr(R\pi_N(a))\to\tr(Ra)=\varphi(a)$.
Hence by \eqref{eq:ZN}:
\begin{equation*}
\varphi_N^{\,\sharp}(a)=Z_N^{-1}\,\tr(R\pi_\Lambda (a)) \to\varphi(a) \,.\vspace{-25pt}
\end{equation*}
\end{proof}

When $(\A^{\mathrm{sa}},L_D)$ is a compact quantum metric
  space, Prop.~\ref{prop:3.12} shows that any normal state is the
  limit of truncated state in the metric topology induced by $d_{\A,
    D}$. In this case one also has convergence of metric spaces in the Gromov-Hausdorff sense.

\begin{prop}\label{prop:4.3}
Let $(\A^{\mathrm{sa}},L_D)$ be a compact quantum metric space and $\,\overline{\mc{N}(\A)}$
the weak closure of $\,\mc{N}(\A)$. Then $(\mc{S}(\OO_N),d^\flat_{\A,D})$
converges to $(\,\overline{\mc{N}(\A)},d_{\A,D})$ for the Gromov-Hausdorff distance.
\end{prop}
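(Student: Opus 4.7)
The plan is to exploit the injective $\sharp$ map from Lemma~\ref{prop:3.6} as an isometric embedding of $(\mc{S}(\OO_N),d^\flat_{\A,D})$ into $(\mc{S}(\A),d_{\A,D})$, which is isometric by the very definition of $d^\flat_{\A,D}$. Since $\overline{\mc{N}(\A)}$ is likewise a subset of $(\mc{S}(\A),d_{\A,D})$, the Gromov--Hausdorff distance is bounded above by the Hausdorff distance of these two subsets in the ambient metric space, so the proof reduces to showing
\[
d_H\bigl(\sharp(\mc{S}(\OO_N)),\,\overline{\mc{N}(\A)}\bigr)\xrightarrow[N\to\infty]{}0.
\]

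One of the two directions in the Hausdorff distance is immediate: Lemma~\ref{prop:3.6} tells us that every $\varphi^\sharp$ is normal, hence $\sharp(\mc{S}(\OO_N))\subset\mc{N}(\A)\subset\overline{\mc{N}(\A)}$, and therefore $d(\sharp(\mc{S}(\OO_N)),\overline{\mc{N}(\A)})=0$ identically in $N$. The real content is to show
\[
\sup_{\psi\in\overline{\mc{N}(\A)}}\,\inf_{\varphi\in\mc{S}(\OO_N)} d_{\A,D}(\psi,\varphi^\sharp)\xrightarrow[N\to\infty]{}0.
\]

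For pointwise approximation I would invoke Prop.~\ref{prop:3.12}, which produces, for every $\psi\in\mc{N}(\A)$, a sequence $\varphi_N\in\mc{S}(\OO_N)$ with $\varphi_N^{\,\sharp}\to\psi$ in the weak$^*$ topology; by the compact quantum metric space hypothesis, weak$^*$ convergence on $\mc{S}(\A)$ coincides with convergence in $d_{\A,D}$, so this is already metric convergence. A crucial technical ingredient is that the images $\sharp(\mc{S}(\OO_N))$ are nested: since $P_N\leq P_{N+1}$ gives $P_NP_{N+1}=P_N$ and hence $\pi_N\circ\pi_{N+1}=\pi_N$, for each $\varphi\in\mc{S}(\OO_N)$ the prescription $\tilde\varphi(b):=\varphi(P_NbP_N)$ defines a state of $\OO_{N+1}$ with $\tilde\varphi^\sharp=\varphi^\sharp$. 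Density of $\mc{N}(\A)$ in $\overline{\mc{N}(\A)}$ (together with $d_{\A,D}$ inducing the weak$^*$ topology) then extends this pointwise approximation to arbitrary $\psi\in\overline{\mc{N}(\A)}$.

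The main obstacle is turning pointwise approximation into the uniform estimate displayed above. Here compactness does the work: since $(\A^{\mathrm{sa}},L_D)$ is a compact quantum metric space, $\mc{S}(\A)$ is compact for $d_{\A,D}$, so the weak$^*$-closed subset $\overline{\mc{N}(\A)}$ is totally bounded in $d_{\A,D}$. Given $\varepsilon>0$, cover $\overline{\mc{N}(\A)}$ by finitely many $d_{\A,D}$-balls of radius $\varepsilon/2$ around $\psi_1,\ldots,\psi_k\in\overline{\mc{N}(\A)}$. For each $\psi_i$ the previous step yields $N_i$ and $\varphi_i\in\mc{S}(\OO_{N_i})$ with $d_{\A,D}(\psi_i,\varphi_i^{\,\sharp})<\varepsilon/2$; by the nestedness observation, the same approximating state $\varphi_i^{\,\sharp}$ lies in $\sharp(\mc{S}(\OO_N))$ for every $N\geq\max_i N_i$. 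A triangle inequality then gives $\sup_{\psi\in\overline{\mc{N}(\A)}}d_{\A,D}(\psi,\sharp(\mc{S}(\OO_N)))<\varepsilon$ for all such $N$, proving Hausdorff (and hence Gromov--Hausdorff) convergence.
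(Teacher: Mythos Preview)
Your proof is correct and follows essentially the same route as the paper: reduce Gromov--Hausdorff to Hausdorff convergence via the isometric embedding $\sharp$, use the compact quantum metric space hypothesis to equate weak$^*$ and metric topologies, invoke Prop.~\ref{prop:3.12} for pointwise approximation, and use the nesting $\sharp(\mc{S}(\OO_N))\subset\sharp(\mc{S}(\OO_{N+1}))$. The only difference is packaging: the paper cites the general fact from \cite[pag.~253]{BBI01} that for an increasing sequence of compact subsets of a compact metric space the Hausdorff limit is the closure of the union, whereas you unpack this via an explicit $\varepsilon$-net argument using total boundedness---which is precisely how that general fact is proved.
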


\begin{proof}
Since the map $\sharp:\mc{S}(\OO_N)\to\mc{N}(\A)$ in Prop.~\ref{prop:3.6} is an isometric embedding,
it is enough to prove that the subspaces $X_N:=\sharp\big(\mc{S}(\OO_N)\big)$ of $M:=\sa$ converge
to $\overline{\mc{N}(\A)}$ in the Hausdorff sense.

Since $(\A^{\mathrm{sa}},L_D)$ is a compact quantum metric space, the metric topology coincides with
the weak$^*$ topology on $\sa$. Hence $M$ is compact and $X_N$ are compact subspaces $\forall\;N$.

For a sequence of compact subspaces $\{X_N\}$ of a compact metric space $M$, such that
$X_N\subset X_{N+1}$ for all $N$, the Hausdorff limit $X$ is the closure
of the union $\bigcup_NX_N$ \cite[pag.~253]{BBI01}. 
Since $X_N\subset\mc{N}(\A)$, then $X\subset\overline{\mc{N}(\A)}$.
On the other hand, from Prop.~\ref{prop:3.12} it follows that $\mc{N}(\A)\subset X$. Hence
$\overline{\mc{N}(\A)}=X$.
\end{proof}

\smallskip

One may wonder what the closure of $\mc{N}(\A)$ is.
If the $C^*$-completion $A$ of $\A$ is a von Neumann algebra, then
$\mc{N}(\A)$ is already closed \cite[Lemma 1]{DellAntonio:1967fk}. As
well, if $A=\K$ then every state is normal and $\mc{N}(\A)$ is closed.

Another important class of examples is given by $\A=C^\infty(M)$ and $\HH=L^2(M)$, with $M$ a compact oriented
Riemannian manifold. In this case $\mc{N}(\A)$ is a proper subset of $\sa$, since for example pure states are not normal.
Nevertheless, it is easy to prove that $\overline{\mc{N}(\A)}=\sa$.
Indeed, let $\psi_{\epsilon,x}$ be the (normalized) characteristic function of the ball with radius $\epsilon$ centered at $x\in M$.
Since $f\in A$ is continuous, $\inner{\psi_{\epsilon,x},f\psi_{\epsilon,x}}\to f(x)$ for $\epsilon\to 0$, and the pure state $\delta_x$
is the weak$^*$ limit of normal states. Hence, $\overline{\mc{N}(\A)}$ contains all finite convex combinations of pure states.
By Krein-Milman theorem \cite{AE80}, every compact convex set (in a locally convex space) is the closure of the convex hull of its extreme points.
Thus, $\sa$ is the closure of finite convex combinations of pure states, and this means $\sa\subset\overline{\mc{N}(\A)}$ (the opposite
inclusion is obvious). The same holds if $\HH=L^2(M,E)$ with $E$ a vector bundle, since $E$ is locally trivial and for $\epsilon$ small
enough we can define a family of sections playing the role of $\psi_{\epsilon,x}$.
Hence,

\begin{cor}
For $\A=C^\infty(M)$, $\HH=L^2(M,E)$ as above, and $D$ a Dirac-type operator,
$(\mc{S}(\OO_N),d^\flat_{\A,D})$ converge to $(\sa,d_{\A,D})$ for the Gromov-Hausdorff distance.
\end{cor}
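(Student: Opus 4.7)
The corollary is essentially a direct application of Proposition \ref{prop:4.3} combined with the density discussion given just before the statement, so my plan is to assemble these pieces while verifying the one hypothesis of Proposition \ref{prop:4.3} that is not already stated, namely that $(\A^{\mathrm{sa}},L_D)$ is a compact quantum metric space.

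First I would check the compact quantum metric space hypothesis. For a compact oriented Riemannian manifold $M$ with a Dirac-type operator $D$ acting on sections of a Hermitian vector bundle $E$, the seminorm $L_D(f)=\|[D,f]\|$ (for $f\in C^\infty(M)$ acting by multiplication on $L^2(M,E)$) reduces to the supremum norm of the gradient, $L_D(f)=\|\,\mathrm{grad}\,f\,\|_\infty$. This is a Lipschitz seminorm: $L_D(f)=0$ forces $f$ to be locally constant, hence constant on the connected manifold $M$. Moreover $d_{\A,D}$ coincides with the Wasserstein distance of order $1$ associated to the geodesic metric (as recalled in \S\ref{section2.1}), and by Kantorovich–Rubinstein duality this distance metrises the weak$^*$ topology on probability measures over the compact space $M$; since states of $C(M)$ are precisely Radon probability measures, $d_{\A,D}$ metrises the weak$^*$ topology on $\sa$. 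Thus $(\A^{\mathrm{sa}},L_D)$ is a compact quantum metric space in Rieffel's sense.

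Second, I would invoke Proposition \ref{prop:4.3} directly: its conclusion gives Gromov–Hausdorff convergence of $(\mc{S}(\OO_N),d^\flat_{\A,D})$ to $(\,\overline{\mc{N}(\A)},d_{\A,D})$.

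Third, I would identify the target space. The paragraph immediately preceding the corollary establishes, via the characteristic-function construction $\psi_{\epsilon,x}$ and the Krein–Milman theorem, that $\overline{\mc{N}(\A)}=\sa$ for $\A=C^\infty(M)$ and $\HH=L^2(M,E)$ with $M$ compact oriented Riemannian and $E$ a (locally trivial) vector bundle. Substituting this identification into the conclusion of Proposition \ref{prop:4.3} yields exactly the statement of the corollary. The only step requiring any care is the compact-quantum-metric-space verification, which is standard but should be mentioned explicitly since Proposition \ref{prop:4.3} takes it as a hypothesis; everything else reduces to citing the two previous results.
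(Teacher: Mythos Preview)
Your proposal is correct and follows exactly the paper's (implicit) approach: the paper gives no separate proof of the corollary, simply writing ``Hence,'' after establishing $\overline{\mc{N}(\A)}=\sa$ in the preceding paragraph, so the intended argument is precisely to combine that identification with Proposition~\ref{prop:4.3}. Your explicit verification that $(\A^{\mathrm{sa}},L_D)$ is a compact quantum metric space is a welcome addition that the paper leaves tacit; note only that this step uses connectedness of $M$, an assumption implicit but not stated in the corollary.
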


\subsection[Beyond compact quantum metric spaces: states with finite
$1^\text{st}$ moment]{Beyond compact quantum metric spaces: states with finite
moment of order $1$}\label{sec:4.2}

Let us now consider a spectral triple $(\A, \HH, D)$ such that
$(\A^\text{sa}, L_D)$ is not a compact quantum metric space, and $\{P_N\}_{N\in\N}$ an increasing sequence of
finite rank projections convergent to $1$ in the weak operator
topology. We can always assume that $\A$ is
unital (replacing it by its unitization if needed, as explained in remark~\ref{footunitization}).
The order unit spaces $\OO_N$ are  well defined, and it makes
sense to study the convergence
of the sequence $(\mc{S}(\OO_N),d_{\A,D}^\flat)$.

\smallskip

In the commutative case $\A=C^\infty_0(M)$, $M$ a non-compact Riemannian manifold,  an important class of states regarding the
topology induced by the Wasserstein distance are those with finite
moment of order $1$ (on a connected manifold, the distance between any two such states is
finite). Any $\varphi\in{\cal S}(C_0^\infty(M))$ 
is given by a unique probability measure $\mu$ on $\mc{P}(C_0^\infty(M))\simeq M$, 
\begin{equation}
  \label{eq:9}
  \varphi(f) = \int_M f(x)\;  \de\mu_x \quad \forall f\in C_0(M)
\end{equation}
and
its moment of order $1$ with respect to $x'\in M$ is defined as
\begin{equation}\label{eq:20}
  {\mathcal M}_1(\varphi, x') := \int_M d_{\text{geo}}(x, x') \,\de\mu(x) \;.
\end{equation}
For $M$ connected, the finiteness of ${\mathcal M}_1(\varphi, x')$ does not depend on
the choice of $x'$: either it is finite for all $x'$ or infinite for all $x'$.

In the noncommutative case, 
a similar notion can be defined for states
$\varphi$ that are given by probability measures $\mu$ on the pure
state space, that is such that
  \begin{equation}
\varphi(a) =\int_{\pa} \omega(a) \,\de\mu_\omega \quad \forall a\in\A.
\label{eq:3}
\end{equation}
 This is
not always the case, but there is a large classe of noncommutative
algebras $\A$  for which this does happen (e.g.~unital separable $C^*$-algebras). 
We then say that $\mu$ has a \emph{finite moment of order $1$ with respect to the pure state $\omega'$} if the
expectation of the spectral distance from $\omega'$, viewed as a function on $\pa$, namely
\begin{equation}
  \label{eq:4}
  \mc{M}_1(\mu, \omega') := \int_{\pa} d_{\A, D}(\omega, \omega') \, \de\mu_\omega \;,
\end{equation}
is finite. Notice that unlike the commutative case,
for noncommutative $\A$ there may be different measures $\mu$ on
$\pa$ giving the same state $\varphi$:
the quantity $\mc{M}_1(\mu, \omega')$ (in particular its finiteness) may depend to the choice of
$\mu$, as illustrated in example \ref{exemplematrice} below.

For a normal state $\varphi$, we use the following alternative definition.
Any density matrix $R$ for $\varphi$ is a positive compact operator,
hence it is diagonalizable. Let $\mathfrak{B}=\{\psi_n\}_{n\in\N}$ be an orthonormal basis of
$\HH$ made of eigenvectors of $R$, with eigenvalues $p_n\in\R^+$.
Denote $\Psi_n(a):=\inner{\psi_n,a\psi_n}$  the corresponding vector
states in ${\cal S}(\A)$. Then one has
\begin{equation}\label{eq:6}
\varphi(a) = \sum_{n\geq 0} p_n \, \Psi_n(a) \quad \forall a\in\A .
\end{equation}
\begin{df}
\label{defmoment}
  Given an arbitrary spectral triple $(\A, \HH, D)$ and a density matrix $R$,
  we call moment of order $1$ of $R$ with respect to an eigenbasis $\mathfrak{B}$
  and to a state $\Psi_n$ (induced by a vector $\psi_n\in \mathfrak{B}$)
   the moment of the distribution $\{p_n\}_{n\in\N}$, viewed as a
   discrete probability measure on the lattice, the latter being 
   equiped with cost function $d_{\A,D}$. Explicitly, we define
\begin{equation}
    \label{eq:7}
    {\mathcal M}_1(R,\mathfrak{B},\Psi_n):= \sum_{k\geq 0} p_k \, d_{\A, D}(\Psi_k, \Psi_n).
  \end{equation}
\end{df}

\noindent
It is not difficult to prove that, once fixed $R$ and an eigenbasis $\mathfrak{B}$,
the finiteness of ${\mathcal M}_1(R,\mathfrak{B}, \Psi_n)$ does not depend on the
choice of the vector state $\Psi_n$. 

We stress that definition \ref{defmoment} does not necessarily
coincides with \eqref{eq:4}, because the vector states $\Psi_k$ are not
necessarily pure: they are pure if e.g.~$\A=\K(\HH)$, but they are not
for $\varphi$ a normal state of $\A=C^\infty_0(M)$.
\smallskip

\begin{ex}\label{exemplematrice}
Let $\A=M_2(\C)$. Any pure state $\Psi$ is a vector state, that is
$\Psi(a)=\inner{\psi, a\psi}$ for any $a\in M_2(\C)$, where $\psi$ is a unit vector in $\C^2$ and the inner
product is the usual one. Any two vectors $\psi$ equal up to
a phase determine the same pure state, so that ${\mathcal P}(M_2(\C)$
is the projective space $\C P^1$. The latter is in $1$-to-$1$
correspondence with the sphere $\bS^2$: $\psi_+:=\dbinom{1}{0}$ is mapped to the north pole of  $\bS^2$, $\psi_-:=\dbinom{0}{1}$
to the south pole, and the set of vectors
\begin{equation}
\psi_\theta := \frac{1}{\sqrt 2}\binom{ 1 }{ e^{i\theta} } \qquad
\theta\in [0, 2\pi[
\end{equation}
is mapped to the equator. The whole state space (with weak$^*$-topology) is homeomorphic to the unit ball in $\R^3$.
For instance the center of the ball is the state $\varphi(a):=\frac{1}{2}\tr(a)$. 

We consider the spectral
 triple described in \cite{Iochum:2001fv}, such that the distance
 between any two states is finite if and only if they are at the same
 latitude. In particular
 \begin{equation}
   \label{eq:78}
   d_{\A, D}(\Psi_+, \Psi_-) =\infty, \qquad    d_{\A, D}(\Psi_\theta,
   \Psi_{\theta'}) <\infty \quad \forall \theta, \theta' \in[0, 2\pi[. 
 \end{equation}
The state $\varphi$ has density matrix
$R=\frac{1}{2}\mathbb{I}_2$, meaning that any orthonormal basis of $\C^2$
is an eingenbasis of $R$.
In particular the canonical basis $\mathfrak{B}:=\{\psi_+,\psi_-\}$
and the basis
$\mathfrak{B}_\theta:=\{\psi_\theta,\psi_{\pi+\theta}\}$ for any
value of $0\leq \theta\leq \pi$
yields two distinct decompositions of $\varphi$ on pure states:
\begin{equation}
  \label{eq:26}
  \varphi = \tfrac 12 (\Psi_+ + \Psi_-) = \tfrac 12 (\Psi_\theta + \Psi_{\pi+\theta}).
\end{equation}
Explicitly, for any $a=\left\{a_{ij}\right\}\in M_2(\C)$ one has
\begin{gather}
  \Psi_+(a) =\inner{\psi_+, a\,\psi_+} = a_{11}, \qquad \Psi_-(a)
  =\inner{\psi_-, a\,\psi_-}=a_{22},\\[3pt]
   \Psi_\theta(a) =
    \inner{\psi_\theta, a\,\psi_\theta} = \tfrac12 (a_{11}  + a_{12}e^{i\theta}  +
    a_{21}e^{-i\theta}  + a_{22}).
\label{eq:72}
\end{gather}
Notice that $\Psi_+,\Psi_-$ and $\Psi_\theta, \Psi_{\pi+\theta}$ are pure,
so that here \eqref{eq:26} corresponds to both decompositions
\eqref{eq:3} and \eqref{eq:6}: the first term in \eqref{eq:26} may be
viewed as the discrete measure $\mu:=\left\{p_+=\frac 12, p_-=\frac 12\right\}$ with support on the north and south poles, and the
second term as the discrete measure $\mu_\theta := \left\{p_\theta
  =\frac 12, p_{\theta+\pi} = \frac 12\right\}$ with support on the equatorial points with meridian
coordinates $\theta, \theta+\pi$.

From \eqref{eq:78} we see that moment of order $1$ of
$\varphi$ depends on the choice of the eigenbasis of $R$ (or
equivalently on the choice of the measure): 
\begin{equation}
\mathcal{M}_1(\mu,\Psi_+) = \mathcal{M}_1(R,\mathfrak{B},\Psi_+)=
\frac{1}{2}d_{\A, D}(\Psi_+,\Psi_-) = \infty
\label{eq:76}
\end{equation}
and similarly for $\mathcal{M}_1(\mu,\Psi_-)$,  whereas for any value $\theta\in [0, \pi]$
\begin{equation}
\mathcal{M}_1(\mu_\theta, \Psi_{\theta}) =
\mathcal{M}_1(R,\mathfrak{B}_\theta,\Psi_\theta)=\frac{1}{2}
d_{\A, D}(\Psi_\theta,\Psi_{\pi+\theta})<\infty\label{eq:77}
\end{equation}
and similarly for $\mathcal{M}_1(\mu_\theta, \Psi_{\theta+\pi})$.
\hfill\ensuremath{\square}
\end{ex}

Among the normal states of $\A$, we single out the set $\NN_0(\A)$ of those 
for which there exists at least one density matrix $R$ with an eigenbasis $\mathfrak{B}=\{\psi_n\}$
such that \eqref{eq:7} is finite.

\begin{prop}
Let $\varphi\in\NN(\A)$. For any choice of $(R,\mathfrak{B}, \Psi_n)$ one has
\begin{equation}\label{eq:8}
d_{\A, D}(\varphi,\Psi_n) \leq {\cal M}_1(R, \mathfrak{B},\Psi_n) \;.
\end{equation}
In particular, if $\varphi\in\NN_0(\A)$ then $d_{\A, D}(\varphi,\Psi_n)$ is finite.
\end{prop}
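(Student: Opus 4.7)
The plan is to exploit the explicit convex decomposition of $\varphi$ into vector states afforded by the spectral decomposition of its density matrix, and then to apply the definition of the spectral distance term by term.

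First, I would diagonalize $R$ in the eigenbasis $\mathfrak{B} = \{\psi_n\}_{n\in\N}$ with eigenvalues $\{p_n\}$, so that $\sum_{k\geq 0} p_k = \tr(R) = 1$ and the normality of $\varphi$ gives the pointwise decomposition \eqref{eq:6}, namely $\varphi(a) = \sum_{k\geq 0} p_k \, \Psi_k(a)$ for every $a\in\A$. The series converges absolutely since $|p_k\Psi_k(a)| \leq p_k\|a\|$.

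Next, for any self-adjoint $a\in\A^{\mathrm{sa}}$ with $L_D(a)\leq 1$, I would use $\sum_k p_k = 1$ to rewrite
\begin{equation}
\varphi(a) - \Psi_n(a) = \sum_{k\geq 0} p_k\bigl(\Psi_k(a) - \Psi_n(a)\bigr),
\end{equation}
and then bound term by term: since $L_D(a)\leq 1$, the definition \eqref{eq:spectraldistance} yields $|\Psi_k(a) - \Psi_n(a)| \leq d_{\A, D}(\Psi_k, \Psi_n)$ for every $k$. Taking absolute values under the sum (legitimate by the positivity of $p_k$, whether or not the right-hand side is finite) gives
\begin{equation}
|\varphi(a) - \Psi_n(a)| \leq \sum_{k\geq 0} p_k \, d_{\A, D}(\Psi_k, \Psi_n) = \mathcal{M}_1(R,\mathfrak{B},\Psi_n).
\end{equation}
Since this bound is uniform in $a$, taking the supremum over admissible $a\in\A^{\mathrm{sa}}$ yields the claimed inequality \eqref{eq:8}. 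The "in particular" statement is then immediate: if $\varphi\in\mc{N}_0(\A)$, by definition there exists a choice of $(R,\mathfrak{B})$ making the right-hand side finite, and so is the left-hand side.

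I do not anticipate a serious obstacle here: the proof is essentially a convex-combination argument combined with the variational definition of the spectral distance. The only subtle point is verifying that the pointwise series $\sum_k p_k\Psi_k(a)$ really does equal $\varphi(a)$ on all of $\A$ (not just on a norm-dense subspace), which follows directly from the trace-class expansion $Ra = \sum_k p_k \ketbra{\psi_k}{\psi_k} a$ converging in trace norm.
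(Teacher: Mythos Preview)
Your proposal is correct and follows essentially the same argument as the paper: decompose $\varphi(a)-\Psi_n(a)=\sum_k p_k(\Psi_k(a)-\Psi_n(a))$ via \eqref{eq:6}, bound each summand by $d_{\A,D}(\Psi_k,\Psi_n)$ using $L_D(a)\leq 1$, and take the supremum. Your version is slightly more explicit about the convergence of the series and the passage to absolute values, but there is no substantive difference.
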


\begin{proof}
From \eqref{eq:6} it follows:
\begin{equation}
\varphi(a) - \Psi_n(a) = \sum\nolimits_k p_k \left(\Psi_k (a)-\Psi_n(a)\right) \leq
L_D(a)\sum\nolimits_k p_k\,d_{\A,D}(\Psi_k, \Psi_n)
\end{equation}
for all $a\in\A^{\mathrm{sa}}$. The last sum is the definition of $\mc{M}_1(R, \mathfrak{B},\Psi_n)$.
\end{proof}

\noindent In a similar way, one obtains that $d_{\A, D}(\varphi,\Psi_n)
\leq \mc{M}_1(\mu,\Psi_n)$ for all choices of $\mu$. In the commutative case one has the equality $d_{\A, D}(\varphi, x')={\cal M}_1(\varphi,x')$ \cite[Prop.~2.2]{DM09}.
In the noncommutative case the equality between spectral distance and moments of order
$1$ defined in \eqref{eq:4} and \eqref{eq:7} does not hold in general, as this would imply that these moments do not
depend on how one decomposes $\varphi$, in contradiction with example \ref{exemplematrice}.

\smallskip

We now prove an analogue of Prop.~\ref{prop:3.12} for spectral triples that are not necessarily compact quantum metric spaces.

\begin{prop}\label{propconvmetric}
Let $(\A, \HH, D)$ be an arbitrary spectral triple and $\left\{P_N\right\}_{N\in\N}$ an
increasing sequence of projections in $\B(\HH)$ convergent weakly to $1$. 
For any $\varphi\in\NN_0(\A)$ such that $\mc{M}_1(R,
  \mathfrak{B},\Psi_n)$ is finite for an eigenbasis $\mathfrak{B}$
  in which the $P_N$'s are all diagonal, then
there exists a sequence $\left\{\varphi_N\right\}_{N\in\N}$
such that: i) $\varphi_N\in {\cal S}(\OO_N)$ for all $N\geq 0$; ii)
$\varphi_N^\sharp \to \varphi$ in the metric topology.
\end{prop}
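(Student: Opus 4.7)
The plan is to adapt the construction of Prop.~\ref{prop:3.12}: set $Z_N := \tr(RP_N)$ and $\varphi_N(b) := Z_N^{-1}\tr(Rb)$ for $b\in\OO_N$ and $N$ large enough that $Z_N\neq 0$. The argument of Prop.~\ref{prop:3.12} already gives $Z_N\to 1$ and weak$^*$ convergence $\varphi_N^{\,\sharp}\to\varphi$. To upgrade this to convergence in the metric topology, I would exploit the hypothesis that the $P_N$'s are all diagonal in the eigenbasis $\mathfrak{B}=\{\psi_k\}$ of $R$: there are then index sets $I_N\subset\N$ with $P_N\psi_k=\psi_k$ for $k\in I_N$ and $P_N\psi_k=0$ otherwise. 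Monotonicity of $\{P_N\}$ together with weak convergence to $1$ forces $I_N\subset I_{N+1}$ and $\bigcup_N I_N=\N$, so the reference index $n$ from the hypothesis lies in $I_N$ for all $N$ large enough, and one has the clean convex-combination expression
\begin{equation*}
\varphi_N^{\,\sharp}(a)=Z_N^{-1}\sum_{k\in I_N} p_k\,\Psi_k(a),\qquad \varphi(a)=\sum_{k\geq 0} p_k\,\Psi_k(a).
\end{equation*}

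Given $a\in\A^{\mathrm{sa}}$ with $L_D(a)\leq 1$, I would first reduce to the case $\Psi_n(a)=0$ (adding a multiple of the unit to $a$ changes neither $L_D(a)$ nor $\varphi(a)-\varphi_N^{\,\sharp}(a)$), and then split
\begin{equation*}
\varphi(a)-\varphi_N^{\,\sharp}(a)=\sum_{k\notin I_N} p_k\,\Psi_k(a)\;-\;\frac{1-Z_N}{Z_N}\sum_{k\in I_N} p_k\,\Psi_k(a).
\end{equation*}
The estimate $|\Psi_k(a)|=|\Psi_k(a)-\Psi_n(a)|\leq L_D(a)\,d_{\A,D}(\Psi_k,\Psi_n)\leq d_{\A,D}(\Psi_k,\Psi_n)$ then controls both pieces by the series defining $\mc{M}_1(R,\mathfrak{B},\Psi_n)$: the first piece is bounded by the tail $\sum_{k\notin I_N} p_k\,d_{\A,D}(\Psi_k,\Psi_n)$, and the second by $\frac{1-Z_N}{Z_N}\,\mc{M}_1(R,\mathfrak{B},\Psi_n)$.

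The first tail tends to zero because $\sum_{k} p_k\,d_{\A,D}(\Psi_k,\Psi_n)$ is a convergent non-negative series and $\N\setminus I_N$ is a decreasing sequence of subsets with empty intersection; the second piece tends to zero because $Z_N\to 1$ and the sum remains uniformly bounded by $\mc{M}_1$. Both bounds are uniform in $a$ over $\{L_D(a)\leq 1\}$, yielding $d_{\A,D}(\varphi,\varphi_N^{\,\sharp})\to 0$. The main technical point — and the reason the simultaneous diagonalizability hypothesis is imposed — is this second display: without it, $\pi_N(R)$ is not diagonal in $\mathfrak{B}$, $\varphi_N^{\,\sharp}$ ceases to be a transparent convex combination of the vector states $\Psi_k$, and one cannot control the difference $\varphi-\varphi_N^{\,\sharp}$ term by term using the finite first moment along $\mathfrak{B}$.
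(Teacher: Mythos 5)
Your proposal is correct and follows essentially the same route as the paper: the same truncated states $\varphi_N(b)=Z_N^{-1}\tr(Rb)$, the same use of $[P_N,R]=0$ coming from the simultaneous-diagonalization hypothesis, the same term-by-term Lipschitz bound $|\Psi_k(a)-\Psi_n(a)|\leq d_{\A,D}(\Psi_k,\Psi_n)$, and convergence from the tails of the $\mc{M}_1$ series together with $Z_N\to1$. The only (harmless) cosmetic difference is that you normalize $\Psi_n(a)=0$ and split the error into a tail term plus a $(1-Z_N)/Z_N$ term, whereas the paper writes the error as the double sum $Z_N^{-1}\sum_{n>N}p_n\sum_{k<N}p_k(a_{nn}-a_{kk})$ and then applies the triangle inequality through $\Psi_0$.
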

\begin{proof}

Once fixed $(R, \mathfrak{B})$, the state $\varphi_N$ is defined as in (\ref{eq:32}), namely
\begin{equation}
  \label{eq:34}
  \varphi^\sharp_N(a) = Z_N^{-1} \varphi(\pi_N (a)) = Z_N^{-1}\, \tr\left(P_N RP_N a\right).
\end{equation}
By hypothesis $[P_N, R]=0$ for any
$N\in\N$,  hence 
\begin{equation}
\varphi(a) -\varphi_N^\sharp(a) = \tr\left( \left( 1-P_N\right) R \left( a- \varphi^\sharp_N(a)\right)\right).
\label{eq:12}
\end{equation}
Writing $a_{ij}$ the components of $a$ in the basis $\mathfrak{B}$, one has
\begin{equation*}
  \tr \left(\left(1-P_N\right)Ra\right) = \sum_{n>N} p_n a_{nn} ,\quad \tr\left(
   \left (1-P_N\right) R \varphi_N\left(a\right)\right) =  Z_N^{-1}\sum_{n>N}\left(
    p_n\sum_{k<N} p_k a_{kk}\right)
\end{equation*}
so that 
\begin{equation}
  \label{eq:25}
  \varphi(a) -\varphi^\sharp_N(a) = Z_N^{-1}\sum_{n>N} p_n\left( \sum_{k<N}
    p_k (a_{nn}- a_{kk})\right).
\end{equation}
For any $a$ such that $L_D(a)\leq 1$ one has
$|a_{nn} - a_{kk}| = |\Psi_n(a) - \Psi_k(a)|\leq d_{\A,
  D}(\Psi_n, \Psi_k),$
therefore 
\begin{multline*}
 \sup_{a\in\A, L_D(a)\leq 1} |\varphi(a) -\varphi^\sharp_N(a)| \leq
 Z_N^{-1}\sum_{n>N} p_n\left( \sum_{k<N} p_k \, d_{\A,
  D}(\Psi_n, \Psi_k)\right)\\ \nonumber
\leq Z_N^{-1} \sum_{n>N} p_n\left( \sum_{k<N} p_k \, d_{\A,
  D}(\Psi_n,\Psi_0) + p_kd_{\A,
  D}(\Psi_0,  \Psi_k)\right)\\
\leq Z_N^{-1} \left(\sum_{n>N} p_n \, d_{\A,
  D}(\Psi_n,\Psi_0) + p_n \,{\cal M}_1(R,   \mathfrak{B},\Psi_0)\right).
\end{multline*}
Both terms in the parenthesis are remainders of series converging to
$\mc{M}_1(R,   \mathfrak{B},\Psi_0)$, and so vanish as  $N\to \infty$. 
Since $Z_N\to 1$, one gets $\lim_{N\to\infty} d_{\A, D}(\varphi,
\varphi^\sharp_N) = 0$. 
 \end{proof}

As a corollary, one obtains that any state $\varphi$ in $\NN_0(\A)$ can be approximated in the metric
topology by a sequence of states with finite-rank density matrices.
\begin{cor}
\label{cormetrictop}
Let $(\A, \HH, D)$ be an arbitrary spectral triple and $\varphi\in\NN_0(\A)$. There exists
a sequence $\{\varphi_N\}_{N\in\N}$ of normal states with finite-rank density matrix that
is convergent to $\varphi$ in the metric topology,
\begin{equation}
\underset{N\to\infty}{\lim} d_{\A, D}(\varphi, \varphi^\sharp_N) =0.\label{eq:27}
\end{equation}
Furthermore, for any $\varphi,\varphi'\in \NN_0(\A)$,
\begin{equation}
d_{\A, D}(\varphi,\varphi')=\lim_{N\to\infty}d_{\A, D}(\varphi^\sharp_N,\varphi'^\sharp_N) \;.
\label{eq:29}
\end{equation}
\end{cor}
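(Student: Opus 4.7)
\emph{Strategy.} My plan is to deduce both assertions directly from Proposition~\ref{propconvmetric}, supplemented by the triangle inequality for the (extended) metric $d_{\A,D}$. The finite-rank claim in \eqref{eq:27} will come essentially for free from the explicit density-matrix formula already exhibited in the proof of Lemma~\ref{prop:3.6}.

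For \eqref{eq:27}, given $\varphi\in\NN_0(\A)$ I would fix a density matrix $R$ and an eigenbasis $\mathfrak{B}=\{\psi_n\}$ realizing the finite-moment condition, with the projection sequence $\{P_N\}$ arranged so that every $P_N$ is diagonal in $\mathfrak{B}$. Proposition~\ref{propconvmetric} then produces $\varphi_N\in\mc{S}(\OO_N)$ with $\varphi_N^\sharp\to\varphi$ in the metric topology, which is exactly \eqref{eq:27}. It only remains to observe, as in the proof of Lemma~\ref{prop:3.6}, that $Z_N^{-1}P_N R P_N$ is a density matrix for $\varphi_N^\sharp$; since $P_N$ has finite rank so does this operator, and thus each $\varphi_N^\sharp$ is a normal state of $\A$ with finite-rank density matrix.

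For \eqref{eq:29}, I would apply the construction of the first part to both $\varphi$ and $\varphi'$ using the same sequence $\{P_N\}$. The extended-metric triangle inequality yields
\[
\bigl|d_{\A,D}(\varphi_N^\sharp,\varphi_N'^\sharp)-d_{\A,D}(\varphi,\varphi')\bigr|\leq d_{\A,D}(\varphi,\varphi_N^\sharp)+d_{\A,D}(\varphi',\varphi_N'^\sharp)
\]
whenever $d_{\A,D}(\varphi,\varphi')$ is finite; invoking \eqref{eq:27} separately for $\varphi$ and $\varphi'$ drives the right-hand side to $0$ as $N\to\infty$, giving \eqref{eq:29}. In the degenerate case $d_{\A,D}(\varphi,\varphi')=\infty$, the inequality $d_{\A,D}(\varphi,\varphi')\leq d_{\A,D}(\varphi,\varphi_N^\sharp)+d_{\A,D}(\varphi_N^\sharp,\varphi_N'^\sharp)+d_{\A,D}(\varphi_N'^\sharp,\varphi')$ combined with the finiteness of the two outer terms for large $N$ forces $d_{\A,D}(\varphi_N^\sharp,\varphi_N'^\sharp)\to\infty$, so the equality still holds in the extended sense.

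The only genuine difficulty is not analytical but structural: one needs a single sequence $\{P_N\}$ that is simultaneously compatible with $\varphi$ and $\varphi'$ in the sense demanded by Proposition~\ref{propconvmetric}, i.e.\ diagonal in eigenbases of the respective density matrices realizing the finiteness of the first moment. This compatibility is inherited from the standing hypothesis on $\{P_N\}$ in Section~4; once it is granted, the corollary reduces to the bookkeeping sketched above.
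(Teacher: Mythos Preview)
Your argument is essentially the paper's: construct the projections $P_N$ from the eigenbasis $\mathfrak{B}$ realizing the finite first moment, invoke Proposition~\ref{propconvmetric} for \eqref{eq:27}, and use the triangle inequality for \eqref{eq:29}. The finite-rank observation is also correct.

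However, the ``genuine difficulty'' you flag is not a difficulty at all, and your proposed resolution of it is mistaken. The triangle inequalities
\[
d_{\A,D}(\varphi,\varphi')\leq d_{\A,D}(\varphi,\varphi_N^\sharp)+d_{\A,D}(\varphi_N^\sharp,{\varphi'}_N^\sharp)+d_{\A,D}({\varphi'}_N^\sharp,\varphi')
\]
and its companion do \emph{not} require $\varphi_N^\sharp$ and ${\varphi'}_N^\sharp$ to arise from the same projection sequence; they hold for any approximating sequences whatsoever. So you may (and should) choose the $P_N$'s for $\varphi$ from an eigenbasis of $R$, and independently choose the $P'_N$'s for $\varphi'$ from an eigenbasis of $R'$. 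This is what the paper does, and it is explicitly remarked after the corollary that ``the truncations (i.e.~the $P_N$'s) depend on the state.'' Your claim that compatibility is ``inherited from the standing hypothesis on $\{P_N\}$ in Section~4'' is wrong: the standing hypothesis only says the $P_N$ are increasing and converge weakly to $1$; there is no reason such a fixed sequence would be diagonal in the eigenbases of two unrelated density matrices. Drop the requirement of a common sequence and the argument goes through without any structural obstacle.
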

\begin{proof}
Take $(R, \mathfrak{B})$ such that $\mc{M}_1(R,
  \mathfrak{B},\Psi_0)$ is finite, and $P_N$ the projection on
  the first $N$ vectors of $\mathfrak{B}$. Then (\ref{eq:27}) follows from
 proposition~\ref{propconvmetric}. Eq.~(\ref{eq:29})  comes from the
 $N\to\infty$ limit of the two following equations (obtained by the triangle inequality)
\begin{align*}
  &d_{\A,D}(\varphi,\varphi')\leq d_{\A,D}(\varphi,\varphi_N^\sharp)
  +d_{\A,D}(\varphi',{\varphi'}^\sharp_N) +d_{\A,D}(\varphi^\sharp_N,{\varphi'}^\sharp_N), \\
&d_{\A,D}(\varphi^\sharp_N,{\varphi'}^\sharp_N)\leq d_{\A,D}(\varphi,\varphi^\sharp_N)
+d_{\A,D}(\varphi',{\varphi'}^\sharp_N) +d_{\A,D}(\varphi,\varphi') \;.\label{eq:66}
\end{align*}

\vspace{-.85truecm}\end{proof}

Corollary \ref{cormetrictop} shows that the states with a finite rank density matrix are dense in
$\NN_0(\A)$. There is an important difference with the
situation in the weak$^*$ topology: once fixed the net of projection $P_N$, any normal state
can be weakly approximated by states in ${\cal S}(\OO_N)$. In fact
from Prop.~\ref{prop:3.12}, which is still valid for non-unital algebras (cf.~Rem.~\ref{footunitization}),
one has
\begin{equation}\label{eq:convNA}
\overline{\underset{\longrightarrow}{\lim}\;
   \mathcal{N}(\OO_N)} = \overline{\NN(\A)} \;.
\end{equation}
On a non-compact quantum metric
   space, any state with finite moment of order $1$ can be
   approximated by truncated states, but the truncations  (i.e.~the
   $P_N$'s) depends on the state.
We investigate below simple, a one dimensional lattice, where the
$P_N$'s are actually the same for all states.

\subsection{Example: the lattice $\Z$}\label{sectionlattice}
We identify $\A= C_0(\Z)$ with the algebra of complex diagonal matrices,
\begin{equation}
a=\mathrm{diag}(\ldots a_{-1}, a_0,a_1,\ldots,a_n,\ldots)\quad \text{ with } \quad \lim_{n\to\pm\infty}a_n=0,
\end{equation}
acting on $\HH = l^2(\Z)\otimes \C^2$ as $a\otimes \I_2$.  The
selfadjoint operator $D$ acts on the orthonormal basis
$\mathfrak{B} = \ket{n}_\pm$ of $\HH$ as
\begin{equation}
D\ket{n}_+ =\ket{n+1}_- 
\quad D\ket{n}_- =\ket{n-1}_+ 
\label{eq:15}
\end{equation}
Any state $\varphi$ on $\A$ is a discrete probability distribution
$p = \left\{p_n\in\R^+_0\right\}_{n\in\Z}$,
that is
\begin{equation}
\varphi(a)=\sum\nolimits_{n\in\Z}a_np_n \;.
\label{eq:23}
\end{equation}
It is normal, with density matrix $R=\text{diag}(...,p_{-1},p_0,p_1,...)$.
The state $\Psi_n(a) = a_n$ defined by $\ket n$
is pure, meaning that Def.~\ref{defmoment} and Eq.~\eqref{eq:4} coincide:
${\cal  M}_1(p, \Psi_n)={\cal M}_1(R, \mathfrak{B}, \Psi_n)$. We denote by
${\cal S}_0(\A)$ the set of states with finite moment of order $1$.

\smallskip

The spectral distance $d_{\A, D}$ turns out
to be the Wasserstein distance $W_D$ on $\sa$, introduced in
\cite{Martinetti:2012fk} by
taking as cost function $d_{\A, D}$ on pure
states. Namely
\begin{equation}
 W_D(\varphi,\varphi'):=\sup_{a\in\text{Lip}_D(\A)} 
 \varphi(a)-\varphi'(a),
\label{eq:36}
 \end{equation}
with $\text{Lip}_D(\A)$ the set of elements which are
Lipschitz with respect to the spectral distance:
\begin{equation}
\text{Lip}_D(\A) := \left\{ a\in \A, \,|\Psi(a)-\Psi'(a)|\leq
    d_{\A,D}(\Psi, \Psi') \;
\text{ for all }\;\Psi,\Psi'\in \pa\right\}.
\label{eq:37}
\end{equation}
 \begin{lemma}\label{lemma:Da} For any $a\in\A$, one has
   $\|[D,a]\|=\sup_{n}|a_n-a_{n+1}|$, hence
  \begin{align}
\label{lemmadir2}
    \|[D,a]\|&\leq 1 \quad\text{if{}f }\quad|a_n-a_k|\leq |n-k| \quad \forall
    n,k.
    \end{align}
\end{lemma}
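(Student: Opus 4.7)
The plan is to compute $[D,a]$ explicitly on the orthonormal basis $\mathfrak{B}=\{\ket{n}_\pm\}_{n\in\Z}$ and then read off the operator norm directly. Since $a=\mathrm{diag}(\ldots,a_{-1},a_0,a_1,\ldots)\otimes\I_2$ acts as $a\ket{n}_\pm=a_n\ket{n}_\pm$, combining this with \eqref{eq:15} yields
\begin{align*}
[D,a]\ket{n}_+ &= a_n D\ket{n}_+ - a \ket{n+1}_- = (a_n-a_{n+1})\ket{n+1}_-,\\
[D,a]\ket{n}_- &= a_n D\ket{n}_- - a \ket{n-1}_+ = (a_n-a_{n-1})\ket{n-1}_+.
\end{align*}

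Next I would deduce $\|[D,a]\|$ from this formula. For an arbitrary $v=\sum_n(c_n^+\ket{n}_++c_n^-\ket{n}_-)\in\HH$, linearity gives
\[[D,a]v=\sum_n c_n^+(a_n-a_{n+1})\ket{n+1}_-+\sum_n c_n^-(a_n-a_{n-1})\ket{n-1}_+.\]
The vectors on the right are pairwise orthogonal (the $\ket{\cdot}_-$ and $\ket{\cdot}_+$ sectors are orthogonal, and within each sector the indices are distinct), so
\[\|[D,a]v\|^2=\sum_n |c_n^+|^2|a_n-a_{n+1}|^2+\sum_n|c_n^-|^2|a_{n-1}-a_n|^2\leq M^2\|v\|^2,\]
where $M:=\sup_n|a_n-a_{n+1}|$. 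For the reverse inequality, if $M$ is the supremum pick a sequence $n_k$ with $|a_{n_k}-a_{n_k+1}|\to M$; then $v_k=\ket{n_k}_+$ satisfies $\|[D,a]v_k\|=|a_{n_k}-a_{n_k+1}|\to M$. Hence $\|[D,a]\|=\sup_n|a_n-a_{n+1}|$, which is the first statement.

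The second statement is then immediate. If $\|[D,a]\|\leq 1$, then $|a_n-a_{n+1}|\leq 1$ for every $n$, and iterating via the triangle inequality,
\[|a_n-a_k|\leq\sum_{j=\min(n,k)}^{\max(n,k)-1}|a_j-a_{j+1}|\leq|n-k|;\]
conversely, the case $k=n+1$ of the condition $|a_n-a_k|\leq|n-k|$ gives $\sup_n|a_n-a_{n+1}|\leq 1$, hence $\|[D,a]\|\leq 1$. The entire argument is a routine computation; the only mild subtlety is that although $a\in C_0(\Z)$ forces $a_n\to 0$ so the supremum $M$ is finite (and attained or achieved as a limit), one must still argue that the orthogonality structure of $[D,a]$ actually turns the pointwise bound into the operator norm rather than a weaker estimate — that is what the explicit action on $\ket{n}_+$ provides.
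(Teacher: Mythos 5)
Your proof is correct, and it takes a slightly different route from the paper's. You compute $[D,a]$ on the basis directly and observe that it sends the orthonormal family $\{\ket{n}_\pm\}$ to a mutually orthogonal family with coefficients $a_n-a_{n\pm1}$, so the operator norm is read off as the supremum of these coefficients by a Parseval/Pythagoras argument (with the reverse inequality witnessed on the basis vectors $\ket{n_k}_+$ themselves). The paper instead normalizes $a_0=0$, composes with $D$, and uses $\|[D,a]\|=\|D[D,a]\|$ to reduce to the genuinely diagonal operator $D[D,a]\ket{n}_\pm=(a_n-a_{n\mp 1}\cdot(\mp1))\ket{n}_\pm$ (i.e.\ $(a_n-a_{n+1})\ket{n}_+$ and $(a_n-a_{n-1})\ket{n}_-$), whose norm is the same supremum. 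The two arguments carry the same computational content; yours is arguably more self-contained, since it bypasses the paper's auxiliary remarks about $D^2$ and needs no normalization of $a_0$, at the modest cost of writing out the orthogonality computation for a general vector $v$. The deduction of the ``iff'' statement via the triangle inequality is identical in both. One cosmetic point: the finiteness of $M$ follows already from $a$ being a bounded operator ($M\le 2\|a\|$), so the appeal to $a_n\to0$ in your closing remark is not needed.
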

\begin{proof}
Both sides of the first equation are invariant if we add a
constant to $a$, thus we assume that $a_0=0$.
Noticing that $1- D^2$ is the projection operator on $\ket{0}_-$, one has
$\|D\|^2=\|D^2\|=1$. Since $a_0=0$, we have also $[D,a]=D^2[D,a]$. Thus
\begin{equation}
\|[D,a]\|\leq\|D\|\cdot\|D[D,a]\|=\|D[D,a]\| \;.
\end{equation}
On the other hand
\begin{equation}
\|D[D,a]\|\leq\|D\|\cdot\|[D,a]\|=\|[D,a]\|
\end{equation}
which proves $\|[D,a]\|=\|D[D,a]\|$. This norm is easy to
compute since:
\begin{align}
D[D,a]\ket{n}_+ &=(a_n-a_{n+1})\ket{n}_+ \;, &
D[D,a]\ket{n}_- &=(a_n-a_{n-1})\ket{n}_- \;.
\end{align}
This concludes the proof of the equation in the statement of the lemma.

Due to the triangle inequality, $\|[D,a]\|\leq 1$ implies
\begin{equation}
|a_n-a_k|\leq\sum_{j=\min(n,k)}^{\max(n,k)-1}|a_j-a_{j+1}| \leq \sum_{j=\min(n,k)}^{\max(n,k)-1}1=|n-k| \;.
\end{equation}
On the other hand, if $|a_n-a_k|\leq|n-k|$ then $|a_n-a_{n+1}|\leq 1$ and $\|[D,a]\|\leq 1$.
\end{proof}

\medskip
The condition $|a_n-a_k|\leq |n-k|$ is the discrete analogue of the
$1$-Lipschitz condition. This is what makes $d_D$ equal to $W_D$, as
in the continuous case $\A=C_0(M)$ (see e.g.~\cite{DM09}).

\begin{prop}
\label{propMK}  
For any states $\varphi, \varphi'\in S(C_0(\Z))$ one has
\begin{equation}
  \label{eq:54}
  d_{\A, D}(\varphi, \varphi') = W_D(\varphi, \varphi').
\end{equation}
In particular, the spectral distance between any state and the pure
state $\delta_n$, $n\in\Z$ is
 \begin{equation}\label{eq:dooNb}
d_{\A,D}(\varphi,\delta_n)=\sum_{k\in\Z} |k-n|\hspace{1pt}p_k,
\end{equation} meaning that the spectral distance between pure states of the
lattice is 
  \begin{equation}
    \label{eq:53}
    d_{\A, D}(\delta_m, \delta_n) = |m-n|.
  \end{equation}
\end{prop}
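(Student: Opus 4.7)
The plan is to first pin down the distance between pure states, and then exploit the characterization of $\|[D,a]\|\leq 1$ given by Lemma~\ref{lemma:Da} to identify the Connes and Wasserstein suprema. First I would verify the identity $d_{\A,D}(\delta_m,\delta_n) = |m-n|$. The upper bound is immediate from Lemma~\ref{lemma:Da}: any $a\in\A^{\mathrm{sa}}$ with $\|[D,a]\|\leq 1$ satisfies $|a_m - a_n|\leq|m-n|$, so $\delta_m(a)-\delta_n(a) = a_m - a_n\leq |m-n|$. For the matching lower bound I would exhibit a compactly supported tent
\[
a^{(N)}_k := \max\!\bigl(0,\,\min(|k-n|,\,2N-|k-n|)\bigr), \qquad N\geq |m-n|,
\]
which lies in $C_0(\Z)^{\mathrm{sa}}$ (its support is finite), satisfies $|a^{(N)}_k-a^{(N)}_{k+1}|\leq 1$ so that $\|[D,a^{(N)}]\|\leq 1$ by Lemma~\ref{lemma:Da}, and realises $a^{(N)}_n=0$, $a^{(N)}_m=|m-n|$, matching the upper bound.

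Once $d_{\A,D}(\delta_m,\delta_n) = |m-n|$ is established, the Lipschitz set $\mathrm{Lip}_D(\A)$ defined in (\ref{eq:37}) becomes $\{a\in\A : |a_m - a_k|\leq |m-k|\ \forall\,m,k\in\Z\}$, which by the second half of Lemma~\ref{lemma:Da} is precisely $\{a\in\A : \|[D,a]\|\leq 1\}$. The two suprema defining $d_{\A,D}$ and $W_D$ are therefore taken over exactly the same set of self-adjoint elements, yielding (\ref{eq:54}) for arbitrary states.

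For the explicit formula (\ref{eq:dooNb}), I would work directly with the dual formulation. For any $a\in\A^{\mathrm{sa}}$ with $\|[D,a]\|\leq 1$, using $\sum_k p_k = 1$ and Lemma~\ref{lemma:Da},
\[
\varphi(a) - \delta_n(a) = \sum_{k\in\Z} p_k(a_k - a_n) \leq \sum_{k\in\Z} p_k |k-n|,
\]
giving the upper bound. Conversely, the same tent $a^{(N)}$ satisfies $a^{(N)}_n = 0$ and $a^{(N)}_k = |k-n|$ whenever $|k-n|\leq N$, and a direct check shows $a^{(N)}_k$ is non-negative and non-decreasing in $N$; monotone convergence then gives $\varphi(a^{(N)}) - a^{(N)}_n \to \sum_k p_k |k-n|$, a value possibly equal to $+\infty$. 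This establishes (\ref{eq:dooNb}), and specialising to $\varphi = \delta_m$ recovers (\ref{eq:53}).

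The proof is essentially a translation between the two sup problems via Lemma~\ref{lemma:Da}; the only delicate point is producing test functions that genuinely live in the non-unital algebra $C_0(\Z)$, which is handled once and for all by the compactly supported tent construction, so that no appeal to the unitization of Rem.~\ref{footunitization} is required.
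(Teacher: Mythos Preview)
Your proposal is correct and follows essentially the same strategy as the paper: both use Lemma~\ref{lemma:Da} to obtain the upper bounds, exhibit explicit tent-shaped test sequences for the lower bounds, and identify the condition $\|[D,a]\|\leq 1$ with the Lipschitz condition to deduce \eqref{eq:54}. Your argument is in fact slightly more careful than the paper's on one point: the paper's test element for \eqref{eq:53} (namely $a_i=i$ for $i\leq\sup(m,n)$, zero otherwise) and its tent for \eqref{eq:dooNb} do not vanish as $i\to -\infty$, so they do not literally lie in $C_0(\Z)$; your compactly supported two-sided tent avoids this issue cleanly, and your appeal to monotone convergence makes the limiting argument for \eqref{eq:dooNb} explicit.
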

\begin{proof}
  By (\ref{lemmadir2}) one has $d_D(\delta_m, \delta_n)\leq
  |m-n|$. The upper bound is attained by the element $a$ with
  components
  $a_i = i$ for $i\leq \sup (m,n)$, zero otherwise.
Hence (\ref{eq:53}). Eq.~ (\ref{eq:54}) follows noticing that
$L_D(a)\leq 1$ is equivalent to $a\in \text{Lip}_D(\A)$.

To prove (\ref{eq:dooNb}), we use again Lemma \ref{lemma:Da} which yields
\begin{equation}
|\varphi(a)-\delta_n(a)|=|\sum\nolimits_{k\in\Z}p_k(a_k-a_n)|\leq\sum\nolimits_{k\in\Z}|k-n|\hspace{1pt}p_k.
\label{eq:62}
\end{equation}
This upper bound is attained by the sequence of elements
\begin{equation*}\label{eq:element}
  a^{(m)}_k=
  \begin{cases}
   k &\text{if}\;k\leq m \;, \\[-1pt]
    2m-k &\text{if}\;m<k\leq 2m \;, \\[-1pt]
    0 &\text{if}\;k>2m \;.
  \end{cases}
\end{equation*}

\vspace{-.85truecm}\end{proof}

\begin{rem}The result for the distance between pure states for the
  finite case ($\A=\C^N$) had been obtained 
\cite{DimakisMuellerHoissen}. Note that the  ``spinorial'' character of the Hilbert space $l^2(\Z)\otimes \C^2$ plays a crucial role. Consider instead $\HH'=l^2(\Z)$, with orthonormal basis $\ket{n}$ and the Dirac operator acting as $D'\ket{n}=\ket{n+1}-\ket{n-1}$. This is a finite approximation of the derivative on $\R$ and has been considered in \cite{BimonteLizziSparano,Atzmon}. In this case the distance between pure states $\delta_n$ and $\delta_m$ is
\begin{subequations}
\begin{align}
d_{\A, D'}(\delta_m, \delta_n) &=  |m-n|+1  &&\text{if}~m-n~\text{is odd}, \\ 
d_{\A, D'}(\delta_m, \delta_n) &= \sqrt{(|m-n|)(|m-n|+1)}  \hspace{-5mm} &&\text{if}~m-n~\text{is even}.
\end{align}
\end{subequations}
\end{rem}

On the lattice, the approximation of a state by its
truncations is always possible in the weak$^*$ topology, but only for states with finite
moment of order $1$ in the metric topology.
\begin{prop}
\label{conlattice}
 In the metric topology induced by $d_{\A,D}$ one has
\begin{equation}
\overline{\underset{\longrightarrow}{\lim}\; \mathcal{S}(\OO_N) }= {\cal S}_0(\A).\label{eq:33}
\end{equation}
In the weak$^*$ topology one has
\begin{equation}
\overline{\underset{\longrightarrow}{\lim}\; \mathcal{S}(\OO_N) }= {\cal S}(\A).\label{eq:33bis}
\end{equation}
\end{prop}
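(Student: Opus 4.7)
The plan is to choose the natural truncations on the lattice---namely $P_N$ the projection onto $\mathrm{span}\{\ket{n}_\pm : |n|\leq N\}$ in $\HH = l^2(\Z)\otimes\C^2$---and then reduce both statements to the general convergence results of \S\ref{sec:3.3} and \S\ref{sec:4.2}. First I would observe that $\pi_N(\A)$ consists of diagonal matrices supported on coordinates $|n|\leq N$, so through the injection $\sharp$ of Lemma~\ref{prop:3.6}, the states of $\OO_N$ correspond exactly to finitely supported probability distributions on $\Z$, and $\underset{\longrightarrow}{\lim}\;\mc{S}(\OO_N)^\sharp$ is the set of all such distributions. The key structural input on the lattice is that every $\varphi\in\mc{S}(C_0(\Z))$ is normal, with diagonal density matrix $R=\mathrm{diag}(p_n)$ in the canonical basis $\mathfrak{B}$---which happens to diagonalise every $P_N$ as well.

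For the weak$^*$ identity \eqref{eq:33bis}, the previous observation reduces the claim to $\overline{\mc{N}(\A)}=\mc{S}(\A)$ (trivial here since $\mc{N}(\A)=\mc{S}(\A)$) together with a direct application of Proposition~\ref{prop:3.12}, via the unitization $\A^+$ as allowed by Remark~\ref{footunitization}. The reverse inclusion $\underset{\longrightarrow}{\lim}\;\mc{S}(\OO_N)^\sharp\subset\mc{S}(\A)$ is automatic.

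For the metric identity \eqref{eq:33}, the inclusion $\mc{S}_0(\A)\subset \overline{\underset{\longrightarrow}{\lim}\;\mc{S}(\OO_N)}$ will follow from Proposition~\ref{propconvmetric}: the hypothesis that $R$ and every $P_N$ are simultaneously diagonal in a common eigenbasis holds by construction, and by \eqref{eq:dooNb} the noncommutative moment of Definition~\ref{defmoment} collapses to $\mathcal{M}_1(R,\mathfrak{B},\Psi_0)=\sum_k |k|\,p_k = d_{\A,D}(\varphi,\delta_0)$, which is finite precisely by definition of~$\mc{S}_0(\A)$. For the converse, if $\varphi_N^\sharp\to\varphi$ in the metric topology with $\varphi_N\in\mc{S}(\OO_N)$, then $d_{\A,D}(\varphi,\varphi_N^\sharp)<\infty$ for $N$ large enough, and each $\varphi_N^\sharp$ is a finite-support distribution so that $d_{\A,D}(\varphi_N^\sharp,\delta_0)$ is a finite sum by \eqref{eq:dooNb}; the triangle inequality then forces $d_{\A,D}(\varphi,\delta_0)<\infty$, i.e.~$\varphi\in\mc{S}_0(\A)$.

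I do not anticipate any real obstacle: the proposition is essentially an instantiation of the general machinery of \S\ref{sec:3.3}--\S\ref{sec:4.2} in a case where every state is normal and every density matrix is diagonal in the same fixed basis. The only point that deserves a moment's attention is the identification of Definition~\ref{defmoment}'s noncommutative moment with the classical $\ell^1$-moment of the distribution---which is supplied directly by Proposition~\ref{propMK}---together with the minor detail of passing through the unitization for the non-unital algebra $C_0(\Z)$, handled by Remark~\ref{footunitization}.
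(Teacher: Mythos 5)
Your proposal is correct and follows essentially the same route as the paper: both directions of \eqref{eq:33} rest on Prop.~\ref{propconvmetric}/Cor.~\ref{cormetrictop} (using that on the lattice the density matrix and all the $P_N$ are simultaneously diagonal in the unique eigenbasis, and that Prop.~\ref{propMK} identifies the moment of order $1$ with $d_{\A,D}(\varphi,\delta_0)$), while \eqref{eq:33bis} reduces to \eqref{eq:convNA} together with $\mc{N}(\A)=\mc{S}(\A)$ and weak$^*$-closedness of the state space. Your triangle-inequality argument for the reverse inclusion in the metric case is just a hands-on rendering of the paper's observation that $\mc{S}_0(\A)=\{\varphi: d_{\A,D}(\varphi,\delta_0)<\infty\}$ is the metrically closed connected component of $\delta_0$.
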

\begin{proof} 
$ {\cal S}_0(\A)\subset \overline{\underset{\longrightarrow}{\lim}\;
  \mathcal{S}(\OO_N) }$ follows from corollary \ref{cormetrictop},
noticing that on the
lattice there is only one eigenbasis $\mathfrak{B}$, hence only one
possible choice of the $P_N$'s. Eq.~(\ref{eq:33}) comes from the
observation that ${\cal S}_0(\A)$ can be equivalently characterized
as the connected component $
\text{Con}(\delta_n)\doteq \left\{\varphi\in \sa, d_{\A,
      D}(\varphi, \delta_n) <\infty\right\}
$ of any pure states $\delta_n$.  As such, it is closed
(and open
as well) for the metric topology \cite[Def.~2.1]{DM09}.

Eq.~\eqref{eq:33bis} follows from \eqref{eq:convNA}, remembering that
$\NN(\A) = {\cal S}(\A)$ and that, for any $C^*$-algebra, ${\cal S}(\A)$
is closed in the weak$^*$ topology.
\end{proof}

The weak$^*$ topology is induced by the distance \cite[Prop.~2.6.15]{BR96}:
\begin{equation}
  \label{eq:31}
  d (R, R') := \|R-R'\|_{\tr}.
\end{equation}
The difference between the weak$^*$ and the metric 
topologies can be seen computing the diameters of the space of
states for the corresponding distances.
\begin{prop}
${\cal S}(\A)$ has infinite diameter for the spectral distance,
diameter $2$ for the metric $d$  inducing the weak$^*$ topology.
\end{prop}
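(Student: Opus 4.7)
The plan is to handle the two claims separately, in both cases exhibiting explicit pure states of the lattice that realize (or approximate) the supremum.

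For the spectral-distance diameter, I would invoke Proposition \ref{propMK}, which yields $d_{\A,D}(\delta_m,\delta_n)=|m-n|$ for any two pure states on the lattice. Since $|m-n|$ is unbounded as $m,n$ range over $\Z$, the diameter of $\mathcal{S}(\A)$ for $d_{\A,D}$ is $+\infty$. Equivalently, by Proposition \ref{conlattice} every state $\varphi\in\mathcal{S}(\A)\setminus{\cal S}_0(\A)$ lies outside $\mathrm{Con}(\delta_0)$ and is therefore at infinite spectral distance from $\delta_0$, which already forces an infinite diameter.

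For the trace-norm diameter, the upper bound $\|R-R'\|_{\tr}\leq 2$ is immediate from the triangle inequality together with the fact that every density matrix has unit trace norm: $\|R-R'\|_{\tr}\leq\|R\|_{\tr}+\|R'\|_{\tr}=2$. To saturate the bound I would take two distinct pure states $\delta_m\ne\delta_n$ of $\A=C_0(\Z)$ and represent each by a rank-one projection onto a unit vector in $\HH=l^2(\Z)\otimes\C^2$. Since the two vectors can be chosen orthogonal (for instance the basis vectors $\ket{m}_+$ and $\ket{n}_+$), the resulting density matrices $R,R'$ have orthogonal supports; hence $R-R'$ has nonzero eigenvalues $+1$ and $-1$ and trace norm exactly $2$, so the diameter for $d$ is $2$.

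No substantive obstacle is anticipated. The only subtlety worth flagging is that a normal state of the lattice algebra is in general represented by several density matrices on $\HH$; this is harmless, since the upper bound is a general trace-norm inequality valid for all representatives, and the lower bound is realized for any pair of mutually orthogonal rank-one density matrices representing two distinct pure states.
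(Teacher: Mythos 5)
Your proof is correct and follows essentially the same route as the paper's: the infinite diameter comes from the unboundedness of $d_{\A,D}(\delta_m,\delta_n)=|m-n|$ (the paper phrases this by first computing $\mathrm{diam}\,\mathcal{S}(\OO_N)=N$ and then invoking Prop.~\ref{conlattice}, but the substance is the same formula from Prop.~\ref{propMK}), and the trace-norm diameter $2$ comes from the triangle inequality $\|R-R'\|_{\tr}\leq 2$ saturated by two distinct pure states with mutually orthogonal density matrices. Your closing remark on the non-uniqueness of the density matrix is a sensible precaution that the paper glosses over, and it causes no problem since every representative of $\delta_m$ is supported on $\ket{m}\otimes\C^2$.
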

\begin{proof}
For all $\varphi,\varphi'\in {\cal S}(\OO_N)$ and $a$ with $L_D(a)
\leq 1$ we have
\begin{equation*}
\varphi(a) - \varphi'(a) = \sum\nolimits_{n,k=0}^N(a_n-a_k)p_np'_k
\leq
\sum\nolimits_{n,k=0}^N|n-k|p_np'_k  \leq
N\sum\nolimits_{n,k=0}^Np_np'_k=N \;,
\end{equation*}
so $d_{\A,D}(\varphi,\varphi')\leq N$. This upper bound is reached by $d_{\A,D}(\Psi_0,\Psi_N)=N$.
Hence ${\cal S}(\OO_N)$ has diameter $N$ for the spectral
  distance, and from (\ref{eq:33}) ${\cal S}(\A)$ has
  infinite diameter.

For all $\varphi,\varphi'\in {\cal S}(\A)$, one has
\begin{equation}
d(\varphi, \varphi') = \sum_n|p_n-p_n'|\leq \sum_n(p_n+p_n')=2.\label{eq:39}
\end{equation}
The upper bound is reach by $\varphi=\Psi_m$, $\varphi= \Psi_n$
with $n\neq m$.
\end{proof}

\section{Pure states and approximation of points}
\label{sec5}
Having studied in the preceding sections the general topological and metric properties of the various
  truncated distances defined in \eqref{eq:metricspaces}, we now come
  back to the initial motivation of this work, that is understanding what happens to the
  short distance behaviour of a classical  (i.e.~commutative) space
  once a momentum cut-off has been implemented, through the
  substitution of $D$ with $D_\Lambda$. 

Specifically, for  $\A=C_0^\infty(M)$ (as
usual $M$ is an orientable, without boundary, Riemannian manifold), we  study 
how the cut-off in the spectrum of $D$ changes
the topology of the pure state space, i.e.~the points of $M$. We first
consider bounded regularization in \S\ref{subsec5.1}, that is $D_\Lambda$ is a bounded operator with norm $\Lambda>0$. We prove that
the distance $d_{\A, D_\Lambda}$ between two distinct pure states cannot be smaller than
$\Lambda^{-1}$, meaning that the
pure state space $\mc{P}(\A)$ with the metric topology induced by
$d_{\A,D_\Lambda}$ is \emph{not} homeomorphic to $M$ (recall that
$\pa\simeq M$ in the weak$^*$ topology). We investigate the case of
finite rank operator $D_\Lambda$ in \S~\ref{sectionfiniterank}, and prove that any two distinct
pure states are at infinite $d_{\A, D_\Lambda}$ distance.

It is then clear that in a spectral geometry with a cut-off,
points must be replaced by states that are not pure.
In \S\ref{sec:5.3} we investigate the regularization $D_\Lambda=P_\Lambda D$ of the Dirac
operator by its spectral projection $P_\Lambda$ on the interval
$[-\Lambda,\Lambda]$, $\Lambda\in \R^+$.  We
individuate a class of states that are at finite distance, namely the orbits under the geodesic flow of
$D$ of any vector states in the range of
$P_\Lambda$. We stress that this
result is valid for any spectral triple, not
necessarily commutative. Applied to the real line,
it allows to work out states that approximate points in the
 weak$^*$ topology, and whose distance $d_{\OO_\Lambda, D_\Lambda}$ is
 exactly the Euclidean one.
Applications to the circle are the object of \S\ref{sec:6.2.2}.

\medskip

To remain as general as possible, we make $\A=C^\infty_0(M)$
act by pointwise multiplication on the Hilbert space $\HH:=L^2(M,E)$
of square integrable sections of an arbitrary smooth vector bundle
$E\to M$ (not necessarily the spinor or the cotangent bundle), so that
\begin{equation}\label{eq:40}
\|f\|= \sup_{x\in M} |f(x)|.
\end{equation}

\subsection{Bounded regularization}\label{subsec5.1}

We consider regularization by a bounded operator $D_\Lambda$
on $\HH$ (not necessarily with finite rank). $[D_\Lambda,f]$ is clearly bounded and
the spectral distance $d_{\A,D_\Lambda}$ is well-defined.
Borrowing the terminology of \cite{Con94}, the line element ``$ds=D_\Lambda^{-1}$'' is no longer an infinitesimal
(because ~$f(D_\Lambda+i)^{-1}$ is no longer compact for any $f$), so it is reasonable to
expect that points can no longer be taken as close as we want. A
minimum length should appear. From a physical point of view, this
means one cannot
probe the space with a resolution better than $\Lambda^{-1}$ \cite{DFR}.

\begin{prop}\label{prop:1}
Let $D_\Lambda$ be a bounded operator with norm $\Lambda>0$. Then for any $x\neq y$,
\begin{equation}\label{eq:dxy}
d_{\A,D_\Lambda}(\delta_x,\delta_y)\geq\Lambda^{-1} \;,
\end{equation}
i.e.~the distance between two points cannot be smaller than the cut-off.
\end{prop}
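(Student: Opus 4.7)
The plan is to use the definition of the spectral distance together with the only quantitative piece of information available about $D_\Lambda$, namely that $\|D_\Lambda\|=\Lambda$. So I would seek an explicit test function $f\in\A^{\mathrm{sa}}$ satisfying $\|[D_\Lambda,f]\|\leq 1$ and $f(x)-f(y)\geq \Lambda^{-1}$, and then conclude from \eqref{eq:spectraldistance} that $d_{\A,D_\Lambda}(\delta_x,\delta_y)\geq\Lambda^{-1}$.

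The key quantitative input is the elementary norm bound
\begin{equation*}
\|[D_\Lambda,f]\|\leq 2\|D_\Lambda\|\,\|f\|=2\Lambda\,\|f\|,
\end{equation*}
which follows from the triangle inequality together with \eqref{eq:40}, since $\A$ acts on $\HH=L^2(M,E)$ by pointwise multiplication and hence $\|f\|=\|f\|_\infty$. This suggests one should try to produce a self-adjoint $f$ whose sup norm equals exactly $1/(2\Lambda)$, so that the commutator bound is saturated at $1$, while $f(x)-f(y)=1/\Lambda$.

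Such an $f$ is easy to build by a standard bump-function construction on the manifold $M$. Since $x\neq y$, pick disjoint open neighbourhoods $U_x, U_y$ and smooth, real, compactly supported bump functions $\phi_x,\phi_y\in C^\infty_c(M)$ with $\phi_x(x)=\phi_y(y)=1$, $0\leq\phi_x,\phi_y\leq 1$, $\mathrm{supp}\,\phi_x\subset U_x$, $\mathrm{supp}\,\phi_y\subset U_y$. Setting
\begin{equation*}
f:=\tfrac{1}{2\Lambda}(\phi_x-\phi_y)\,\in\,C^\infty_0(M)^{\mathrm{sa}},
\end{equation*}
disjointness of supports gives $\|f\|_\infty=1/(2\Lambda)$, while $f(x)=\tfrac{1}{2\Lambda}$ and $f(y)=-\tfrac{1}{2\Lambda}$, so $f(x)-f(y)=1/\Lambda$. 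The norm bound above then yields $\|[D_\Lambda,f]\|\leq 1$, and plugging $f$ into the supremum in \eqref{eq:spectraldistance} delivers the desired inequality.

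There is no real obstacle to overcome: the whole argument rests on the fact that the commutator seminorm is controlled by the sup-norm whenever $D_\Lambda$ is bounded, so one only has to exhibit a ``tall but narrow'' Lipschitz witness. The mild point worth making explicit in the writeup is that the construction of the bump functions is possible on any Riemannian manifold (smooth partitions of unity suffice), so the result holds in full generality without any extra hypothesis on $M$ beyond what is already assumed.
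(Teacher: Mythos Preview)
Your proof is correct and follows essentially the same approach as the paper: bound $\|[D_\Lambda,f]\|\leq 2\Lambda\|f\|_\infty$ via the triangle inequality, then exhibit a real $f\in C^\infty_0(M)$ with $\|f\|_\infty=1/(2\Lambda)$, $f(x)=1/(2\Lambda)$ and $f(y)=-1/(2\Lambda)$. The paper simply asserts the existence of such an $f$, while you spell out the bump-function construction, but the argument is identical.
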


\begin{proof}
From (\ref{eq:40}) one has $\|[D,f]\|\leq \|Df\| + \|fD\| \leq 2\|f\|\Lambda.$
Any $f\in\A$ with maximum $f(x)=1/2\Lambda$ and minimum
$f(y)=-1/2\Lambda$ satisfies $\|[D_\Lambda,f]\|\leq 1$ and yields \eqref{eq:dxy}.
\end{proof}

Although the inequality \eqref{eq:dxy} could be trivial (the distance could be simply infinite for all $x\neq y$),
it is still a remarkable result, for it shows that
the extended metric $d_{\A,D_\Lambda}$ and $d_{\A,D}$ are never
strongly equivalent as soon as $D_\Lambda$
is bounded.

\begin{ex}
  A finite distance can be obtained in case $E=M\times\C^2$ (that is
  $\HH=L^2(M)\oplus L^2(M)$) by taking $D_\Lambda:= \Lambda F$ proportional to the flip
  operator 
\begin{equation}
F(\psi_\uparrow\oplus\psi_\downarrow)=\psi_\downarrow\oplus\psi_\uparrow
\qquad\forall\;\psi_\uparrow,\psi_\downarrow\in L^2(M),
\end{equation}
and making $\A$ acts as
$\pi(f)=f\oplus f(x_0)$
where $x_0\in M$ is a fixed base-point ($f$ acts by pointwise
multiplication on the first factor, and through the irreducible
representation $f\mapsto f(x_0)$ on the second). 
Then $d_{\A,D_\Lambda}$ is the discrete metric
\begin{equation}
 d_{\A,D_\Lambda}(\delta_x,\delta_y)= 2\Lambda^{-1} \quad
 \forall x\neq y.
\label{eq:59}
 \end{equation}
Indeed, for any $f\in\A$ one has $ [F,\pi(f)]=((f(x_0)-f)
  \oplus (f- f(x_0)))F.  $ Since $F$ is unitary, one gets
  $ \|[F,\pi(f)]\|=\|f-f(x_0)\|_\infty
  \,
$
(even though the
  representation is sligthly more involved that the pointwise one, \eqref{eq:40} remains valid). Hence
\begin{equation}
|f(x)-f(y)|\leq |f(x)-f(x_0)|+|f(x_0)-f(y)| \leq 2
\end{equation}
for all $f$ such that $\|[F,\pi(f)]\|\leq 1$. This upper bound is attained by any $f$ with
maximum $f(x)=1$, minimum
$f(y)=-1$ and $f(x_0)=0$.
\hfill\ensuremath{\square}
\end{ex}

\subsection{Finite rank regularization}
\label{sectionfiniterank}
We now consider a finite rank operator $D_\Lambda$.
In case $M$ is compact, it can be obtained as
the truncation
$D_\Lambda=P_\Lambda D P_\Lambda$ of the Dirac
operator of $M$ by the action of one of its spectral
projection $P_\Lambda$ ($D$ has compact resolvent, thus
$P_\Lambda$'s have finite rank).
The following results however are valid for arbitrary $M$ and
arbitrary finite-rank operator $D_\Lambda$.

\begin{lemma}\label{lemma:infinity}
Let $P_0$ be a rank $1$ projection and $\psi_0$ a unit vector in the range of $P_0$.
For any $f=f^*\in\A$ one has
\begin{equation}\label{eq:variance}
\|[P_0,f]\|^2=\inner{f\psi_0,f\psi_0}-\left|\inner{\psi_0,f\psi_0}\right|^2 .
\end{equation}
\end{lemma}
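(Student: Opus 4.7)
The plan is to compute the commutator $[P_0,f]$ explicitly, show that it has rank at most two, and reduce the operator norm to that of an explicit $2\times 2$ matrix. Since $P_0=\ketbra{\psi_0}{\psi_0}$ and $f=f^*$, one has
\begin{equation*}
[P_0,f]=\ketbra{\psi_0}{\psi_0}f-f\ketbra{\psi_0}{\psi_0}
=\ketbra{\psi_0}{f\psi_0}-\ketbra{f\psi_0}{\psi_0},
\end{equation*}
using $\ket{\psi_0}\langle\psi_0|f=\ket{\psi_0}\langle f\psi_0|$ by self-adjointness. The range and cokernel of $[P_0,f]$ therefore lie in $\mathrm{span}\{\psi_0,f\psi_0\}$.

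Next I would introduce an orthogonal decomposition $f\psi_0=\alpha\psi_0+\beta\phi$, where $\alpha=\inner{\psi_0,f\psi_0}\in\R$ (real by self-adjointness of $f$), $\phi$ is a unit vector orthogonal to $\psi_0$, and $\beta\geq 0$ is chosen so that $\beta^{2}=\inner{f\psi_0,f\psi_0}-\alpha^{2}$. The contributions proportional to $\alpha$ in $[P_0,f]$ cancel, and one obtains
\begin{equation*}
[P_0,f]=\beta\bigl(\ketbra{\psi_0}{\phi}-\ketbra{\phi}{\psi_0}\bigr).
\end{equation*}
If $\beta=0$, then $\psi_0$ is an eigenvector of $f$ and both sides of \eqref{eq:variance} vanish, so the identity holds trivially.

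Finally, in the orthonormal basis $\{\psi_0,\phi\}$ of the two-dimensional subspace on which $[P_0,f]$ does not act as zero, the operator is represented by the matrix
\begin{equation*}
\beta\begin{pmatrix}0 & 1\\ -1 & 0\end{pmatrix},
\end{equation*}
which is $\beta$ times a unitary; equivalently, $[P_0,f]^{*}[P_0,f]=\beta^{2}\bigl(\ketbra{\psi_0}{\psi_0}+\ketbra{\phi}{\phi}\bigr)$. Hence $\|[P_0,f]\|=\beta$, and squaring recovers \eqref{eq:variance}.

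I do not expect any real obstacle: the computation is elementary linear algebra on a two-dimensional subspace. The only small subtlety is choosing $\phi$ (with an appropriate phase) so that $\beta$ can be taken real and non-negative, which is automatic once $\phi$ is fixed as the normalization of the component of $f\psi_0$ perpendicular to $\psi_0$.
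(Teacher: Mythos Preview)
Your proof is correct and follows essentially the same route as the paper: the paper's $f_0\psi_0$ and $\psi_1$ are your $\beta\phi$ and $\phi$, and where you read off the norm from the $2\times 2$ antisymmetric matrix, the paper computes $[P_0,f]^2=-\|f_0\psi_0\|_2^2\,\id_V$ on the same two-dimensional subspace. The arguments are the same up to notation.
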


\begin{proof}
For any $f\in\A$, call
$f_0:=f-\inner{\psi_0,f\psi_0}$. 
If $f_0\psi_0\neq 0$, we consider the unit vector 
$
\psi_1:=\|f_0\psi_0\|_2^{-1}\,f_0\psi_0.
$
One easily checks that $\inner{\psi_0,\psi_1}=0$, so that $\{\psi_0,\psi_1\}$ is an orthonormal basis
of a $2$-dimensional vector subspace $V$ of $\HH$.
For any $\eta\in\HH$ and $f=f^*$,
\begin{align}
[P_0,f]\eta &=[P_0,f_0]\eta=
\psi_0\inner{\psi_0,f_0\eta}-f_0\psi_0\inner{\psi_0,\eta} 
\label{eqvar2}
=\|f_0\psi_0\|_2(\psi_0\inner{\psi_1,\eta}-\psi_1\inner{\psi_0,\eta}).
\end{align}
Hence
$
[P_0,f]^2=-\|f_0\psi_0\|_2^2 \,\id_V
$ and 
$\|[P_0,f]\|=\|f_0\psi_0\|_2$.
If $f_0\psi_0=0$, one has $[P_0,f]=0$  from \eqref{eqvar2},  and
the previous equation is trivially true.
The lemma follows by linearity of the inner product.
\end{proof}

$\|[P_0,f]\|^2$ is the variance of the random variable $f$ with respect to the
probability measure with density $|\psi_0|^2$ or, in physicists
language,  the uncertainty
$\Delta f$ of the observable $f$ relative to the vector state $\psi_0$.
When working with the spectral distance, it is not uncommon that the
corresponding seminorm is some kind of standard deviation, as recently
stressed by Rieffel \cite{Rie12}.

\begin{prop}
\label{propfiniterank}
Let $D_\Lambda$ be any selfadjoint finite-rank operator on $\HH$. For any $x\neq y$,
\begin{equation}\label{eq:dxyFR}
d_{\A,D_\Lambda}(\delta_x,\delta_y)=\infty \;.
\end{equation}
\end{prop}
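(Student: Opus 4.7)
The plan is to exhibit, for arbitrarily large $C > 0$, a self-adjoint element $f \in \A^{\mathrm{sa}}$ with $\|[D_\Lambda, f]\| \leq 1$ and $f(x) - f(y) \geq C$. The underlying intuition is that a finite-rank $D_\Lambda$ only ``sees'' finitely many $L^2$ sections $\psi_i$, and these sections carry no mass at individual points; hence functions supported in shrinking neighborhoods of $x$ and $y$ can have arbitrarily large peak values while their commutators with $D_\Lambda$ shrink.

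Concretely, I would first discard the trivial case $D_\Lambda = 0$, then write the spectral decomposition $D_\Lambda = \sum_{i=1}^N \lambda_i P_i$ with $N$ finite and each $P_i$ a rank-one projection onto a unit eigenvector $\psi_i \in L^2(M,E)$ with nonzero eigenvalue $\lambda_i$. For $\epsilon$ smaller than half the Riemannian distance between $x$ and $y$, choose bump functions $\chi_\epsilon, \eta_\epsilon \in C_0^\infty(M)$ supported in the balls $B_\epsilon(x)$ and $B_\epsilon(y)$ respectively, taking values in $[0,1]$ and satisfying $\chi_\epsilon(x) = \eta_\epsilon(y) = 1$. Then $g_\epsilon := \chi_\epsilon - \eta_\epsilon$ lies in $\A^{\mathrm{sa}}$ and satisfies $g_\epsilon(x) - g_\epsilon(y) = 2$.

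Next I would apply Lemma \ref{lemma:infinity} to each projection $P_i$ and the real-valued function $g_\epsilon$, dropping the (nonpositive) expectation term to get
\begin{equation}
\|[P_i, g_\epsilon]\|^2 \;\leq\; \langle g_\epsilon \psi_i, g_\epsilon \psi_i\rangle \;\leq\; \int_{B_\epsilon(x)\cup B_\epsilon(y)} \|\psi_i\|_E^{\,2}\,\de\mu,
\end{equation}
using $|g_\epsilon|\leq 1$ and the disjoint support. By absolute continuity of the Lebesgue integral for the $L^2$ section $\psi_i$, this bound tends to $0$ as $\epsilon \to 0$. Summing over the $N$ eigenvalues then yields $\|[D_\Lambda, g_\epsilon]\|\leq \sum_i |\lambda_i|\,\|[P_i,g_\epsilon]\| \to 0$ --- this is the step where the finiteness of the rank is crucial. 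Setting $f_\epsilon := g_\epsilon / \|[D_\Lambda, g_\epsilon]\|$ (for $\epsilon$ small enough that the denominator is nonzero; otherwise $g_\epsilon$ itself can be rescaled arbitrarily), we get $L_{D_\Lambda}(f_\epsilon) = 1$ and $f_\epsilon(x) - f_\epsilon(y) = 2/\|[D_\Lambda, g_\epsilon]\| \to +\infty$, which forces $d_{\A,D_\Lambda}(\delta_x,\delta_y) = +\infty$.

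The only conceptual hurdle is the commutator estimate; once one realizes that $\|[P_i,\cdot]\|$ is controlled by a variance (Lemma \ref{lemma:infinity}) and that variances with respect to absolutely continuous probability densities vanish on shrinking sets, the argument is essentially a shrinking-bump construction. The finite-rank assumption enters in a single place --- allowing the finite triangle-inequality bound on $\|[D_\Lambda,g_\epsilon]\|$ --- and the argument cleanly fails as soon as $D_\Lambda$ is merely bounded, consistently with the weaker inequality obtained in Prop.~\ref{prop:1}.
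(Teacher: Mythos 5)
Your proposal is correct and follows essentially the same route as the paper: spectral decomposition of $D_\Lambda$ into finitely many rank-one eigenprojections, the bound $\|[P_n,f]\|\leq\|f\psi_n\|$ from Lemma \ref{lemma:infinity}, the finite triangle-inequality sum, and a shrinking bump whose commutator norm vanishes while the value gap stays bounded away from zero. The only cosmetic differences are that the paper uses a single peak at $x$ (with $f(y)=0$) and lets the peak height grow directly, whereas you fix the height and rescale afterwards; both are equally valid.
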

\begin{proof}
Using the spectral decomposition of $D_\Lambda$, i.e.~$D_\Lambda=\sum_{n=1}^{r}\!\lambda_nP_n$  where \mbox{$r:=\mathrm{rk}(D_\Lambda)$}
and $P_n:=\psi_n(\psi_n,.)$ are the rank $1$ eigenprojections of $D$,
one obtains from Lemma \ref{lemma:infinity}
\begin{equation}
\|[D_\Lambda,f]\|\leq\sum_{n=1}^{r}|\lambda_n|\|[P_n,f]\| \leq
\sum_{n=1}^{r} |\lambda_n| \|f\psi_n\|.
\label{eq:22}
\end{equation}
We can always find an open neighbourhood $U$ of $x$ with $y\notin U$,
and a real smooth function $f$ with support in $U$ such that $\|f\psi_n\|$ is
as small as we want for any $n\in[1, r]$ and $f(x)$ is arbitrarily
large (take $f$ with a sufficiently narrow peak around $x$). Hence the result.
\end{proof}

\subsection[Regularization by spectral projection \& geodesic flow]{Regularization by spectral projection and geodesic flow}\label{sec:5.3}

The results of the precedent sections indicate that in order to
  have a reasonable topological space associated with the distance
  $d_{\A, D_\Lambda}$, points must be replaced by states that are not pure. This is
 particularly true for finite rank regularizations, as shown in
 proposition \ref{propfiniterank}.

In this section we consider the regularization $D_\Lambda = P_\Lambda
D P_\Lambda$ of the Dirac
operator by its spectral projections $P_\Lambda$ \cite{Reed1980}, and
work out some non-pure states that i) remain at finite distance , ii)
weakly approximate points in the commutative case.
\medskip

Let $(\A, \HH, D)$ be an arbitrary spectral triple. Given $\psi_0\in\HH$ we write $\psi_t :=U_t\psi$
and 
\begin{equation}
\Psi_t(a)=\inner{\psi_t,a\psi_t} \quad\forall\;a\in\A
\end{equation}
the orbits of the vector $\psi_0$ and of the state $\Psi_0$ under the geodesic
flow of $D$ \cite{Reed1980}:
\begin{equation}
U_t~=~e^{itD}, \quad t\in\R.
\label{eq:42}
\end{equation}

\begin{lemma}
Let $\psi_0\in P_\Lambda\HH$. For all $a\in\B(\HH)$, $\Psi_t(a)$ is a differentiable function of $t$ and
\begin{equation}\label{eq:gf2}
\frac{\de}{\de t}\Psi_t(a)=-i\Psi_t([D_\Lambda,a]) \;.
\end{equation}
\end{lemma}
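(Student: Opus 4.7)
My plan rests on two simple observations. First, since $P_\Lambda$ is a spectral projection of $D$, it commutes with $D$ and therefore with the one-parameter group $U_t=e^{itD}$; hence $P_\Lambda\HH$ is $U_t$-invariant, so $\psi_t\in P_\Lambda\HH$ for every $t\in\R$. Moreover on $P_\Lambda\HH$ the operators $D$ and $D_\Lambda$ agree: for $v\in P_\Lambda\HH$ one has $D_\Lambda v=P_\Lambda D P_\Lambda v=P_\Lambda Dv=Dv$, using $[P_\Lambda,D]=0$. Second, since $\psi_0$ belongs to a bounded spectral subspace of $D$, in particular $\psi_0\in\mathrm{Dom}(D)$, and Stone's theorem guarantees that $t\mapsto\psi_t$ is strongly differentiable with $\frac{\de}{\de t}\psi_t=iD\psi_t=iD_\Lambda\psi_t$.

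With these in hand, the computation is a direct application of the product rule, which is legitimate because $a\in\B(\HH)$ is bounded and $\psi_t$ is strongly differentiable. I would write
\begin{equation*}
\tfrac{\de}{\de t}\Psi_t(a)\;=\;\langle iD\psi_t,a\psi_t\rangle+\langle\psi_t,a\,iD\psi_t\rangle\;=\;-i\,\langle\psi_t,[D,a]\psi_t\rangle,
\end{equation*}
using self-adjointness of $D$ to move the first $D$ to the right-hand side of the inner product.

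The final step is to replace $[D,a]$ by $[D_\Lambda,a]$ inside the expectation against $\psi_t$. Using $D\psi_t=D_\Lambda\psi_t$ (from the first observation) together with self-adjointness of both operators, each of the two summands transforms as
\begin{equation*}
\langle\psi_t,Da\psi_t\rangle\;=\;\langle D\psi_t,a\psi_t\rangle\;=\;\langle D_\Lambda\psi_t,a\psi_t\rangle\;=\;\langle\psi_t,D_\Lambda a\psi_t\rangle,
\end{equation*}
and analogously $\langle\psi_t,aD\psi_t\rangle=\langle\psi_t,aD_\Lambda\psi_t\rangle$. Subtracting gives $\langle\psi_t,[D,a]\psi_t\rangle=\langle\psi_t,[D_\Lambda,a]\psi_t\rangle=\Psi_t([D_\Lambda,a])$, which delivers \eqref{eq:gf2}.

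The main point to watch — though not really an obstacle — is this last substitution: even though $D$ is (typically) unbounded while $D_\Lambda$ is bounded, swapping one for the other inside the inner product is valid precisely because $\psi_t$ lives in the invariant subspace $P_\Lambda\HH$ where the two operators coincide. This is the one place where the hypothesis $\psi_0\in P_\Lambda\HH$ is genuinely used, and it would fail without the assumption that $P_\Lambda$ is a spectral projection.
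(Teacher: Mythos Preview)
Your argument is correct and follows the same idea as the paper: both use that $\psi_t$ stays in $P_\Lambda\HH$ (where $D$ and $D_\Lambda$ coincide, so $\psi_t=e^{itD_\Lambda}\psi_0$) and then differentiate via Stone's theorem; the paper just writes out the difference quotient explicitly rather than citing the product rule. One small caution: your intermediate expression $\langle\psi_t,[D,a]\psi_t\rangle$ is formally problematic, since for arbitrary $a\in\B(\HH)$ the vector $a\psi_t$ need not lie in $\mathrm{Dom}(D)$, so $Da\psi_t$ may be undefined. The clean route---which the paper takes, and which your own ``final step'' effectively contains---is to observe from the outset that $\tfrac{\de}{\de t}\psi_t=iD_\Lambda\psi_t$ and apply the product rule with the \emph{bounded} operator $D_\Lambda$ throughout, obtaining $-i\Psi_t([D_\Lambda,a])$ directly with no domain issues.
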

\begin{proof}
 $P_\Lambda$ is the identity operator
on $P_\Lambda\HH$ so that 
$\psi_t=e^{itD}\psi_0 = e^{itD_\Lambda}\psi_0$ and
\begin{align}
\frac{\de}{\de t}\Psi_t(a) &=
\lim_{\tau\to 0}\inner{\psi_t,\frac{(e^{-i\tau D_\Lambda}ae^{i\tau D_\Lambda}-1)\psi_t}{\tau}} \notag\\
&=\lim_{\tau\to 0}\inner{\psi_t,\frac{\left(e^{-i\tau D_\Lambda}-1\right)ae^{i\tau D_\Lambda}\psi_t}{\tau}+
\frac{a\left(e^{i\tau D_\Lambda}-1\right)\psi_t}{\tau}} \notag\\
&=\inner{\psi_t,-iD_\Lambda a\psi_t}+
\inner{\psi_t,iaD_\Lambda\psi_t}=-i\Psi_t([D_\Lambda,a])
\;,
\end{align}
where we use  $\frac{e^{i\tau D_\Lambda} \psi_t-\psi_t}{\tau}\to iD_\Lambda\psi_t$
and $\underset{\tau\to 0}{\lim}\, \psi_{t+\tau}=\underset{\tau\to
  0}{\lim}  e^{i\tau
  D_\Lambda}\psi_t = \psi_t$ \cite[Theo.~VIII.7]{Reed1980}.
\end{proof}

\begin{prop}\label{prop:6.10}
For any  $\psi$ in the range of $P_\Lambda$, the various spectral
distances introduced so far are all  finite on
  any orbit $\Psi_t$ of the geodesic flow of $D$:
\begin{gather}\label{specprojec1}
d_{\A,D_\Lambda}(\Psi^\sharp_{t_1},\Psi^\sharp_{t_2})\leq |t_1 - t_2|, \\[3pt]
d_{\A,D}^\flat (\Psi_{t_1},\Psi_{t_2}) \leq d_{\OO_\Lambda,D_\Lambda}(\Psi_{t_1},\Psi_{t_2}) \leq |t_1 - t_2|. \label{eq:88}
\end{gather}
\end{prop}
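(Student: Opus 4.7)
My plan is to combine the differential identity proved in the preceding lemma with the inequalities of Proposition~\ref{propineq}.

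First I would exploit the fact that, in this section, $P_\Lambda$ is a spectral projection of $D$, so $[D, P_\Lambda]=0$ and the unitary flow $U_t = e^{itD}$ commutes with $P_\Lambda$. Hence, if $\psi_0 \in P_\Lambda \HH$ then $\psi_t = U_t \psi_0 \in P_\Lambda \HH$ for every $t$, and consequently
$$\Psi_t^\sharp(a) = \inner{\psi_t, P_\Lambda a P_\Lambda \psi_t} = \inner{\psi_t, a\psi_t} = \Psi_t(a) \qquad \forall\, a\in\A,$$
so the two meanings of $\Psi_t$ (as the vector state of $\A$ induced by $\psi_t$, and as the image under $\sharp$ of the corresponding state of $\OO_\Lambda$) coincide. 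Likewise $\Psi_t(b) = \inner{\psi_t, b\psi_t}$ is well-defined for $b \in \OO_\Lambda$ regarded as an operator on $\HH$.

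The core of the argument is then to integrate the identity \eqref{eq:gf2}, obtaining
$$\Psi_{t_2}(a) - \Psi_{t_1}(a) = -i\int_{t_1}^{t_2} \Psi_t([D_\Lambda, a])\, \de t.$$
Since each $\Psi_t$ is a vector state, $|\Psi_t(X)| \leq \|X\|$ for every $X\in\B(\HH)$, so the right-hand side is bounded in absolute value by $|t_1 - t_2|\cdot\|[D_\Lambda, a]\|$. Specializing to $a \in \A^{\mathrm{sa}}$ with $L_\Lambda(a) \leq 1$ and taking the supremum yields \eqref{specprojec1}. The same computation applied to $b \in \OO_\Lambda$ with $L_\Lambda(b) \leq 1$ (the preceding lemma being stated for arbitrary elements of $\B(\HH)$) yields the upper bound $d_{\OO_\Lambda, D_\Lambda}(\Psi_{t_1}, \Psi_{t_2}) \leq |t_1 - t_2|$ in \eqref{eq:88}.

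Finally, the remaining inequality $d_{\A, D}^\flat(\Psi_{t_1}, \Psi_{t_2}) \leq d_{\OO_\Lambda, D_\Lambda}(\Psi_{t_1}, \Psi_{t_2})$ in \eqref{eq:88} is exactly \eqref{eq:onetwo} of Proposition~\ref{propineq}, which applies because $[D, P_\Lambda] = 0$. No step here is genuinely delicate; the only point requiring care is the identification $\Psi_t^\sharp = \Psi_t$ in the opening step, which is precisely where the hypothesis $\psi_0 \in P_\Lambda \HH$ together with $[D, P_\Lambda] = 0$ enters essentially to make the geodesic flow preserve the truncated Hilbert space.
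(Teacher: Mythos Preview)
Your argument is correct and follows essentially the same route as the paper: integrate the identity \eqref{eq:gf2}, bound the integrand by $\|[D_\Lambda,a]\|$ since $\Psi_t$ is a state, and then invoke \eqref{eq:onetwo} of Proposition~\ref{propineq} using $[D,P_\Lambda]=0$. Your explicit verification that $\Psi_t^\sharp=\Psi_t$ (via $\psi_t\in P_\Lambda\HH$) is a welcome clarification that the paper leaves implicit.
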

\begin{proof}
From \eqref{eq:gf2} one has
$
\Psi_{t_1}(a)-\Psi_{t_2}(a)=i\int_{t_1}^{t_2}\Psi_t([D_\Lambda,a])\de t \;.
$
Since $|\Psi_t(A)|\leq\|A\|$ for any bounded operator $A$,
from Jensen's inequality we get
\begin{equation}\label{eq:6.9}
|\Psi_{t_1}(a)-\Psi_{t_2}(a)|\leq\|[D_\Lambda,a]\|\,\bigg|\int_{t_1}^{t_2}\de t\,\bigg|=
\|[D_\Lambda,a]\|\,|t_1- t_2| \;.
\end{equation}
It is valid for all $a\in\B(\HH)$,  proving both $d_{\A,D_\Lambda}(\Psi_{t_1},\Psi_{t_2})\leq |t_1 - t_2|$ 
and $d_{\OO_\Lambda,D_\Lambda}(\Psi_{t_1},\Psi_{t_2}) \leq |t_1 - t_2|$. Since $[P_\Lambda,D]=0$, 
eq.~(\ref{eq:onetwo}) yields $d_{\A,D}^\flat(\Psi_{t_1},\Psi_{t_2}) \leq d_{\OO_\Lambda,D_\Lambda}(\Psi_{t_1},\Psi_{t_2})$.
\end{proof}

We stress that Prop.~\ref{prop:6.10} is true
for arbitrary spectral triples, not necessarily commutative. 
However it is particularly relevant in the commutative case, because $\psi_0$ can be chosen
in such a way that $\Psi_t$ approximates the pure state $\delta_t$. We begin
with the real line and investigate the case of the circle in the next
section.  To
make clear that $t$ is no longer an abstract parameter but a point of space, we switch notation $t\rightarrow x$.

Take
\begin{equation}
\A=C_0^\infty(\R),\quad \HH=L^2(\R),\quad  D=-i\de/\de
x.\label{eq:50}
\end{equation}
Since $\|D_\Lambda\|=\Lambda$, from Prop.~\ref{prop:1} there is a minimum length $\Lambda^{-1}$.
Since $P_\Lambda$ is not of finite rank, Prop.~\ref{propfiniterank} does not apply to this particular example.
Whether $d_{\A,D_\Lambda}(\delta_x,\delta_y)$ is finite or not is still an open problem.

To obtain approximation of points that are at finite distance, we thus
consider the orbit $\left\{\Psi_x =\inner{\psi_x, \cdot
  \psi_x}, x\in\R\right\}$ under the geodesic flow of
$D$ of the state
$\Psi_0=\inner{\psi_0, \cdot\psi_0}$, where $\psi_0$ is a suitably
chosen vector in $P_\Lambda\HH$ as explained in
remark~\ref{rem:approximation} below.
\begin{prop}
\label{prop:realline}
For any $\Lambda$ and $x,y \in \R$ one has
\begin{equation}
  \label{eq:43}
  d_{\A, D_\Lambda}(\Psi_x^\sharp, \Psi_y^\sharp) \leq
  d_{\OO_\Lambda,D_\Lambda}(\Psi_x,\Psi_y)= d_{\A, D}^\flat(\Psi_x, \Psi_y)
  = |x-y| \;.
\end{equation}
\end{prop}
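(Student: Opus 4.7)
The plan is to sandwich the three quantities in (\ref{eq:43}) between $|x-y|$ on both sides. Prop.~\ref{prop:6.10} already gives $d_{\OO_\Lambda,D_\Lambda}(\Psi_x,\Psi_y)\le|x-y|$; since $P_\Lambda$ is a spectral projection of $D$ we have $[D,P_\Lambda]=0$, so (\ref{eq:twothree}) and (\ref{eq:onetwo}) of Prop.~\ref{propineq} transfer this bound to $d_{\A,D_\Lambda}(\Psi_x^\sharp,\Psi_y^\sharp)\le|x-y|$ and $d^\flat_{\A,D}(\Psi_x,\Psi_y)\le|x-y|$ respectively. All that is missing is the matching lower bound $d^\flat_{\A,D}(\Psi_x,\Psi_y)\ge|x-y|$, since once it is in hand the chain
\[
|x-y|\;\le\;d^\flat_{\A,D}(\Psi_x,\Psi_y)\;\le\;d_{\OO_\Lambda,D_\Lambda}(\Psi_x,\Psi_y)\;\le\;|x-y|,
\]
the middle step again being (\ref{eq:onetwo}), collapses to equality.

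To produce that lower bound I would evaluate $\Psi_x^\sharp-\Psi_y^\sharp$ on admissible $1$-Lipschitz test functions. Because $D=-i\partial_z$ generates translations and $P_\Lambda$ commutes with $D$, the vectors $\psi_t=e^{itD}\psi_0=\psi_0(\cdot+t)$ remain in $P_\Lambda\HH$, so $P_\Lambda\psi_t=\psi_t$ and for every $f\in\A=C^\infty_0(\R)$
\[
\Psi_x^\sharp(f)-\Psi_y^\sharp(f)=\int_{\R}\bigl[f(u-x)-f(u-y)\bigr]\,|\psi_0(u)|^2\,du,
\]
while the side condition $\|[D,f]\|\le 1$ reduces to $\|f'\|_\infty\le 1$. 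I would then take $f_R\in C^\infty_0(\R,\R)$ to be a mollification of an odd piecewise-linear tent equal to $z\mapsto z$ on $[-R,R]$ and supported in $[-2R,2R]$: convolution with a non-negative kernel of unit mass preserves the $1$-Lipschitz property, hence $f_R$ is admissible for every $R$. For each fixed $u$, once $R$ exceeds $|u|+\max(|x|,|y|)$ plus the mollifier width, the integrand equals $y-x$ exactly, while for arbitrary $R$ it is dominated by the integrable function $|x-y|\,|\psi_0(u)|^2$. Dominated convergence yields $\Psi_x^\sharp(f_R)-\Psi_y^\sharp(f_R)\to y-x$, and replacing $f_R$ by $-f_R$ covers the opposite sign, giving the desired $d^\flat_{\A,D}(\Psi_x,\Psi_y)\ge|x-y|$.

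The only delicate ingredient is the choice of the $f_R$: since $\psi_0$ is band-limited, $|\psi_0|^2$ has non-trivial tails and cannot be truncated to a compact set, so the argument must rest on $L^1$-integrability plus dominated convergence rather than on a bare truncation of the integration domain. Everything else is book-keeping, and the resulting equality $d_{\OO_\Lambda,D_\Lambda}(\Psi_x,\Psi_y)=|x-y|$ is exactly what is needed to identify the non-pure ``approximate points'' $\Psi_x$ metrically with the Euclidean points $x\in\R$.
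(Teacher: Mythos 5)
Your proof is correct and follows essentially the same route as the paper: upper bounds from Prop.~\ref{prop:6.10} (and Prop.~\ref{propineq}), plus the matching lower bound $d^\flat_{\A,D}(\Psi_x,\Psi_y)\geq|x-y|$ coming from the fact that $\Psi_y^\sharp$ is the translate of $\Psi_x^\sharp$. The only difference is that where the paper simply cites the known result that the Wasserstein distance between translated measures on $\R$ equals the translation amplitude (see \cite{DM09}), you prove it directly with the mollified tent functions $f_R$ and dominated convergence --- a worthwhile precaution here, since $|\psi_0|^2$ is band-limited and need not have a finite first moment, so one cannot simply test against $f(t)=t$.
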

\begin{proof}
The last equality follows noticing that $\Psi_x^\sharp$ is the non-pure state of $C_0^\infty(\R)$ given by
the probability density $|\psi_x|^2$, and $\Psi_y$ is its pull back
under the translation $t\to t+x-y$, namely
\begin{equation}
  \label{eq:61}
  \Psi_x^\sharp(f) = \int_\R f(t) |\psi_x(t)|^2 \de t, \quad \Psi_y^\sharp(f) = \int_\R f(t+x-y) |\psi_x(t)|^2 \de t.
\end{equation}
It is well known that the Wasserstein distance
between translated states on the real line is the amplitude of
translation, that is
$d_{\A,D}^\flat(\Psi_x,\Psi_y)=|x-y|$ (see e.g.~\cite{DM09}). The
thesis then follows from Prop.~\ref{prop:6.10}.
\end{proof}
\smallskip

Viewing the orbit $\R_{\Lambda,\psi_0}:=\{\Psi_x\}_{x\in\R}$
as a ``replica'' of the real line inside
the state space of $C_0(\R)$,  one has that
$(\R_{\Lambda,\psi_0}, d_{\OO_\Lambda,D_\Lambda})$ and
  $(\R^\sharp_{\Lambda,\psi_0}d^\flat_{\A, D})$ (with obvious
  notations) are isometric to
$(\R, |.|)$
for any $\Lambda$ and $\psi_0$.
\begin{rem}
\label{rem:approximation}
  In order that $\Psi_x\to\delta_x$ in the weak$^*$ topology as
  $\Lambda\to\infty$, $\psi_0$ can be taken as the
  Fourier transform of the (normalized) characteristic function of the
  interval $[-\Lambda,\Lambda]$:
\begin{equation}
\psi_0(t)=\frac{1}{2\sqrt{\Lambda\pi}}\int_{-\Lambda}^\Lambda
e^{ipt}\de p=\frac{1}{\sqrt{\Lambda\pi}}\frac{\sin \Lambda t}{t} \;.
\end{equation}
Indeed one then has
\begin{equation}
\Psi_x(f)=\frac{1}{\pi}\int_{-\infty}^\infty \frac{\sin^2
  t}{t^2}\,f(x+\tfrac{t}{\Lambda})\de t \;,
\end{equation}
and clearly $\lim_{\Lambda\to\infty}\Psi_x(f)=f(x)$ for all $f\in\A$
and $x\in\R$.
\end{rem}
 
\subsection{Gromov-Hausdorff convergence on the circle}\label{sec:6.2.2}
We apply some of the previous results to the circle. The spectral triple is
\begin{equation}
\A=C^\infty(\bS^1),\; \HH=L^2(\bS^1,\tfrac{\de x}{2\pi}),\;  D=-i\de/\de x ,
\label{eq:51}
\end{equation}
and we {identify functions on $\bS^1$ with $2\pi$-periodic functions on $\R$.
We use as orthonormal basis of $\HH$ the Fourier modes
\begin{equation}
  \label{eq:52}
  \left\{e_n: x\to e^{in x}, n\in\Z\right\}
\end{equation}
in which $D$ acts as an infinite diagonal matrix, and $f$ as an
infinite matrix with constant diagonals, that is
\begin{equation}\label{eq:98}
\inner{e_n, f e_m}_{L^2}=f_{n-m}
\end{equation}
where $f_n:=\frac{1}{2\pi}\int_{-\pi}^{\pi} f(x)
e^{-inx}dx$ are the Fourier coefficients of $f$ (by
standard Fourier analysis they of rapid decay for $f\in C^\infty(\bS^1)$).

We consider  the regularization by (spectral) projection on the first $N$ positive and negative
Fourier modes. Namely for any $N\in\N$ we write $D_N = P_N D P_N$ where $P_N$ denotes the
projection on
\begin{equation}
\HH_N := \mathrm{Span}\{e_n:|n|\leq N\}.
\end{equation}
The geodesic flow $U_t = e^{t\frac{d}{dx}}$ acts as 
$
U_te_n=e^{int} e_n.
$
Its adjoint action  $\alpha_t(f):~=~U_tfU_t^*$ implements the translation: 
$\alpha_t(f)(x)=f(x+t)$.  
As any autormorphism that preserves the Lipschitz seminorm, $\alpha_t$ is an
  isometry  of the space~of~states.
\begin{lemma}\label{lemma:6.8} Let   $(\A, \HH, D)$  be an arbitrary spectral triple
(not necessarily (\ref{eq:51})). Any automorphism of $\A$ such that $L_{D}(\alpha(a)) = L_D(a) \;\forall a\in \A$ is an
  isometry of the extended metric space $(\mathcal{S}(\A), d_{\A,
    D})$: writing $\alpha^*\varphi := \varphi\circ \alpha$ the pull back of
$\alpha$ on states, one has
  \begin{equation}\label{eq:trasl}
d_{\A, D} (\varphi,\varphi')=d_{\A, D}
(\alpha^*\varphi,\alpha^*\varphi')
\quad \forall \varphi, \varphi'\in{\cal S}(\A).
\end{equation}
In particular, for the spectral triple
(\ref{eq:51}), any translation $\alpha_t,
t\in\R$, and the reflexion 
  $\beta(f)(x):=f(-x)$ are isometries of both $(\sa, d_{\A, D})$
and $(\sa, d_{\A, D_N})$.
\end{lemma}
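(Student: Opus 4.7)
The plan is to handle the two parts of Lemma \ref{lemma:6.8} separately: first the general isometry statement from a direct change of variable in the supremum defining the spectral distance, then verify the specific automorphisms on the circle preserve the relevant Lipschitz seminorms.

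For the general statement, I start from the definition \eqref{eq:spectraldistance} applied to $\alpha^*\varphi,\alpha^*\varphi'$:
\begin{equation*}
d_{\A,D}(\alpha^*\varphi,\alpha^*\varphi') = \sup_{a\in\A^{\mathrm{sa}}}\big\{\varphi(\alpha(a))-\varphi'(\alpha(a)) \,:\, L_D(a)\leq 1\big\}.
\end{equation*}
Since $\alpha$ is a $*$-automorphism, it restricts to a bijection of $\A^{\mathrm{sa}}$, so the change of variable $b=\alpha(a)$ is legitimate. Moreover, applying the hypothesis $L_D\circ\alpha=L_D$ also to $\alpha^{-1}$ (which is again a $*$-automorphism satisfying the same invariance because $L_D(\alpha^{-1}(b))=L_D(\alpha(\alpha^{-1}(b)))=L_D(b)$), the unit ball $\{L_D\leq 1\}\cap\A^{\mathrm{sa}}$ is stable under $\alpha$. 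The supremum above therefore coincides with $\sup_{b}\{\varphi(b)-\varphi'(b):L_D(b)\leq 1\}=d_{\A,D}(\varphi,\varphi')$, which proves \eqref{eq:trasl}.

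For the circle examples, I reduce everything to the fact that $\alpha_t$ and $\beta$ are implemented by unitaries on $\HH=L^2(\bS^1)$ whose adjoint action on $D$ (and on $D_N$) is either trivial or a sign flip. Concretely, $\alpha_t=\mathrm{Ad}(U_t)$ with $U_t=e^{itD}$, so $[U_t,D]=0$ and hence $[D,\alpha_t(f)]=U_t[D,f]U_t^*$, giving $L_D(\alpha_t(f))=L_D(f)$ by unitarity. Since $U_t e_n=e^{int}e_n$, we also have $[U_t,P_N]=0$, so $[D_N,\alpha_t(f)]=U_t[D_N,f]U_t^*$ and $L_{D_N}(\alpha_t(f))=L_{D_N}(f)$. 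For the reflection, let $V\psi(x):=\psi(-x)$; a direct check gives $\beta=\mathrm{Ad}(V)$ and $VDV^{-1}=-D$. Moreover $V e_n=e_{-n}$ so $V$ preserves $\HH_N$, i.e.\ $[V,P_N]=0$, and consequently $VD_NV^{-1}=-D_N$. Hence $[D,\beta(f)]=-V[D,f]V^{-1}$ and $[D_N,\beta(f)]=-V[D_N,f]V^{-1}$, whose norms agree with $L_D(f)$ and $L_{D_N}(f)$ respectively. Applying the first part of the lemma to $(\A,L_D)$ and $(\A,L_{D_N})$ gives the isometry property on both $(\sa,d_{\A,D})$ and $(\sa,d_{\A,D_N})$.

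The only subtlety worth flagging is ensuring $\alpha_t$ and $\beta$ really are $*$-automorphisms of $\A=C^\infty(\bS^1)$ (immediate since they are pullbacks by smooth diffeomorphisms of $\bS^1$) and that the definition of $d_{\A,D_N}$ used here is the one from \eqref{eq:spectraldistance} restricted to $\A$ rather than the order-unit variant on $\OO_N$; the generic argument of part (i) applies verbatim to any Lipschitz-type seminorm, so no further care is needed. No step is truly hard: the main point to pin down carefully is the commutation $[V,P_N]=0$, which follows once one records that $V$ permutes the Fourier basis by $n\mapsto -n$.
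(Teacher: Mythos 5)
Your proof is correct and follows essentially the same route as the paper's: the general isometry statement via the change of variable $b=\alpha(a)$ in the supremum together with the $\alpha$-stability of the Lipschitz unit ball, and the circle cases via the unitary implementations $U_t=e^{itD}$ (commuting with $D$ and $D_N$) and the reflection unitary $e_n\mapsto e_{-n}$ (anticommuting with $D$ and $D_N$). Your version is merely slightly more explicit than the paper's, e.g.\ in checking stability of the unit ball under both $\alpha$ and $\alpha^{-1}$ and in recording $[V,P_N]=0$.
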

\begin{proof} Eq.~\eqref{eq:trasl} follows from
  \begin{equation}
    \label{eq:11}
 \sup_{a\in\text{Lip}_D(\A)} \alpha^*\varphi(a) - \alpha^*\varphi'(a) =
   \sup_{\alpha(b) \in\text{Lip}_D(\A)} \varphi(b)- \varphi'(b) = \sup_{b \in\text{Lip}_D(\A)} \varphi(b)- \varphi'(b).
  \end{equation}
For any $t$ the unitary operator $U_t$ commutes with $D$ and
$D_N$, hence $L_D(\alpha_t(a))
= L_D(a)$ for any $a$ and similarly for $L_N$.
The reflexion $\beta$ is implemented by the adjoint action of the
  unitary operator $Ce_n:=e_{-n}$. Since $C$ anticommutes with $D$ and
  $D_N$, one has  $[D,\beta(a)]=
-C[D,a]C$ so that  $L_D(\beta(a)) = L_D(a)$, and similarly for $L_N$. \end{proof}

As approximation of the point $x\in[-\pi, \pi[$, we consider the vector state
$\Psi_{x,N}\in\mathcal{S}(\OO_N)$ defined by the vector in $\HH_N$:
\begin{equation}
\psi_{x,N}:=\frac{1}{\sqrt{N+1}}\sum_{n=0}^Ne^{-inx} e_n.
\label{eq:84}
\end{equation}
\begin{lemma}
\label{lemma5.9} For any $f$ in $C^\infty(\bS^1)$, $\Psi^\sharp_{x,N}(f)$ is the Fej\'er transform of $f$:
  \begin{equation}
\Psi_{x,N}^\sharp (f) 
 = \sum_{n=-N}^N \left(1 - \frac{|n|}{N+1}\right) f_ne^{in x}
=\frac 1{2\pi}\int_{-\pi}^\pi f(t) F_{N+1}(x-t) dt
\label{eq:60}
  \end{equation}
where $F_N(t):=\frac 1N\left( \frac{\sin(Nt/2)}{\sin(t/2)}\right)\!\rule{0pt}{10pt}^2$ is the Fej\'er kernel. 
For any $x\in\bS^1$, the sequence of non-pure states $\left\{\Psi^\sharp_{x,N}\right\}_{N\in\N}$ converges to the pure state $\delta_x$: 
\begin{equation}
\lim_{N\to\infty}\Psi^\sharp_{x,N}(f)=f(x) \qquad \forall f\in C^\infty(\bS^1).
\label{eq:81}
\end{equation}
\end{lemma}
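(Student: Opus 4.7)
The proof is essentially a direct computation followed by a citation of standard Fourier analysis. The plan is to first establish the two equivalent expressions for $\Psi^\sharp_{x,N}(f)$ as a matrix element, then recognize the Fej\'er kernel, and finally invoke the approximate identity property.

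The first step is to unfold the definition: by Lemma~\ref{prop:3.6} and the vector state formula, one has $\Psi^\sharp_{x,N}(f) = \langle\psi_{x,N}, f\psi_{x,N}\rangle$. Plugging in \eqref{eq:84} and using \eqref{eq:98}, this reads
\begin{equation*}
\Psi^\sharp_{x,N}(f) = \frac{1}{N+1}\sum_{n,m=0}^N e^{i(n-m)x}\,f_{n-m}.
\end{equation*}
Setting $k = n-m$, for each $k\in\{-N,\ldots,N\}$ the number of index pairs $(n,m)\in\{0,\ldots,N\}^2$ with $n-m=k$ is exactly $N+1-|k|$, which yields the first equality in \eqref{eq:60}. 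For the second equality, I would substitute the Fourier expansion $f_n = \frac{1}{2\pi}\int_{-\pi}^\pi f(t) e^{-int}dt$ and exchange summation with integration (legitimate because the sum is finite), obtaining
\begin{equation*}
\Psi^\sharp_{x,N}(f) = \frac{1}{2\pi}\int_{-\pi}^\pi f(t)\sum_{n=-N}^N \Big(1-\tfrac{|n|}{N+1}\Big)e^{in(x-t)}\,dt,
\end{equation*}
and then recognize the inner sum as the standard Fej\'er kernel $F_{N+1}(x-t)$, whose closed form $\frac{1}{N+1}\big(\sin((N+1)u/2)/\sin(u/2)\big)^2$ follows from the geometric-series identity for the Dirichlet kernels and a Ces\`aro average.

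For the convergence \eqref{eq:81}, the key facts are the classical properties of the Fej\'er kernel: $F_{N+1}\geq 0$, $\frac{1}{2\pi}\int_{-\pi}^\pi F_{N+1}(u)\,du = 1$, and for every $\delta>0$, $F_{N+1}(u)\to 0$ uniformly for $\delta\leq |u|\leq\pi$. These three properties make $\{F_{N+1}\}$ an approximate identity on $\bS^1$, so by a standard argument (splitting the convolution integral into $|x-t|<\delta$ and $|x-t|\geq\delta$ and using continuity of $f$ at $x$), the convolution $\frac{1}{2\pi}\int f(t) F_{N+1}(x-t)\,dt$ converges to $f(x)$, in fact uniformly in $x$ since $f\in C^\infty(\bS^1)$ is uniformly continuous.

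There is no genuine obstacle here: the only bookkeeping point to be careful about is the counting of pairs $(n,m)$ with fixed difference in the first step, and the identification of the Ces\`aro-type sum with the Fej\'er kernel in the second. Both are standard and the convergence is the textbook theorem of Fej\'er.
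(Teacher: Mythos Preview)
Your proof is correct and follows essentially the same route as the paper: expand $\langle\psi_{x,N},f\psi_{x,N}\rangle$ via \eqref{eq:84} and \eqref{eq:98}, count the $N+1-|k|$ pairs with fixed difference to obtain the triangular coefficients, identify the result with the Fej\'er kernel, and invoke Fej\'er's theorem for the convergence. The only cosmetic difference is that the paper passes through the Ces\`aro--sum identity $\frac{1}{N+1}\sum_{n=0}^N S_n = \sum_{|k|\leq N}(1-\tfrac{|k|}{N+1})$ to recognise the Fej\'er transform, whereas you substitute the integral formula for $f_n$ directly; both are standard and equivalent.
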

\begin{proof}
By \eqref{eq:98} and \eqref{eq:84} one has
\begin{align*}
\Psi_{x,N}^\sharp(f) &=\inner{\psi_{x,N}, f \psi_{x,N}} =
\frac{1}{N+1}\sum_{n,m=0}^Ne^{i(m-n)x}\inner{e_m,f e_n} 
= \frac{1}{N+1} \sum_{m,n=0}^N e^{i(m-n)x} f_{m-n} \,.
\end{align*}
With some combinatorics, one obtains the r.h.s.~of \eqref{eq:60}:
\begin{multline*}
(N+1)\Psi_{x,N}^\sharp(f)  =\sum_{k=-N}^N f_ke^{ik x }\sum_{\substack{n=0,\ldots,N\\ 0\leq k+n\leq N}} 1
 =\sum_{k=-N}^N f_ke^{ik x }\sum_{n=\max(0,-k)}^{\min(N,N-k)}1
 \\
=\sum_{k=-N}^N  f_ke^{ik x }(\min(N,N-k)-\max(0,-k)+1) =\sum_{u=-N}^N (N+1-|k|)f_ke^{ik x }.
\end{multline*}
Recall that, by induction, the Ces\`aro
sum of a sequence $\left\{a_k\right\}_{k\in\Z}$ is
\begin{equation}
  \label{eq:112}
  \frac 1{N+1} \sum_{n=0}^{N} S_n(a_k) = S_N\left( 1- \frac{|k|}{N+1}a_k\right)
\end{equation}
where $S_n(a_k):= \sum_{k=-n}^n a_k$. Therefore
\begin{equation}
\Psi_{x,N}^\sharp(f) = \frac 1{N+1} \sum_{n=0}^{N} S_n(f_k e^{ikx}).
\label{eq:113}
\end{equation}
This is precisely the Fej\'er transform of $f$, whose integral formula is
given by the second term in \eqref{eq:60} (see e.g.~\cite{various:2008ve}).
 By  Fej\'er theorem, the Fej\'er transforms uniformly converges to $f$ as $N\to\infty$. Thus in particular
$\Psi_{x,N}$ converges to $\delta_x$.
\end{proof}

Since $\psi_{x,N}=U_{-x}\psi_{0,N}$, one has that
\begin{equation}
\bS^1_{N}=\big\{\Psi_{x, N}^\sharp: x\in \bS^1\big\}
\label{eq:75}
\end{equation}
is the orbit of $\Psi_{0,N}$ under the geodesic flow. On the real line, both $d_{\OO_\Lambda, D_\Lambda}$ and
$d^\flat_{\A, D}$  on similar orbits coincides with the geodesic distance (Prop.~\ref{prop:realline}). For the
circle the same is true only in the $N\to\infty$ limit. 

\begin{prop}
\label{lemma:dgeo}
i) For all $x,y\in \bS^1$, one has
\begin{equation}\label{eq:ineqcircle}
d_{\A,D}^\flat(\Psi_{x,N},\Psi_{y,N}) \leq d_{\OO_N,D_N}(\Psi_{x,N},\Psi_{y,N}) \leq d_{\mathrm{geo}}(x,y),
\end{equation}
and
\begin{equation}\label{eq:limitcircle}
\lim_{N\to\infty}d_{\A,D}^\flat(\Psi_{x,N},\Psi_{y,N})
=\lim_{N\to\infty}d_{\OO_N,D_N}(\Psi_{x,N}, \Psi_{y,N})=d_{\mathrm{geo}}(x,y).
\end{equation}
ii) A lower bound for $d^\flat_{\A, D}$ and an alternative upper bound are provided by
\begin{equation}
\rho_N( x - y ) \leq d^\flat_{\A,D}(\Psi_{ x ,N},\Psi_{ y ,N})\leq \rho'_N( x - y )
\label{eq:64}
\end{equation}
where
\begin{align}
\label{defrho}
\rho_N(x) &:=
\frac{8}{\pi}\sum_{\substack{1\leq n\leq N\\[1pt] n\;\mathrm{odd}}}\frac{(-1)^{(n-1)/2}}{n^2}\left(1-\frac{n}{N+1}\right)\sin\frac{nx}{2} \;,\\
\rho'_N(x) &:=
2\sqrt 2 \sqrt{\sum_{1\leq n\leq N}   \frac 1{n^2} \left(1-\frac{n}{N+1}\right)^2 \left(\sin \frac{nx}{2}\right)^2} \;.
\label{eq:65}
\end{align}
\end{prop}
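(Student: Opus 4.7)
The plan is to prove part ii) first and then deduce part i). The two inequalities in \eqref{eq:ineqcircle} are preliminary: $d^\flat_{\A,D}\leq d_{\OO_N,D_N}$ is Proposition~\ref{propineq} eq.~\eqref{eq:onetwo}, applicable because $P_N$ is a spectral projection of $D$ and so $[D,P_N]=0$; and $d_{\OO_N,D_N}(\Psi_{x,N},\Psi_{y,N})\leq d_{\mathrm{geo}}(x,y)$ follows from the geodesic-flow bound \eqref{eq:88} in Proposition~\ref{prop:6.10}, applied to the orbit $\{\Psi_{x,N}\}_{x\in\bS^1}$ of $\Psi_{0,N}$. That bound only yields $\leq |x-y|$ for $x,y\in\R$, but $U_{2\pi}=\mathrm{id}$ on every Fourier mode, so $\psi_{x,N}$ is $2\pi$-periodic in $x$ and one may replace $|x-y|$ by $\min_{k\in\Z}|x-y+2\pi k|=d_{\mathrm{geo}}(x,y)$.

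For the upper bound $\rho'_N$ of part ii), I would start from the Fourier formula
\begin{equation*}
\Psi^\sharp_{x,N}(f)-\Psi^\sharp_{y,N}(f)=\sum_{1\leq|n|\leq N}\left(1-\frac{|n|}{N+1}\right)f_n\bigl(e^{inx}-e^{iny}\bigr),
\end{equation*}
extracted from Lemma~\ref{lemma5.9}, and split the summand as $(nf_n)\cdot n^{-1}(1-|n|/(N+1))(e^{inx}-e^{iny})$. Cauchy--Schwarz then bounds $|\Psi^\sharp_{x,N}(f)-\Psi^\sharp_{y,N}(f)|^2$ by $\bigl(\sum_{n}n^2|f_n|^2\bigr)$ times $\bigl(\sum_{n}n^{-2}(1-|n|/(N+1))^2|e^{inx}-e^{iny}|^2\bigr)$. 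The first factor equals $\|f'\|_2^2$ and is bounded by $L_D(f)^2\leq 1$ via Parseval; the second, using $|e^{inx}-e^{iny}|=2|\sin(n(x-y)/2)|$ and the symmetry $n\leftrightarrow -n$, is precisely $\rho'_N(x-y)^2$.

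For the lower bound $\rho_N$, the test function is the $2\pi$-periodic tent $f_0(t):=d_{\mathrm{geo}}(t,y_0)$ with $y_0:=\tfrac{1}{2}(x+y)-\tfrac{\pi}{2}$. This function is $1$-Lipschitz hence $L_D(f_0)\leq 1$; its lack of smoothness is harmless, because the sup in \eqref{eq:spectraldistance} over $C^\infty$ coincides with the one over the Lipschitz completion via a standard mollification that does not increase $\|f'\|_\infty$. A direct Fourier computation gives $(f_0)_n=-\tfrac{2}{\pi n^2}e^{-iny_0}$ for odd $n$ and zero for even $n\neq 0$. Substituted into the Fourier formula above, the combinations $e^{inx}-e^{iny}=2i\,e^{in(x+y)/2}\sin(n(x-y)/2)$ produce a factor $\sin(n((x+y)/2-y_0))$, and the choice of $y_0$ is exactly the one making this equal to $(-1)^{(n-1)/2}$ for every odd $n$. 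Collecting yields $\Psi^\sharp_{x,N}(f_0)-\Psi^\sharp_{y,N}(f_0)=\rho_N(x-y)$ on the nose.

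Finally, for the limit \eqref{eq:limitcircle}, observe that $\rho_N(x-y)$ is, by construction, the value at $x$ minus the value at $y$ of the $N$-th Fej\'er transform of the continuous function $f_0$. By Fej\'er's theorem this transform converges uniformly to $f_0$, while $f_0(x)-f_0(y)=d_{\mathrm{geo}}(x,y)$ by the choice of $y_0$; hence $\rho_N(x-y)\to d_{\mathrm{geo}}(x,y)$. The sandwich $\rho_N\leq d^\flat_{\A,D}\leq d_{\OO_N,D_N}\leq d_{\mathrm{geo}}$ then delivers \eqref{eq:limitcircle}. The main subtlety lies in the lower bound: one must exhibit a single test function whose Fourier coefficients reproduce the alternating signs $(-1)^{(n-1)/2}$ appearing in $\rho_N$ simultaneously for all odd $n\leq N$, and this is precisely what pins $y_0=\tfrac{1}{2}(x+y)-\tfrac{\pi}{2}$; everything else reduces to routine Fourier bookkeeping.
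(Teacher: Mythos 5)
Your argument is correct, and for most of the statement it coincides with the paper's: the chain \eqref{eq:ineqcircle} is obtained from Prop.~\ref{prop:6.10} (equivalently, from \eqref{eq:onetwo} with $[D,P_N]=0$) together with the observation that the flow parameter can be shifted by $2\pi k$ — the paper achieves this by reducing to $y=0$, $0<x\leq\pi$ via the isometries of Lemma~\ref{lemma:6.8} rather than by your periodicity remark, but these are the same point; the upper bound $\rho'_N$ is the same Cauchy--Schwarz--plus--Parseval estimate on the Fourier expression for $\Psi^\sharp_{x,N}(f)-\Psi^\sharp_{y,N}(f)$; and the lower bound $\rho_N$ is produced by the same test function, the geodesic-distance tent translated by $\pi/2$ from the midpoint of $x$ and $y$ (your identification of $y_0=\tfrac12(x+y)-\tfrac\pi2$ as the unique center producing the signs $(-1)^{(n-1)/2}$ matches the paper's "translate $|t|$ by $\pi/2$"). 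You are in fact slightly more careful than the paper in justifying the use of the non-smooth tent function via mollification. The one genuinely different step is the limit \eqref{eq:limitcircle}: the paper invokes the fact that $(C^\infty(\bS^1)^{\mathrm{sa}},L_D)$ is a compact quantum metric space, so that the weak$^*$ convergence $\Psi^\sharp_{x,N}\to\delta_x$ of Lemma~\ref{lemma5.9} upgrades to metric convergence and hence to convergence of the distances; you instead squeeze $d^\flat_{\A,D}$ between $\rho_N(x-y)$ and $d_{\mathrm{geo}}(x,y)$ and show $\rho_N(x-y)\to d_{\mathrm{geo}}(x,y)$ directly, since $\rho_N(x-y)$ is the difference of the Fej\'er transform of the tent function at $x$ and at $y$ and Fej\'er's theorem gives uniform convergence. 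Your route is more elementary (it bypasses Rieffel's equivalence of topologies entirely and makes part i) a corollary of part ii)), at the small cost of an orientation check: $f_0(x)-f_0(y)=d_{\mathrm{geo}}(x,y)$ only for $x-y\in(0,2\pi)$, the other sign being handled by symmetry of the distance — the same reduction the paper performs at the outset.
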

\begin{proof}
Since $\alpha^*_t\Psi_{x,N}=\Psi_{x+t,N}$ and $\beta^*\Psi_{x,N}=\Psi_{-x,N}$, by Lemma \ref{lemma:6.8} it is enough to
prove the proposition for $y=0$ and $0<x\leq\pi$.

i) Eq.~\eqref{eq:ineqcircle} follows from Prop.~\ref{prop:6.10} noticing that for $x\in [0,\pi]$
the geodesic distance is $d_{\mathrm{geo}}(x,y)=|x-y|$.
Since $(\A^{\mathrm{sa}},L_D)$ is a compact quantum metric space (see e.g.~the introduction of \cite{Rieffel:1999ec}), the weak$^*$ topology and metric topology of $d_{\A,D}$ coincide. By Lemma \ref{lemma5.9} one has $\Psi_{x,N}\to\delta_x$ and so $\lim_{N\to\infty}d_{\A,D}(\Psi_{x,N},\Psi_{y,N}) = d_{\mathrm{geo}}(x,y)$,
proving \eqref{eq:limitcircle}.

ii) Note that $d^\flat_{\A,D}(\Psi_{ x ,N},\Psi_{ 0 ,N})=d^\flat_{\A,D}(\Psi_{x/2,N},\Psi_{-x/2,N})$,
that by \eqref{eq:60} is the sup of
\begin{equation}\label{eqrhoprime}
\Psi_{x/2,N}^\sharp(f)-\Psi_{-x/2,N}^\sharp(f)=
-4\sum_{1\leq n\leq N} \left(1 - \frac{n}{N+1}\right)\, \mathcal{I}\mathrm{m}(f_n)\, \sin(\tfrac{n}{2}x) \;,
\end{equation}
over real $1$-Lipschitz functions $f$ (we used $f_{-n} = f^*_n$).

The periodic $1$-Lipschitz function defined by $f(t)=|t|$ for $t\in [-\pi,\pi]$ (so, the geodesic distance of the circle)
has Fourier coefficients $f_0 = \frac{\pi}{2}$, $f_n = f_{-n} = -\frac{2}{n^2\pi}$ for $n$
odd, and $f_n=0$ for even $n\neq 0$. Translating this function by $\pi/2$ amounts to rescaling $f_n$ by $e^{in\pi/2}$,
which is equal to $i(-1)^{(n-1)/2}$ for $n$ odd and gives the lower bound \eqref{defrho}.

On the other hand, from \eqref{eqrhoprime}, for any $1$-Lipschitz $f$:
\begin{equation}
\Psi_{x/2,N}^\sharp(f)-\Psi_{-x/2,N}^\sharp(f)
\leq 4\sum_{1\leq n\leq N} \left(1 -\frac{n}{N+1}\right) |f_n\sin(\tfrac{n}{2}x)|\leq 2\sqrt 2 \sqrt{\sum\nolimits_{n=1}^N C_n^2}
\end{equation}
where
\begin{equation}
C_n: = \frac 1n \left(1 -\frac{n}{N+1}\right)\sin\frac{nx}{2}
\end{equation}
and we used Cauchy-Schwarz inequality together with Parseval identity:
\begin{equation}
2\sum_{n=1}^N|nf_n|^2=\sum_{n=-N}^N|nf_n|^2\leq\sum_{n=-\infty}^\infty |nf_n|^2=\frac 1{2\pi}\int_{-\pi}^{\pi}
  |f'(t)|^2 dt \leq 1.
\end{equation}
This proves \eqref{eq:65}.
\end{proof}

\medskip

\begin{figure}[t]
\begin{center}

\includegraphics[width=7cm]{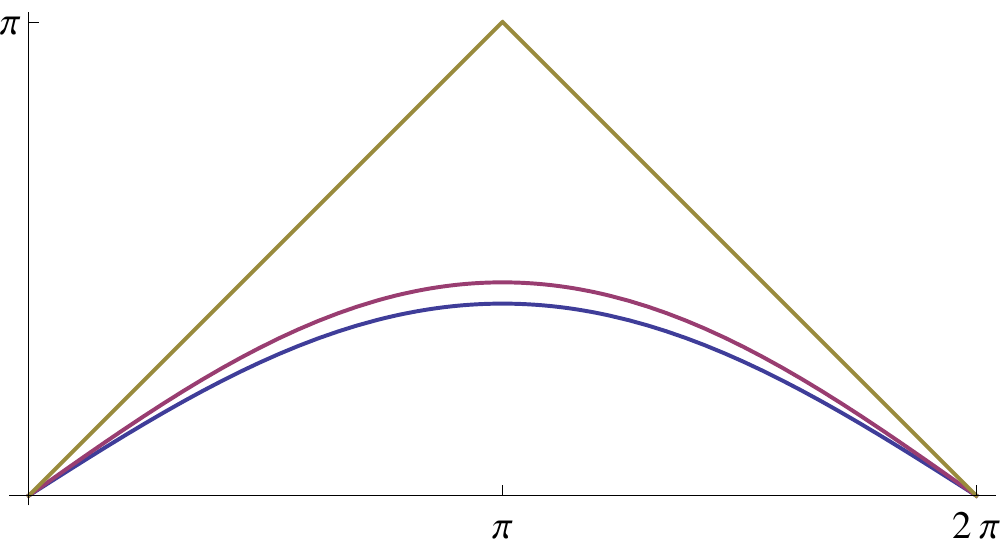}\hspace{10pt}
\includegraphics[width=7cm]{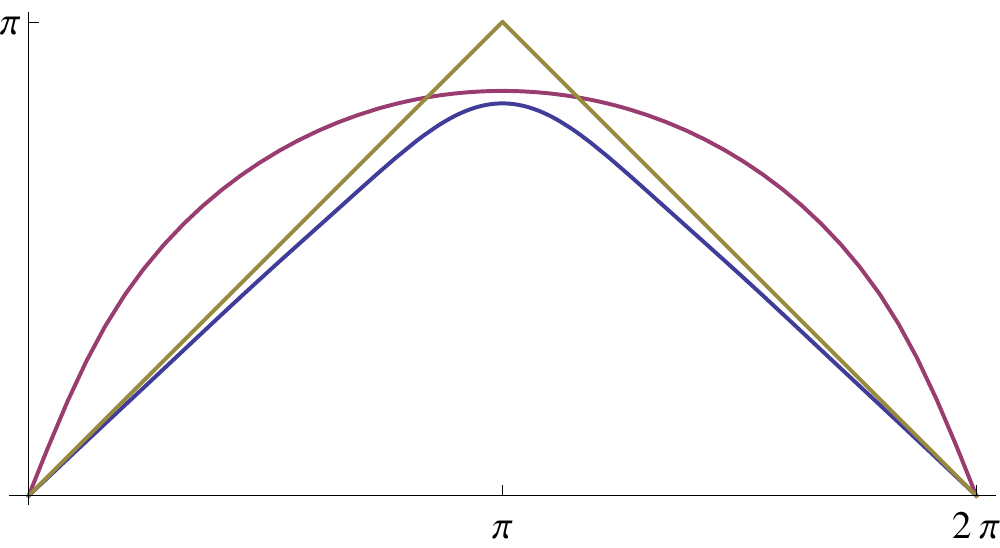}

\parbox{12cm}{\caption{\sl The lower and upper bounds $\rho_N(x),
\rho'_N(x)$. For $N=1$ (on the left), $\rho_N'(x)$ is a better
bound for $d_{\A, D}(\Psi_{x,N}, \Psi_{0,N})$ than $|x|$ (straight line) for
any $x$. For $N\geq 1$ (e.g.~$N=5$ on the right side), $\rho_N'(x)$ is a better
bound than $|x|$ only for $x$ near $\pi$.}\label{figurerho}}

\vspace{-5mm}

\end{center}
\end{figure}

Eq.~\eqref{eq:64} shows that at fixed $N$, none of truncated distances
actually equals the geodesic distance (see Figure~\ref{figurerho}). Unlike the real line, 
the orbit $\bS^1_N$ is not a replica of the
circle with its geodesic distance. However the sequence $\left\{\bS^1_N\right\}_{N\in\N}$ converges to it in the
Gromov-Hausdorff distance.

\begin{prop} 
\label{prop:circle}
$(\bS^1_N,d_{\A,D})$ converges to $(\bS^1, d_{\text{geo}})$ for the Gromov-Hausdorff distance.
\end{prop}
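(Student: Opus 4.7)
The strategy is to realize both metric spaces as isometrically embedded subsets of the common ambient space $(\sa, d_{\A,D})$ and then bound the Gromov-Hausdorff distance by the Hausdorff distance in this ambient. Concretely, $\bS^1_N$ is by construction a subset of $\sa$ (the orbit of $\Psi_{0,N}^\sharp$ under the geodesic flow), while the map $\iota:\bS^1\to\sa$, $x\mapsto\delta_x$, is an isometry when $\bS^1$ is equipped with $d_{\mathrm{geo}}$ and $\sa$ with $d_{\A,D}$, by the standard fact that the spectral distance on pure states of a commutative spectral triple coincides with the geodesic distance. Hence
\begin{equation*}
d_{GH}\bigl((\bS^1_N, d_{\A,D}),(\bS^1,d_{\mathrm{geo}})\bigr)\leq d_H\bigl(\bS^1_N,\iota(\bS^1)\bigr),
\end{equation*}
where the Hausdorff distance is computed in $(\sa,d_{\A,D})$.

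Next I would use the translation symmetry to show that the pointwise distance $d_{\A,D}(\Psi_{x,N}^\sharp,\delta_x)$ does not depend on $x$. Indeed, one checks that $\Psi_{x,N}^\sharp=\alpha_x^*\Psi_{0,N}^\sharp$ and $\delta_x=\alpha_x^*\delta_0$, so by the isometry property of translations (Lemma~\ref{lemma:6.8}),
\begin{equation*}
d_{\A,D}(\Psi_{x,N}^\sharp,\delta_x)=d_{\A,D}(\Psi_{0,N}^\sharp,\delta_0)\qquad\forall\,x\in\bS^1.
\end{equation*}
Consequently every point of $\bS^1_N$ is at distance exactly $d_{\A,D}(\Psi_{0,N}^\sharp,\delta_0)$ from the corresponding point of $\iota(\bS^1)$, and vice versa. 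This yields
\begin{equation*}
d_H\bigl(\bS^1_N,\iota(\bS^1)\bigr)\leq d_{\A,D}(\Psi_{0,N}^\sharp,\delta_0).
\end{equation*}

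It remains to show that this quantity tends to $0$ as $N\to\infty$. This is the place where the compact quantum metric space structure enters: as recalled in the proof of Prop.~\ref{lemma:dgeo}, $(\A^{\mathrm{sa}},L_D)$ is a compact quantum metric space, so the metric topology induced by $d_{\A,D}$ on $\sa$ coincides with the weak$^*$ topology. By Lemma~\ref{lemma5.9}, the Fej\'er transforms $\Psi_{0,N}^\sharp$ converge to $\delta_0$ in the weak$^*$ topology, hence also in the metric topology, so $d_{\A,D}(\Psi_{0,N}^\sharp,\delta_0)\to 0$. Combining the three displayed inequalities gives $d_{GH}\bigl((\bS^1_N,d_{\A,D}),(\bS^1,d_{\mathrm{geo}})\bigr)\to 0$, as required.

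The only mildly subtle point is the translation-invariance argument, which requires recognizing $\Psi_{x,N}^\sharp$ as a genuine pullback of $\Psi_{0,N}^\sharp$ under $\alpha_x$ so that Lemma~\ref{lemma:6.8} applies; everything else is a direct assembly of the results already established.
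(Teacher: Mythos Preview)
Your proof is correct and in fact considerably more direct than the paper's. You bound the Gromov--Hausdorff distance by the Hausdorff distance inside $(\sa,d_{\A,D})$ and then use translation invariance (Lemma~\ref{lemma:6.8}) to reduce everything to the single quantity $d_{\A,D}(\Psi_{0,N}^\sharp,\delta_0)$, which tends to zero by the compact-quantum-metric-space argument combined with Lemma~\ref{lemma5.9}.

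The paper takes a more roundabout route: it characterizes the Hausdorff limit $X$ as the set of all limit points of sequences $\{\varphi_n\in\bS^1_n\}$, and then proves the two inclusions $\bS^1\subset X$ and $X\subset\bS^1$ separately. The first inclusion uses exactly your ingredients (Fej\'er convergence plus equivalence of topologies). The second inclusion is harder: it uses the explicit lower bound $\rho_1(x-y)\leq d_{\A,D}^\flat(\Psi_{x,N},\Psi_{y,N'})$ from Prop.~\ref{lemma:dgeo} (extended to $N\neq N'$) to show that if $\{\Psi_{x_n,n}^\sharp\}$ is Cauchy then $\{x_n\}$ is Cauchy in $\bS^1$, and hence the limit is a pure state. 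Your argument bypasses this entire step: by bounding $d_H$ directly via the uniform estimate $d_{\A,D}(\Psi_{x,N}^\sharp,\delta_x)\equiv d_{\A,D}(\Psi_{0,N}^\sharp,\delta_0)$, you never need to know what arbitrary convergent sequences look like, nor the lower bound $\rho_1$. The paper's approach does yield the extra information that no ``spurious'' limit points arise, but for the stated Gromov--Hausdorff convergence your method is both shorter and cleaner.
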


\begin{proof}
From \cite[pag.~253]{BBI01}, 
the Gromov-Hausdorff limit of $\bS^1_N$ is the set $X$ of
limits of all convergent sequences $\{\varphi_n\in
\bS^1_n\}_{n\in\N}$.
But $(C^\infty(\bS^1),
L_D)$ is a compact quantum metric space \cite{rieffel2003} and by Lemma \ref{lemma5.9} $\Psi_{x,N}\to\delta_x$ in the weak$^*$ topology, thus $\Psi_{x,N}\to\delta_x$
also in the metric topology of $d_{\A, D}$. Hence
$\bS^1\subset X$.

Consider a convergent sequence $\{\varphi_n\in \bS^1_n
\}_{n\in\N}$. By definition, for any $\varphi_n\in\bS^1_n$ there exists $x_n\in\bS^1$ such that $\varphi_n=\Psi_{x_n,n}^\sharp$. 
Let $N'>N$ and consider the Lipschitz periodic function defined by $f(t):=|t|$ for $t\in[-\pi,\pi]$.
By easy calculation one obtains
\begin{align*}
\Psi_{0,N'}^\sharp (f)-\Psi_{x,N'}^\sharp (f)
 &= \frac{8}{\pi}\sum_{\substack{1\leq n\leq N'\\[1pt] n\;\mathrm{odd}}} \frac{1}{n^2}\left(1-\frac{n}{N'+1}\right)\left(\sin\frac{nx}{2}\right)^2
 \geq\frac{4}{\pi}\left(\sin\frac{x}{2}\right)^2=\rho_1(x) \;,\\
\Psi_{0,N}^\sharp(f)-\Psi_{0,N'}^\sharp(f) &\geq\frac{4}{\pi}\sum_{\substack{N<n\leq N'\\[1pt] n\;\mathrm{odd}}}\frac{1}{n^2}\left(1-\frac{n}{N+1}\right)\geq 0 \;,
\end{align*}
with $\rho_1(x)$ as in \eqref{defrho} (the first inequality follows
from the observation that we have a sum of positive terms -
  hence superior to the term $n=1$ - and $\frac{N'}{N'+1}\geq\frac{1}{2}$ if $N'\geq 1$; in the second we used $-\frac{1}{N'+1}\geq-\frac{1}{N+1}$). Hence
\begin{equation}
d^\flat_{\A,D}(\Psi_{0,N},\Psi_{x,N'}) \geq 
\big\{\Psi_{0,N'}^\sharp(f)-\Psi_{x,N'}^\sharp(f)\big\}+\big\{\Psi_{0,N}^\sharp(f)-\Psi_{0,N'}^\sharp(f)\big\}
\geq \rho_1(x) \;,
\end{equation}
and by translation invariance:
\begin{equation}
\label{eq:rhoN1}
\rho_1( x - y )\leq d^\flat_{\A,D}(\Psi_{ x ,N},\Psi_{ y ,N'}) \;.
\end{equation}
Now, since $\mathcal{S}(C^\infty(\bS^1))$ is complete,
$\{\varphi_n\}$ is a Cauchy sequence for the spectral distance. Hence
(\ref{eq:rhoN1}) shows that for all $\epsilon>0$ there exists $N\geq 1$ such that
\begin{equation}
\rho_1(x_m -x_n) \leq \;d_{\A,D}(\varphi_m,\varphi_n)<\epsilon\quad \forall\;m>n\geq N.
\end{equation}
This means
$
|x_m-x_n| <2\arcsin\sqrt{\tfrac{\pi}{4}\epsilon},
$
proving that $\{x_n\}$ is a Cauchy sequence, thus convergent.
Let $x:=\lim_{n\to\infty}x_n$. From the triangle inequality and Prop.~\ref{lemma:dgeo} we get
\begin{equation}
d_{\A,D}(\varphi_n,\delta_x)\leq 
d_{\A,D}(\Psi_{x_n,n}^\sharp,\Psi_{x,n}^\sharp)+d_{\A,D}(\Psi_{x,n}^\sharp,\delta_x)
\leq
d_{\mathrm{geo}}(x_n, x)+d_{\A,D}(\Psi_{x, n}^\sharp,\delta_x)
 \;.
\end{equation}
Since $\Psi_{x,n}^\sharp\to\delta_x$ in the weak$^*$ topology,
$d_{\A,D}(\Psi_{x,n}^\sharp,\delta_x^\sharp)\to 0$. On the other hand
$x_n\to x$, hence $d_{\mathrm{geo}}(x_n, x)\to 0$ too.
Thus
$
\lim_{n\to\infty}d_{\A,D}(\varphi_n,\delta_x)=0
$
proving that $\varphi_n\to\delta_x$ converges to a pure state, and $X\subset\bS^1$. This concludes the proof.
\end{proof}


\section{Wasserstein distance and Berezin quantization}\label{sectionBerezin}

We now discuss an application of the truncation procedure to quantum spaces.
Given a spectral triple $(\A,\HH,D)$ and a projection $P\in{\cal B}(\HH)$ such that 
$P\cdot\mathrm{dom}(D)\subset\mathrm{dom}(D)$, we denote by
$\A_P$ the algebra generated by the elements $\pi(a):=PaP$, with $a\in\A$. Let $D_P: =PDP$ be the truncated Dirac operator. Note that unlike section \ref{sec5}, we do not
assume that $D_P$ is bounded. 

Many well known noncommutative spectral triples $(\A_P, P\HH,
D_P)$ are obtained in this way, that is by the action of a
  projection $P$ on a commutative spectral triple $(\A,\HH,D)$: Moyal plane, fuzzy spaces, quantum discs,
and more generally any Berezin-Toeplitz quantization of a classical
space. Specifically we study the quantization of the plane in \S\ref{sec:7.2.1},
and of the sphere in \S\ref{sectionfuzzysphere}. Before that, we give in \S\ref{sec:7.1} an
application to gauge theory.

\smallskip

Having in mind the analogy between the spectral and the
Wasserstein distances, a first general result is that a quantum transport is more expensive than a classical
transport. \begin{lemma}
\label{lemma:qAA} 
There is a map $\sharp:\mc{S}(\A_P )\to\mc{S}(\A)$, $\varphi\mapsto\varphi^\sharp$, given by
\begin{equation}
\varphi^\sharp=\varphi\circ\pi \;.
\label{eq:102}
\end{equation}
If $[P,D]=0$ or $[P,a]=0$ for all $a\in\A$, then for any
$\varphi,\psi\in\mc{S}(\A_P )$ one has
\begin{equation}
d_{\A_P , D_P }(\varphi,\psi) \geq d_{\A,D}(\varphi^\sharp,\psi^\sharp).
\label{eq:69}
\end{equation}
\end{lemma}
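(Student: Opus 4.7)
The statement has two parts. For the first, I would check that $\varphi^\sharp := \varphi \circ \pi$ is a state of $\A$ by noting that the compression $\pi(a)=PaP$ is a positive linear map (being completely positive), hence $\varphi^\sharp$ is positive; and that $\pi(1)=P$ plays the role of the order unit of $\A_P$ (working with the unitization of $\A$ if needed, as in Remark \ref{footunitization}), so that $\varphi^\sharp(1)=\varphi(P)=1$. This is routine.

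For the inequality, the plan is to exhibit, for every self-adjoint $a\in\A$ with $\|[D,a]\|\leq 1$, an admissible competitor $b:=\pi(a)\in\A_P^{\mathrm{sa}}$ in the supremum defining $d_{\A_P,D_P}(\varphi,\psi)$ that realises the same value $\varphi(b)-\psi(b)=\varphi^\sharp(a)-\psi^\sharp(a)$. Taking the supremum over all such $a$ then yields $d_{\A_P,D_P}(\varphi,\psi)\geq d_{\A,D}(\varphi^\sharp,\psi^\sharp)$. The only non-trivial step is the bound $\|[D_P,\pi(a)]\|\leq\|[D,a]\|$, which is what guarantees that $b$ lies in the unit ball of the seminorm $L_{D_P}$ whenever $a$ lies in the unit ball of $L_D$.

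The key technical ingredient is the commutator identity
\begin{equation*}
[D_P,\pi(a)] \;=\; P[D,a]P,
\end{equation*}
valid under either hypothesis. Under $[D,P]=0$ it follows by expanding $[PDP,PaP]=PDPaP-PaPDP$, using $PDP=PD=DP$ together with $P^2=P$ to move $D$ past the middle projection, and recognising $P(Da-aD)P$. Under $P\in\A'$ the analogous manipulation uses $aP=Pa$. In either case, the identity also drops out of formula \eqref{eq:67} established in Proposition \ref{propequiv}, where the extra bracket term $P[[P,a],[D,P]]P$ vanishes. The bound $\|[D_P,\pi(a)]\|\leq\|[D,a]\|$ is then immediate from $\|P\|\leq 1$. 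I do not anticipate any real obstacle here: the argument is essentially a compressed version of the proof of Proposition \ref{propineq} (compare the derivations of \eqref{eq:twothree} and \eqref{eq3:11}) with the roles of $\A$ and $\A_P$ exchanged, and the commutator identity is the same one already used there.
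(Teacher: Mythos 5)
Your proposal is correct and follows essentially the same route as the paper: the paper verifies that $\varphi^\sharp$ is a state via positivity of the compression and $\varphi(P)=1$, and for the inequality simply refers back to Prop.~\ref{propineq}, whose proof rests on exactly the identity $[D_P,\pi(a)]=P[D,a]P$ (the vanishing of the double-commutator term in \eqref{eq:67} under either hypothesis) and the resulting bound $L_{D_P}(\pi(a))\leq L_D(a)$, dualized as you describe. Your write-up just makes that cross-reference explicit.
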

\begin{proof}
Since $\pi$ preserves positivity of operators, $\varphi^\sharp$ is positive.
When $\A_P$ is unital, with unit $P$,  then
$\varphi^\sharp(1)= \varphi(P) =1$ and $\varphi^\sharp$ is a state. 
Otherwise, $\varphi$ extends in a unique way to the unitization of
$\A_P$, whose unit is the identity on $P\HH$, that is $P$. Hence this
extension satisfies  $\varphi(P)=1$, and as before $\varphi$ is a
state.
 The proof of (\ref{eq:69}) is the same as in Prop.~\ref{propineq}.
\end{proof}
\begin{rem}
The map
  $\sharp$ is not necessarily injective,  unlike Prop.~\ref{prop:3.6}, because in general $\pi$ is
  not surjective.
\end{rem}

\subsection{Gauged Dirac operators}\label{sec:7.1}

The Dirac operator $D$ of a Riemannian spin manifold
$M$ can be lifted to any vector bundle $E\rightarrow M$ as follows. By Serre-Swan theorem, the
set $\cal E$ of smooth  sections of $E$ vanishing at
infinity is a finitely generated projective $\A$-module, with $\A =
C^\infty_0(M)$: namely ${\cal E} \simeq P\A^n$ for some $n\geq 1$ and some
projection $P\in M_n(\A)$. Let $\HH =L^2(M,\mathcal{S})$ be the space of square integrable
spinors and  $\HH^n=\HH\otimes\C^n$. We denote $D_P =
P(D\otimes \I_N)P$ the lift of
$D$ to $E$. It acts on $\HH^n$ and is well defined, because $P$ being
smooth sends the domain of $D\otimes \I_N$ into itself.  In gauge theories, $E$ is a  $SU(n)$ bundle  describing a fermion
paired with a $\mathfrak{su}(n)$ gauge field \cite{Lan02}, and $D_P$ is
then called ``gauged Dirac operator''. 

\begin{lemma}
\label{lemmagauge}
One has $\A_P\simeq\A$.
\end{lemma}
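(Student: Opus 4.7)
The plan is to exhibit $\pi$ itself as the isomorphism $\A \to \A_P$. The key observation is purely commutative-algebraic: since $P\in M_n(\A)$ is a matrix whose entries are scalar functions on $M$, and since $a\in\A$ acts on $\HH^n$ as $a\otimes \I_n$ (multiplication by the same function in every component), the operator $a\otimes \I_n$ commutes with $P$. Pointwise, this is just the fact that the scalar $a(x)$ commutes with the matrix $P(x)\in M_n(\C)$. Consequently
\begin{equation*}
\pi(a) \;=\; P(a\otimes \I_n)P \;=\; (a\otimes \I_n)P^2 \;=\; (a\otimes \I_n)P.
\end{equation*}

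From this one reads off immediately that $\pi(a)\pi(b) = (a\otimes \I_n)P(b\otimes\I_n)P = (ab\otimes\I_n)P = \pi(ab)$, so $\pi:\A\to\B(P\HH^n)$ is a $*$-algebra homomorphism. Since its image is already closed under multiplication (and under the adjoint and linear combinations), the algebra $\A_P$ \emph{generated} by the $\pi(a)$'s coincides with the image $\pi(\A)$. It remains to see that $\pi$ is injective.

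For injectivity I would use that $P$ is the projection defining the vector bundle $E\to M$, so at every point $x\in M$ the fibre $E_x = P(x)\C^n$ has positive dimension, i.e.~$P(x)\neq 0$. If $\pi(a)=0$, then $(a\otimes \I_n)P=0$ as an operator of pointwise multiplication, which at each $x\in M$ reads $a(x)P(x)=0$ in $M_n(\C)$. Since $P(x)\neq 0$, this forces $a(x)=0$ for every $x$, hence $a=0$. Therefore $\pi$ is an algebra isomorphism $\A\xrightarrow{\sim}\A_P$.

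There is no real obstacle in this argument; the only subtlety is the commutativity statement $[\,a\otimes\I_n, P\,]=0$, which requires that $\A$ acts diagonally (as it does in the Serre--Swan setup described in the paragraph preceding the lemma). Once this is pinned down, everything else follows at once; in particular no analytic input from $D$ or $D_P$ is needed for this lemma, which explains its placement as a preliminary to the forthcoming discussion of the gauged Dirac operator.
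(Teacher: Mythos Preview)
Your proof is correct and follows essentially the same line as the paper's: both use the commutativity of $\A$ to conclude that $\pi$ is a genuine $*$-homomorphism (so $\A_P=\pi(\A)\simeq\A/\ker\pi$), and then argue injectivity from the fact that the bundle has positive rank. The only cosmetic difference is in the injectivity step: you argue pointwise via $a(x)P(x)=0\Rightarrow a(x)=0$, whereas the paper takes the matrix trace and uses $\tr(\pi(a))=ka$ with $k=\mathrm{rk}(E)>0$ to reach the same conclusion.
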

\begin{proof}
Since $\A$ is commutative and $P\in M_n(\A)$, $\pi(a):=P(a\otimes\I_n)P$ is a representation of $\A$ and $\A_P \simeq\A/\ker\pi$.
Let $k=\tr(P)$ be the matrix trace of $P$ (not the trace on $\HH$). This should be an element of $\A$, but in fact
it is an integer, since it coincides with the rank of the
vector bundle $E$. Since $\tr(\pi(a))=ka$, $\pi(a)=0$ implies $a=0$ and $\ker\pi=\{0\}$.
\end{proof}
One can prove that
$(\A, P\HH^n , D_P )$ is a spectral triple. Indeed, the construction
described here is very common in index theory, because for $M$ an
even-dimensional manifold, the Fredholm
index of $D_P$ gives an integer-valued pairing between $D$
(or more generally, a $K$-homology class for $M$) and the class of $E$
in $K^0(M)$ (see \cite{Con94, Mos97}).

\smallskip

The spectral triple $(\A, \HH, D)$ is metrically equivalent to
$(\A,\HH^n,D\otimes\I_n)$, where the algebra acts diagonally on $\HH^n$.
By Lemma \ref{lemmagauge} we identify $\A$ with $\A_P$
and $\varphi$ with~$\varphi^\sharp$. Lemma \ref{lemma:qAA} then tells us that
``gauging'' a Dirac operator makes distances larger.
\begin{prop}
$d_{\A, D_P }(\varphi,\psi) \geq d_{\A,D}(\varphi,\psi) \qquad\forall\;\varphi,\psi\in\sa$.
\end{prop}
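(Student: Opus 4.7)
The plan is to obtain the inequality as a direct combination of Lemma~\ref{lemma:qAA} with the preceding identifications, after first passing from $(\A,\HH,D)$ to the metrically equivalent spectral triple $(\A,\HH^n,D\otimes\I_n)$ where $\A$ acts diagonally via $a\mapsto a\otimes\I_n$. The metric equivalence is essentially immediate since
\begin{equation*}
[D\otimes\I_n,a\otimes\I_n]=[D,a]\otimes\I_n,
\end{equation*}
whose operator norm on $\HH^n$ equals $\|[D,a]\|$; thus $L_{D\otimes\I_n}=L_D$ and $d_{\A,D\otimes\I_n}=d_{\A,D}$ on $\sa$. This step is purely bookkeeping but allows us to view $D_P$ as the compression of an operator acting on the same Hilbert space as the algebra.

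Next I would verify that the commutativity hypothesis of Lemma~\ref{lemma:qAA} is satisfied. Since $\A$ is commutative and $P\in M_n(\A)$, every entry of the matrix $P$ commutes with $a\in\A$, so $(a\otimes\I_n)P=P(a\otimes\I_n)$, i.e.\ $[P,a\otimes\I_n]=0$ for all $a\in\A$. Hence the second alternative of Lemma~\ref{lemma:qAA} applies to the spectral triple $(\A,\HH^n,D\otimes\I_n)$ with projection $P$, yielding
\begin{equation*}
d_{\A_P,D_P}(\tilde\varphi,\tilde\psi)\;\geq\;d_{\A,D\otimes\I_n}(\tilde\varphi^\sharp,\tilde\psi^\sharp)\;=\;d_{\A,D}(\tilde\varphi^\sharp,\tilde\psi^\sharp)
\end{equation*}
for all $\tilde\varphi,\tilde\psi\in\mc{S}(\A_P)$.

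Finally I would translate this back to states of $\A$ using Lemma~\ref{lemmagauge}. Given $\varphi,\psi\in\sa$, the isomorphism $\pi:\A\to\A_P$ produces unique $\tilde\varphi,\tilde\psi\in\mc{S}(\A_P)$ with $\tilde\varphi\circ\pi=\varphi$ and $\tilde\psi\circ\pi=\psi$, i.e.\ $\tilde\varphi^\sharp=\varphi$ and $\tilde\psi^\sharp=\psi$. Under this identification the supremum defining $d_{\A_P,D_P}(\tilde\varphi,\tilde\psi)$ is exactly the supremum defining $d_{\A,D_P}(\varphi,\psi)$ (every $b\in\A_P^{\mathrm{sa}}$ is of the form $\pi(a)$ for a unique $a\in\A^{\mathrm{sa}}$, and the Lipschitz condition $\|[D_P,b]\|\leq 1$ becomes $\|[D_P,\pi(a)]\|\leq 1$). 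Putting these identifications together yields the desired inequality.

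There is no real obstacle; the only point requiring mild care is the bookkeeping needed to see that $d_{\A,D_P}(\varphi,\psi)$ (with $\A$ represented on $P\HH^n$ via $\pi$) coincides with $d_{\A_P,D_P}(\tilde\varphi,\tilde\psi)$, so that Lemma~\ref{lemma:qAA} can be invoked in the form stated. The rest is an immediate chain of equalities and one inequality.
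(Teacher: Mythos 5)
Your proposal is correct and follows essentially the same route as the paper: pass to the metrically equivalent triple $(\A,\HH^n,D\otimes\I_n)$, observe that $P\in M_n(\A)$ commutes with the diagonal action of the commutative algebra so that the second hypothesis of Lemma~\ref{lemma:qAA} holds, and then use the isomorphism $\A\simeq\A_P$ of Lemma~\ref{lemmagauge} to identify states and read off the inequality. The extra bookkeeping you supply (surjectivity of $\sharp$ via $\pi^{-1}$ and the matching of the two suprema) is exactly the content the paper leaves implicit.
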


Another ways to lift $D$ to $E$, that works with arbitrary connections on $E$, is by the so-called \emph{fluctuation of the metric} \cite{AC2M2}.
In this way, one obtains a covariant Dirac operator whose associated spectral distance strongly depends on the holonomy of the
connection, as shown in \cite{Martinetti:2006db, Martinetti:2008hl}.

\subsection{Berezin quantization of the plane}\label{sec:7.2.1}

We first recall how to quantize $\C \simeq\R^{2}$
by projecting
$\HH=L^2(\C,\frac{\de^2z}{\pi})$, with inner~product 
\begin{equation}
\inner{f,g}_{L^2}:=\frac{1}{\pi}\int_{\C}\overline{f(z)}g(z)\de^2z,
\label{eq:85}
\end{equation} 
on a suitable Hilbert subspace. Then we prove a result similar to Lemma~\ref{lemma:qAA}.

\smallskip

Fix a real deformation parameter $\theta>0$ and denote by
$\HH_\theta\subset\HH$ the Hilbert subspace spanned by the set of
orthonormal functions
\begin{equation}\label{eq:BM1}
h_n(z):=
\frac{z^n}{\sqrt{\theta^{n+1}n!}}\,e^{-\frac{|z|^2}{2\theta}} \quad n\in\N.
\end{equation}
Notice that $\HH_\theta$ is isomorphic to the holomorphic Fock space
$L^2_{\mathrm{hol}}(\C,e^{-\frac{|z|^2}{\theta}}\frac{\de^2z}{\pi\theta})$ 
via the module map $f\mapsto \tilde f: =\sqrt{\theta}\,e^{\frac{1}{2\theta}|z|^2}\!f$.
Let $P_\theta$ be the orthogonal projection $\HH\to\HH_\theta$, namely
\begin{equation}
P_\theta:=\sum_{n=0}^\infty  h_n\inner{h_n,\,.\,}_{L^2}.
\label{eq:44}
\end{equation}
Two maps are naturally associated to it \cite{BS06}:
the \emph{Toeplitz quantization} $\pi_\theta$ from bounded continuous functions $f$ to bounded operators on $\HH_\theta$:
\begin{equation}
\pi_\theta(f):=P_\theta fP_\theta \;,
\label{eq:103}
\end{equation}
and the \emph{Berezin symbol} $\sigma_\theta(T)$, defined for $T\in\B(\HH)$ by
\begin{equation}\label{eq:BM0}
\sigma_\theta(T)(z)=\Psi_z(T) \;,
\end{equation}
where $\Psi_z(T)=\inner{\psi_z, T\psi_z}_{L^2}$ is the vector state defined by the unit vector in~$\HH_\theta$:
\begin{equation}
\psi_z := e^{-\frac{|z|^2}{2\theta}}\sum\nolimits_{n=0}^\infty
\frac{{\bar z}^n}{\sqrt{\theta^n n!}}\,h_n \;.
\label{eq:93}
\end{equation}
Their composition $B_\theta:=\sigma_\theta\circ\pi_\theta$ is called \emph{Berezin transform}.
Both maps $\pi_\theta$ and $\sigma_\theta$ (and then $B_\theta$), 
are unital, positive and norm non-increasing, that is: $\|\pi_\theta(f)\|\leq\|f\|_\infty$ and
$\|\sigma_\theta(T)\|_\infty\leq\|T\|$ for all $f,T$ (the latter simply follows from $\Psi_z$ being a state).

Let us now consider the canonical spectral triple of $\C\simeq\R^2$, given by:
\begin{equation}
\A=\mc{S}(\R^2) \;, \qquad \HH\otimes\C^2 \;,\qquad  D=2\mat{ 0 & \!\!\!-\bar\partial \\ \partial & \;0 } \;,
\label{eq:105}
\end{equation}
where $\mc{S}(\R^2)$ is the algebra of Schwartz functions on the
plane. We write $z=x+iy$,
so that the derivatives are $\partial=\frac{1}{2}(\partial_x-i\partial_y)$ and $\bar\partial=\frac{1}{2}(\partial_x+i\partial_y)$.
Let $\OO_\theta$ be the order unit space spanned by $\pi_\theta(f)$,
$f\in\A^{\mathrm{sa}}$, and $\pi_\theta(1)=P_\theta$. The action of $P_\theta$ on
$D$, with a proper normalization factor\footnote{The factor $2$ for $D_\theta$ is required so that  $d_{\A,D}(\delta_z,\delta_{z'})=|z-z'|
=\|(x-x',y-y')\|$ coincides with the geodesic distance on the plane.}, yields the Dirac operator of the irreducible spectral triple of Moyal plane \cite{MT11,DML13}.
\begin{prop}
\label{prop:MoyalBerezin}
The truncated Dirac operator $D_\theta:=2(P_\theta\otimes\I_2) D (P_\theta\otimes\I_2)$ is given by:
\begin{equation}
D_\theta=\frac{2}{\sqrt\theta}\mat{ 0 & \mf{a}^\dag\! \\ \mf{a} & 0\, }
\label{eq:96}
\end{equation}
where $\mf{a}^\dag, \mf{a}$ are the creation, annihilation operators:
$\mf{a}^\dag h_n=\sqrt{n+1}\,h_{n+1}$, $\mf{a}h_{n}=\sqrt{n}\, h_{n-1}$.
\end{prop}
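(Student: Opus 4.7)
The plan is to compute the action of $D_\theta$ on the orthonormal basis $\{h_n\}_{n\in\N}$ of $\HH_\theta$ given by \eqref{eq:BM1}. Since $D$ in \eqref{eq:105} is block off-diagonal, so is $D_\theta$, and the statement reduces to the two identities
\begin{equation*}
-4\,P_\theta\bar\partial P_\theta\,h_n=\tfrac{2}{\sqrt\theta}\sqrt{n+1}\,h_{n+1}\,,\qquad
4\,P_\theta\partial P_\theta\,h_n=\tfrac{2}{\sqrt\theta}\sqrt{n}\,h_{n-1}\,,
\end{equation*}
which are exactly the prescribed actions of $\frac{2}{\sqrt\theta}\mf a^\dag$ and $\frac{2}{\sqrt\theta}\mf a$.

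First I would handle the upper-right block. Since $h_n$ is a holomorphic polynomial times the Gaussian $e^{-|z|^2/(2\theta)}$, and $\bar\partial$ kills the holomorphic part while acting on the Gaussian as multiplication by $-z/(2\theta)$, one gets immediately $\bar\partial h_n=-\tfrac{z}{2\theta}h_n$. A direct comparison with \eqref{eq:BM1} then gives $z h_n=\sqrt{\theta(n+1)}\,h_{n+1}\in\HH_\theta$, so the projection is transparent and yields $-4\,P_\theta\bar\partial h_n=\tfrac{2\sqrt{n+1}}{\sqrt\theta}\,h_{n+1}$ without further work.

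The lower-left block is slightly more delicate because $\partial h_n$ produces a non-holomorphic contribution: explicit differentiation gives $\partial h_n=\tfrac{\sqrt n}{\sqrt\theta}\,h_{n-1}-\tfrac{\bar z}{2\theta}\,h_n$, where the first summand already lies in $\HH_\theta$ but the second does not. I will compute the projection via matrix elements, $P_\theta(\bar z h_n)=\sum_m h_m\inner{h_m,\bar z h_n}_{L^2}$, by reducing the Gaussian integrals $\tfrac1\pi\int_\C\bar z^{m+1}z^n e^{-|z|^2/\theta}\,\de^2z$ to polar coordinates: the angular integral $\int_0^{2\pi}e^{i(n-m-1)\phi}\,\de\phi$ forces $m=n-1$, and the radial integral becomes a Gamma function after the substitution $u=r^2/\theta$. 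Collecting normalisation factors gives $P_\theta(\bar z h_n)=\sqrt{\theta n}\,h_{n-1}$, hence $P_\theta\partial h_n=\tfrac{\sqrt n}{2\sqrt\theta}\,h_{n-1}$ and the second identity follows.

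The only real (and minor) obstacle is the bookkeeping of the powers $\theta^{(n+1)/2}\sqrt{n!}$ in the normalisations: a single misplaced $\sqrt\theta$ would spoil the overall prefactor $2/\sqrt\theta$. As an independent sanity check one can verify the canonical commutation relation $[\mf a,\mf a^\dag]=\I$ on $\HH_\theta$, or alternatively transport the computation through the unitary $f\mapsto\sqrt\theta\,e^{|z|^2/(2\theta)}f$ onto the holomorphic Fock space mentioned after \eqref{eq:BM1}, where multiplication by $z/\sqrt\theta$ manifestly implements $\mf a^\dag$ and $\sqrt\theta\,\partial$ implements $\mf a$, reducing the whole verification to a one-line identification.
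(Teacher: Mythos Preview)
Your proof is correct. For the upper-right block ($\bar\partial$) you do exactly what the paper does: observe that $\bar\partial h_n=-\tfrac{z}{2\theta}h_n$ already lies in $\HH_\theta$, so the projection is transparent and one reads off $P_\theta\bar\partial P_\theta=-\tfrac{1}{2\sqrt\theta}\,\mf a^\dag$.

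For the lower-left block you take a slightly different path. The paper bypasses your explicit polar-coordinate computation of $P_\theta(\bar z h_n)$ entirely: having established $P_\theta\bar\partial P_\theta=-\tfrac{1}{2\sqrt\theta}\,\mf a^\dag$, it simply takes adjoints (``by conjugation''), using that $-\partial$ is the formal adjoint of $\bar\partial$ on $L^2(\C,\tfrac{\de^2z}{\pi})$ and that $(\mf a^\dag)^*=\mf a$, to conclude $P_\theta\partial P_\theta=\tfrac{1}{2\sqrt\theta}\,\mf a$ in one line. Your direct route has the advantage of being fully self-contained and making the cancellation between the holomorphic derivative and the projected $\bar z$-term visible; the paper's adjoint trick is shorter and avoids the Gamma-function bookkeeping you flag as the main risk.
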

\begin{proof}
One has $\bar\partial
e^{-\frac{|z|^2}{2\theta}}=-\frac{z}{2\theta}e^{-\frac{|z|^2}{2\theta}}$,
so that
$
\bar\partial h_{n}=-\frac{ \mf{a}^\dag }{2\sqrt{\theta}}h_n.
$
Thus $P_\theta\bar\partial P_\theta=-\frac{ \mf{a}^\dag
}{2\sqrt{\theta}}$ and by conjugation
$P_\theta\partial P_\theta=\frac{ \mf{a}}{2\sqrt{\theta}}$. Hence \eqref{eq:96}.
\end{proof}

Although we work with Schwartz functions, the quantization map makes sense for more general (even unbounded)
functions. In particular one has $\pi_\theta(z)=\sqrt{\theta}\mf{a}^\dag$
(coming from $zh_n(z)=\sqrt{\theta(n+1)}h_{n+1}$), and by
conjugation $\pi_\theta(\bar z)=\sqrt{\theta}\mf{a}$. Hence
\begin{equation}
[\pi_\theta(\bar z),\pi_\theta(z)]=\theta \;.
\end{equation}
In other terms, cutting-off the Euclidean plane with $P_\theta$ yields a canonical quantization of the plane.\footnote{As usual, the commutation relation holds on a dense subspace of $\HH_{\theta}$ containing the linear span of the basis elements $h_n$.}
The map $\sharp:\mc{S}(\OO_\theta)\to\mc{S}(\A)$ in \eqref{eq:102} is injective, because $\pi_\theta$ in \eqref{eq:103} is surjective by construction, and maps ``quantum states'' into ``classical states''. Even though $P_\theta$ and $D$ do not commute, we are able to obtain in Prop.~\ref{propBerezin} below a result similar to lemma
\ref{lemma:qAA}, together with an upper bound for $d_{\OO_\theta,D_\theta}$ given by the distance
\begin{equation}
d_{\A,D}^{(\theta)}(\varphi,\psi) :=\sup_{f=f^*\in\A}
\big\{ \varphi(f)-\psi(f)\,:\, \|[D,B_\theta(f)]\|\leq 1 \big\}
\;\quad \forall \varphi,\psi\in\mc{S}(\A).
\label{eq:109}
\end{equation}

We begin with two technical lemmas.

\begin{lemma}
\label{technicallemmamoyal}
For any $f\in\A$ and $T\in\OO_\theta$, one has{\footnote{Inside the commutator with $D$, we identify an
element $f$ of $\A$ with
its representation $f\otimes \I_2$ on $\HH\otimes \C^2$.}}
\begin{equation}\label{eq:Dtpt}
[D_\theta,\pi_\theta(f)\otimes\I_2]=(\pi_\theta\otimes \I_2)([D,f])
\,,\qquad [D,\sigma_\theta(T)]=(\sigma_\theta\otimes \I_2)([D_\theta,T]) \,.
\end{equation}
\end{lemma}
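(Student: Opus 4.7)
The plan is to expand each identity componentwise in the $2\times 2$ block structure of $D$ and $D_\theta$: since
\begin{equation*}
D = 2\begin{pmatrix} 0 & -\bar\partial \\ \partial & 0 \end{pmatrix},
\qquad
D_\theta = \frac{2}{\sqrt\theta}\begin{pmatrix} 0 & \mf{a}^\dag \\ \mf{a} & 0 \end{pmatrix},
\end{equation*}
and the tensor factor $\I_2$ lets the spinor structure pass through unchanged, each identity splits into two scalar identities in the off-diagonal blocks, the bottom-left being the adjoint of the top-right.

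For the first identity, the top-right block reduces to $[\mf{a}^\dag, \pi_\theta(f)] = -\sqrt\theta\,\pi_\theta(\bar\partial f)$. The strategy is to compare matrix elements in the basis $\{h_n\}$. Integration by parts on $\C$ (boundary terms vanish because $f\in\mc{S}(\R^2)$ and the $h_n$ carry a Gaussian) gives
\begin{equation*}
\inner{h_m, (\bar\partial f)\,h_n}_{L^2} = -\inner{\partial h_m, f h_n}_{L^2} - \inner{h_m, f\,\bar\partial h_n}_{L^2}.
\end{equation*}
Substituting the explicit actions $\partial h_m = \frac{1}{\sqrt\theta}\mf{a}\,h_m - \frac{\bar z}{2\theta}h_m$ and $\bar\partial h_n = -\frac{1}{2\sqrt\theta}\mf{a}^\dag h_n$, and using $z h_n = \sqrt{\theta(n+1)}\,h_{n+1}$ to rewrite $\frac{1}{2\theta}\inner{\bar z h_m, f h_n} = \frac{\sqrt{n+1}}{2\sqrt\theta}\inner{h_m, f h_{n+1}}$, the Gaussian-weighted contributions combine into $\sqrt\theta\inner{h_m,\pi_\theta(\bar\partial f)h_n} = \sqrt{n+1}\inner{h_m, fh_{n+1}} - \sqrt{m}\inner{h_{m-1}, fh_n}$. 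Direct expansion of $-\inner{h_m, [\mf{a}^\dag, \pi_\theta(f)]h_n}$ yields exactly the same expression; the other block, $[\mf{a}, \pi_\theta(f)] = \sqrt\theta\,\pi_\theta(\partial f)$, follows by Hermitian conjugation.

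For the second identity, the top-right block reduces to $\bar\partial\,\sigma_\theta(T) = -\frac{1}{\sqrt\theta}\sigma_\theta([\mf{a}^\dag, T])$. The key tool is the coherent-state description $\psi_z = e^{-|z|^2/(2\theta)}\,e^{(\bar z/\sqrt\theta)\mf{a}^\dag}h_0$, from which one reads off $\mf{a}\psi_z = (\bar z/\sqrt\theta)\psi_z$, $\partial_{\bar z}\psi_z = (\mf{a}^\dag/\sqrt\theta - z/(2\theta))\psi_z$ and $\partial_z\psi_z = -(\bar z/(2\theta))\psi_z$. Differentiating $\sigma_\theta(T)(z) = \inner{\psi_z, T\psi_z}$ and using anti-linearity of the inner product in the first slot gives
\begin{equation*}
\bar\partial\,\sigma_\theta(T) = \inner{\partial_z\psi_z, T\psi_z} + \inner{\psi_z, T\partial_{\bar z}\psi_z} = -\tfrac{z}{\theta}\sigma_\theta(T) + \tfrac{1}{\sqrt\theta}\inner{\psi_z, T\mf{a}^\dag\psi_z}.
\end{equation*}
Writing $T\mf{a}^\dag = \mf{a}^\dag T - [\mf{a}^\dag, T]$ and using $\inner{\psi_z, \mf{a}^\dag T\psi_z} = \inner{\mf{a}\psi_z, T\psi_z} = (z/\sqrt\theta)\sigma_\theta(T)$ makes the remaining $\sigma_\theta(T)$ piece cancel, leaving precisely $-\frac{1}{\sqrt\theta}\sigma_\theta([\mf{a}^\dag, T])$. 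The other block follows by complex conjugation together with $\overline{\sigma_\theta(T)} = \sigma_\theta(T^*)$.

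The main delicacy lies in the cancellation bookkeeping: in both reductions several terms proportional either to $\sigma_\theta(T)$ (for the second) or to shifted matrix elements (for the first) have to vanish exactly, which relies on precise matching of the factor $\sqrt\theta$ from $zh_n = \sqrt{\theta(n+1)}h_{n+1}$ with the $1/\sqrt\theta$ in $D_\theta$, and of the coherent-state eigenvalue $\bar z/\sqrt\theta$ with the $z/(2\theta)$ coefficient in $\partial_z\psi_z$. A sign or normalization slip anywhere destroys the identity.
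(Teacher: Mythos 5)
Your proof is correct: both block reductions, the signs, and the $\sqrt\theta$ bookkeeping check out against the conventions $h_n = z^n(\theta^{n+1}n!)^{-1/2}e^{-|z|^2/2\theta}$ and $D_\theta = \frac{2}{\sqrt\theta}\bigl(\begin{smallmatrix}0&\mf{a}^\dag\\ \mf{a}&0\end{smallmatrix}\bigr)$, and the identities you reduce to, $[\mf{a},\pi_\theta(f)]=\sqrt\theta\,\pi_\theta(\partial f)$, $[\mf{a}^\dag,\pi_\theta(f)]=-\sqrt\theta\,\pi_\theta(\bar\partial f)$ and $\bar\partial\sigma_\theta(T)=-\tfrac{1}{\sqrt\theta}\sigma_\theta([\mf{a}^\dag,T])$, are exactly the ones the paper establishes. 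The route differs in one place: the paper proves both halves from a single pair of differential identities for the coherent-state projector $Q_z=\ketbra{\psi_z}{\psi_z}$, namely $\partial Q_z+\tfrac{1}{\sqrt\theta}[\mf{a},Q_z]=0$ and $-\bar\partial Q_z+\tfrac{1}{\sqrt\theta}[\mf{a}^\dag,Q_z]=0$, feeding them into the integral representation $\pi_\theta(f)=\tfrac{1}{\pi\theta}\int_\C f(z)Q_z\,\de^2z$ (integration by parts in the operator-valued integral) for the first identity and into $\sigma_\theta(T)(z)=\Tr{(Q_zT)}$ for the second. Your treatment of the second identity is essentially the same computation written at the level of the vector $\psi_z$ rather than the projector $Q_z$; your treatment of the first is genuinely more elementary, a direct matrix-element verification in the basis $\{h_n\}$ with integration by parts on $\C$. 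What the paper's packaging buys is economy (one lemma about $Q_z$ drives everything, and the two halves are visibly adjoint to each other via \eqref{eq:95}-type duality); what yours buys is that each step is a concrete scalar identity with no operator-valued integrals to justify. One small point worth making explicit in your version: the termwise differentiation of the coherent-state series and the manipulation $T\mf{a}^\dag=\mf{a}^\dag T-[\mf{a}^\dag,T]$ are harmless here because $T\in\OO_\theta$ is a Toeplitz operator of a Schwartz function, so all the relevant vectors stay in the domain of $\mf{a}^\dag$ -- the paper glosses over the same point.
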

\begin{proof} The rank $1$ projection in direction of $\psi_z$,
\begin{equation}
Q_z:=e^{-\frac{|z|^2}{\theta}}\sum_{m,n\geq 0}
\frac{{\bar z}^mz^n}{\sqrt{\theta^{m+n}m!n!}}\,h_m\!\inner{h_n,\,.\,}_{L^2},
\end{equation}
is the density matrix of the coherent state
$\Psi_z$.
With some computations one verifies that
\begin{equation}\label{lemma:BM0}
\partial Q_z+\frac{1}{\sqrt{\theta}}[\mf{a},Q_z]=0 \;,\qquad
-\bar\partial Q_z+\frac{1}{\sqrt{\theta}}[\mf{a}^\dag,Q_z]=0 \;.
\end{equation}
Using the explicit form of the Toeplitz operator,
\begin{equation}\label{eq:berezinexplicite}
\pi_\theta(f)=\sum_{m,n\geq 0}h_m\inner{h_m,fh_n}_{L^2}\inner{h_n,\,.\,}_{L^2}
\equiv\frac{1}{\pi\theta}\int_{\C}f(z)Q_z\de^2z \;
\end{equation}
one obtains, after integration by part:
\begin{equation}
[\mf{a}, \pi_\theta(f)] = \sqrt \theta \,\pi_\theta(\partial
  f), \qquad [\mf{a}^\dag, \pi_\theta(f)] = -\sqrt \theta
  \,\pi_\theta(\bar\partial f).\label{eq:74}
  \end{equation}
Hence the first equation in \eqref{eq:Dtpt}.

From \eqref{eq:BM0} one has $\sigma_\theta(T)(z) = \text{Tr}(Q_z
T)$. Together with \eqref{lemma:BM0} this yields
\begin{equation}
  \label{eq:79}
  \partial \sigma_\theta(T)_{\lvert_z} = -\frac 1{\sqrt \theta}
  \text{Tr}\left([\mf{a}, Q_z] T\right) = -\frac 1{\sqrt \theta}
  \text{Tr}\left(Q_z[T,\mf{a}]\right) =\frac 1{\sqrt \theta}
  \sigma_\theta\left([\mf{a},T]\right)_{\lvert_z} \,.
\end{equation}
Similarly $\bar\partial \sigma_\theta(T)_{\lvert_z} = -\frac 1{\sqrt \theta}
  \sigma_\theta\left([a^\dagger,T]\right)(z)$. Hence the second equation  in \eqref{eq:Dtpt}.
\end{proof}

\begin{lemma}
$B_\theta$ is a self-adjoint automorphism of the vector space $\mc{S}(\R^2)$.
\end{lemma}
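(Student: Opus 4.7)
The plan is to read off both properties from an explicit integral representation of $B_\theta$ as a Gaussian convolution, and then to use Fourier analysis for the invertibility.

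First, I would substitute the integral formula \eqref{eq:berezinexplicite} for $\pi_\theta$ into the definition \eqref{eq:BM0} of $\sigma_\theta$. Since $\sigma_\theta(T)(z)=\Psi_z(T)=\mathrm{Tr}(Q_zT)$, linearity gives
\begin{equation*}
B_\theta(f)(z)=\sigma_\theta\bigl(\pi_\theta(f)\bigr)(z)=\frac{1}{\pi\theta}\int_{\C}f(w)\,\mathrm{Tr}(Q_zQ_w)\,\de^2w=\frac{1}{\pi\theta}\int_{\C}f(w)\,|\langle\psi_z,\psi_w\rangle|^2\,\de^2w.
\end{equation*}
Using the explicit form \eqref{eq:93} of the coherent vector, the series $\langle\psi_z,\psi_w\rangle=e^{-(|z|^2+|w|^2)/2\theta}\sum_{n\geq0}(z\bar w/\theta)^n/n!=e^{-(|z|^2+|w|^2)/2\theta}e^{z\bar w/\theta}$ sums exactly to a Gaussian, so $|\langle\psi_z,\psi_w\rangle|^2=e^{-|z-w|^2/\theta}$. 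Hence
\begin{equation*}
B_\theta(f)(z)=(G_\theta*f)(z),\qquad G_\theta(z):=\tfrac{1}{\pi\theta}\,e^{-|z|^2/\theta}.
\end{equation*}

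The three required properties now follow cleanly. Since $G_\theta\in\mathcal{S}(\R^2)$ and convolution of two Schwartz functions is Schwartz, $B_\theta$ maps $\mathcal{S}(\R^2)$ into itself. Self-adjointness with respect to the inner product \eqref{eq:85} is immediate from the reality and symmetry of the kernel, $G_\theta(z-w)=G_\theta(w-z)\in\R$: by Fubini,
\begin{equation*}
\langle B_\theta f,g\rangle_{L^2}=\tfrac{1}{\pi^2\theta}\iint \overline{f(w)}\,e^{-|z-w|^2/\theta}g(z)\,\de^2w\,\de^2z=\langle f,B_\theta g\rangle_{L^2}.
\end{equation*}

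For the automorphism statement I would pass to the Fourier picture: convolution with $G_\theta$ becomes multiplication by the Fourier transform $\widehat{G_\theta}(k)=e^{-\theta|k|^2/4}$, which is everywhere positive. Injectivity of $B_\theta$ on $\mathcal{S}$ is immediate, since $\hat f\cdot\widehat{G_\theta}=0$ with $\widehat{G_\theta}$ nowhere vanishing forces $\hat f=0$, hence $f=0$. The main obstacle is surjectivity, because formal inversion amounts to multiplication by the exponentially growing symbol $e^{\theta|k|^2/4}$, which is not a Schwartz multiplier; an honest proof of surjectivity on $\mathcal{S}(\R^2)$ has to exploit the fact that inverting a heat kernel by a convergent power series $\sum_k \frac{1}{k!}(-\theta\Delta/4)^k$ does make sense on the dense subspace spanned by Hermite functions (on which both $\pi_\theta$ and $\sigma_\theta$ act diagonally in the number basis), and then extend by continuity. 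In view of the use made of the lemma later, it is this last step---genuine bijectivity on $\mathcal{S}(\R^2)$ rather than merely injectivity---that is the delicate point.
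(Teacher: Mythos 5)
Your computation of the kernel, the stability of $\mc{S}(\R^2)$ under convolution with a Gaussian, the self-adjointness via Fubini, and the injectivity via the nowhere-vanishing Fourier multiplier $e^{-\theta|k|^2/4}$ all reproduce the paper's argument: the paper packages $|\inner{\psi_\xi,\psi_z}|^2=e^{-|z-\xi|^2/\theta}$ into the reproducing kernel $K_z$ of \eqref{eq:106}--\eqref{eq:87} and proceeds exactly as you do. Up to and including injectivity your proposal is correct and essentially identical to the printed proof.

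The gap is surjectivity, and you are right to single it out as the delicate point --- but the repair you sketch cannot be completed, and in fact no repair can. In the Fourier picture $B_\theta$ is multiplication by $e^{-\theta|k|^2/4}$, i.e.\ the heat semigroup at a fixed positive time, and this is \emph{not} surjective on $\mc{S}(\R^2)$: take $g$ with $\hat g(k)=e^{-\theta|k|^2/8}$, a Schwartz function; any preimage would have to satisfy $\hat f(k)=e^{\theta|k|^2/8}$, which is not even a tempered distribution, so $g\notin B_\theta(\mc{S}(\R^2))$. Your suggestion to invert by the series $\sum_k\frac{1}{k!}(-\theta\Delta/4)^k$ on the Hermite basis and then ``extend by continuity'' founders precisely here: the formal inverse is an unbounded multiplier and admits no continuous extension from any dense subspace. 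For what it is worth, the paper's own surjectivity argument is also incomplete: showing that $V^\perp=\{0\}$ in the $L^2$-closure of $\mc{S}(\R^2)$ only proves that $B_\theta(\A)$ is \emph{dense} in $L^2$, not that it equals $\A$, and the counterexample above shows that dense range and surjectivity genuinely differ for this operator. The honest conclusion is that $B_\theta$ is an injective self-adjoint endomorphism of $\mc{S}(\R^2)$ with dense range, and the later invocations of surjectivity (e.g.\ in the proof of Prop.~\ref{cor:coherent}) would need to be revisited with a density-plus-limiting argument rather than an exact change of variables $g=B_\theta(f)$.
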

\begin{proof} The set $\mc{S}(\R^2)$ is a pre-Hilbert space with inner
  product $\inner{\,,\,}_{L^2}$.
Introducing the \emph{reproducing kernel},
\begin{equation}\label{eq:106}
K_z(\xi)=\theta^{-1}\left|\inner{\psi_\xi, \psi_z}\right|^2=\theta^{-1}e^{-\frac{1}{\theta}|z-\xi|^2},
\end{equation}
one obtains the integral form of the Berezin transform
\begin{equation}
B_\theta(f)(z)=\frac{1}{\pi}\int_{\C}K_z(\xi)f(\xi)\de^2\xi=\inner{K_z,f}_{L^2} \;.
\label{eq:87}
\end{equation}
Since $e^{-\frac{1}{\theta}|z-\xi|^2}$ is a Schwartz function, and the Schwartz space
is closed under convolution, $B_\theta(f)\in\A$ for all $f\in\A$. With a simple explicit
computation one verifies that $\inner{f,B_\theta(g)}_{L^2}=\inner{B_\theta(f),g}_{L^2}$
for all $f,g\in\A$, that is the Berezin transform is self-adjoint.

In Fourier space, the Berezin transform becomes the pointwise
multiplication of the Fourier transform $\hat f$ of $f$ by
a Gaussian (the Fourier transform of the Gaussian kernel
$K_z$). This is identically zero if and only if  $\hat
f = 0$, i.e.~only iff $f=0$.
This proves injectivity.

Let $V=B_\theta(\A)$ and $V^\perp$ its orthogonal complement in the Hilbert space closure of $\mc{S}(\R^2)$.  For any $f\in
V^\perp$ one has
$0=\inner{f,B_\theta(g)}_{L^2}=\inner{B_\theta(f),g}_{L^2}$ for all
$g$. Choosing
$g=B_\theta(f)$, one proves that $B_\theta(f)$ (hence $f$ by
injectivity) vanishes. Thus
$V^\perp=\{0\}$ and $B_\theta$ is surjective.
\end{proof}

\begin{prop}\label{propBerezin}
For all $\varphi,\psi\in\mc{S}(\OO_\theta)$:
\begin{equation}
d_{\A,D}(\varphi^\sharp,\psi^\sharp)\leq
d_{\OO_\theta,D_\theta}(\varphi,\psi)\leq
d_{\A,D}^{(\theta)}(\varphi^{\sharp},\psi^{\sharp}) \;.
\label{eq:104}
\end{equation}
\end{prop}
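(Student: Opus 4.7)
The proof proposal is as follows. Both inequalities should reduce to an almost immediate application of Lemma~\ref{technicallemmamoyal} once one exploits that the Toeplitz map $\pi_\theta$ is norm non-increasing and surjective onto $\OO_\theta$, and that the Berezin symbol $\sigma_\theta$ is norm non-increasing as well.

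\textbf{Left inequality.} The plan is to show that every $f = f^*\in\A$ with $\|[D,f]\|\leq 1$ produces, via Toeplitz quantization, an admissible test element $T:=\pi_\theta(f)\in\OO_\theta^{\mathrm{sa}}$ for $d_{\OO_\theta,D_\theta}$. By the first identity of Lemma~\ref{technicallemmamoyal},
\begin{equation*}
\|[D_\theta,\pi_\theta(f)\otimes\I_2]\|=\|(\pi_\theta\otimes\I_2)([D,f])\|\leq \|[D,f]\|\leq 1 \;,
\end{equation*}
since $\pi_\theta$ (and hence $\pi_\theta\otimes\I_2$) is norm non-increasing. As $\varphi^\sharp(f)-\psi^\sharp(f)=\varphi(T)-\psi(T)$, taking the supremum over such $f$ yields $d_{\A,D}(\varphi^\sharp,\psi^\sharp)\leq d_{\OO_\theta,D_\theta}(\varphi,\psi)$.

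\textbf{Right inequality.} Here the plan is to run the argument backwards, this time using the Berezin symbol. Let $T\in\OO_\theta^{\mathrm{sa}}$ with $\|[D_\theta,T]\|\leq 1$. By definition of $\OO_\theta$ there exists $g=g^*\in\A$ with $\pi_\theta(g)=T$ (self-adjointness is fixed by symmetrising, $g\mapsto (g+g^*)/2$). Then $B_\theta(g)=\sigma_\theta(T)$, and the second identity of Lemma~\ref{technicallemmamoyal} together with norm non-increasingness of $\sigma_\theta\otimes\I_2$ gives
\begin{equation*}
\|[D,B_\theta(g)]\|=\|(\sigma_\theta\otimes\I_2)([D_\theta,T])\|\leq\|[D_\theta,T]\|\leq 1,
\end{equation*}
so $g$ is an admissible test element for $d_{\A,D}^{(\theta)}$. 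Since $\varphi(T)-\psi(T)=\varphi^\sharp(g)-\psi^\sharp(g)$, taking the supremum over $T$ yields $d_{\OO_\theta,D_\theta}(\varphi,\psi)\leq d_{\A,D}^{(\theta)}(\varphi^\sharp,\psi^\sharp)$.

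\textbf{Where the difficulty lies.} There is no hard analytic step: the two commutator identities of Lemma~\ref{technicallemmamoyal} do all the work, exchanging the roles of $D$ and $D_\theta$ at the price of applying $\pi_\theta$ or $\sigma_\theta$, both of which are completely positive contractions. The only subtle bookkeeping is (i) keeping the $\otimes\I_2$ factors straight since $D$ and $D_\theta$ act on spinor Hilbert spaces while $\A$ and $\OO_\theta$ act on the scalar parts, and (ii) ensuring that in the second inequality we may indeed pick a self-adjoint preimage $g$ of $T$; both are routine. Note in particular that the argument does \emph{not} require $[P_\theta,D]=0$, so Lemma~\ref{lemma:qAA} cannot be invoked directly — the substitute is precisely Lemma~\ref{technicallemmamoyal}, which compensates for the lack of commutation through the Berezin transform $B_\theta$ appearing in the definition of $d^{(\theta)}_{\A,D}$.
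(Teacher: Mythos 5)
Your proposal is correct and follows essentially the same route as the paper: both rely on the two commutator identities of Lemma~\ref{technicallemmamoyal} together with the facts that $\pi_\theta$ is a surjective contraction and $\sigma_\theta$ a contraction, and then dualize; the paper merely compresses the bookkeeping you spell out into the phrase ``the opposite inequalities for the dual distances then follow.'' Your explicit verification that each admissible test element on one side produces an admissible test element on the other, with the same value of $\varphi-\psi$, is exactly the intended argument.
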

\begin{proof}
$\sigma_\theta$ and
$\pi_\theta$ are norm non-increasing. Moreover, being $\pi_\theta$
surjective,  there is always an $f$ such that $T=\pi_\theta(f)$, that is $\sigma_\theta(T) = B_\theta(f)$.  Omitting the identity, \eqref{eq:Dtpt} yields
\begin{equation}
\|[D_\theta,\pi_\theta(f)]\|\leq \|[D,f]\| \;, \qquad
\|[D,B_\theta(f)]\|\leq \|[D_\theta,T]\|
\end{equation}
The opposite inequalities for the dual distances then follow.
\end{proof}

Let us apply these results to the coherent states $\Psi_z$, $z\in\C$, 
that are the states of $\OO_\theta$ defined by the vectors $\psi_z$ in \eqref{eq:93}.
Recall that coherent states are the ``best
approximation'' of points in a quantum context, because they minimize
the uncertainty of $\hat{z}\hat{z}^*$ (the square of the ``distance
operator'' \cite{DFR,MMT11,Bahns}), where $\hat{z}=\pi_\theta(z)$. Another way to see that coherent states are a good approximation
of points is to notice that $\Psi_z^\sharp$ is the (non-pure) state of the Schwartz
  algebra $\A={\cal S(\C)}$ given by the evaluation at $z$ of the
  Berezin transform, 
  \begin{equation}
\Psi^\sharp_z(f)=B_\theta(f)(z).
\label{eq:107}
\end{equation}
As such $\Psi^\sharp_z$ converges to the pure state $\delta_z$ as $\theta\to 0$,
as follows from: 
\begin{lemma}\label{lemma:BM2} For any $f\in\A$, one has
$
\|f-B_\theta(f)\|_\infty\leq\sqrt{\pi\theta}\, L_D(f).
$
\end{lemma}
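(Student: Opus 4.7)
The plan is as follows. The key identity is the reproducing property of the kernel: from $\frac{1}{\pi\theta}\int_\C e^{-|z-\xi|^2/\theta}\de^2\xi = 1$, one has $\frac{1}{\pi}\int_\C K_z(\xi)\de^2\xi = 1$ for every $z\in\C$. Combined with the integral form \eqref{eq:87} of $B_\theta$, this gives the representation
\begin{equation*}
f(z) - B_\theta(f)(z) \;=\; \frac{1}{\pi}\int_\C K_z(\xi)\bigl(f(z)-f(\xi)\bigr)\de^2\xi.
\end{equation*}
Taking absolute values and using positivity of $K_z$,
\begin{equation*}
|f(z)-B_\theta(f)(z)| \;\leq\; \frac{1}{\pi}\int_\C K_z(\xi)\,|f(z)-f(\xi)|\,\de^2\xi.
\end{equation*}

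The second step is the Lipschitz estimate $|f(z)-f(\xi)| \leq L_D(f)\,|z-\xi|$ for all $z,\xi\in\C$. This is a direct consequence of the explicit form of $D$ in \eqref{eq:105}: an easy singular-value computation on the off-diagonal matrix $[D,f]$ yields $L_D(f) = 2\max(\|\partial f\|_\infty,\|\bar\partial f\|_\infty) \geq \|\nabla f\|_\infty$, and integrating $\nabla f$ along the Euclidean segment from $\xi$ to $z$ gives the bound. (It is enough to carry out the argument for self-adjoint $f$, since the Lemma's statement and both sides of the claimed inequality are invariant under splitting $f$ into real and imaginary parts; alternatively one notes the estimate holds verbatim for complex-valued $f$.)

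The third step is the Gaussian computation of the first moment of $K_z$. Substituting $u=\xi-z$ and passing to polar coordinates,
\begin{equation*}
\frac{1}{\pi}\int_\C K_z(\xi)\,|z-\xi|\,\de^2\xi
\;=\; \frac{1}{\pi\theta}\int_\C e^{-|u|^2/\theta}|u|\,\de^2 u
\;=\; \frac{2}{\theta}\int_0^\infty r^2 e^{-r^2/\theta}\de r
\;=\; \frac{\sqrt{\pi\theta}}{2},
\end{equation*}
where the last equality uses $\int_0^\infty r^2 e^{-r^2/\theta}\de r = \tfrac14 \sqrt{\pi}\,\theta^{3/2}$. Combining the three steps yields
\begin{equation*}
\|f-B_\theta(f)\|_\infty \;\leq\; \tfrac{1}{2}\sqrt{\pi\theta}\,L_D(f) \;\leq\; \sqrt{\pi\theta}\,L_D(f),
\end{equation*}
which is the stated (non-optimal) bound. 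No serious obstacle arises; the only point requiring care is the identification $L_D(f)\geq\|\nabla f\|_\infty$, ensuring that the classical Lipschitz estimate for $f$ is controlled by the spectral seminorm used on the right-hand side.
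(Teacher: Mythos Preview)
Your proof is correct and follows essentially the same route as the paper: write $f(z)-B_\theta(f)(z)$ as an average against the normalized Gaussian kernel, bound $|f(z)-f(\xi)|$ by $L_D(f)\,|z-\xi|$, and evaluate the first Gaussian moment in polar coordinates. In fact your evaluation of the radial integral $\frac{2}{\theta}\int_0^\infty r^2 e^{-r^2/\theta}\,\de r=\tfrac12\sqrt{\pi\theta}$ yields a constant twice as good as the one recorded in the paper, so your final inequality $\|f-B_\theta(f)\|_\infty\le\tfrac12\sqrt{\pi\theta}\,L_D(f)$ is the sharp outcome of this argument and the stated bound follows a fortiori.
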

\begin{proof}
We use \eqref{eq:87}.
For any $z\in\C$, $\pi^{-1}K_z(\xi)$ is a Gaussian probability measure on $\C$.
Similarly to \cite[Theo.~2.3]{Rie04b}, one has ($r:=|\xi|$):
\begin{gather*}
|f(z)-B_\theta(f)(z)| =\frac{1}{\pi}\left|\int_{\C}K_z(\xi)\big(f(z)-f(\xi)\big)\de^2\xi\right|
\leq \frac{1}{\pi}\int_{\C}K_z(\xi)\big|f(z)-f(\xi)\big|\de^2\xi
\\[2pt]
\leq L_D(f) \frac{1}{\pi}\int_{\C}|z-\xi|K_z(\xi)\de^2\xi 
=L_D(f) \int_0^\infty \frac{2}{\theta}e^{-\frac{1}{\theta}r^2}r^2\de r
=\sqrt{\pi\theta} L_D(f) \,.\qedhere
\end{gather*}
\end{proof}

Prop.~\ref{propBerezin} allows to compute the distance between coherent
states,  and retrieve a result proved in \cite{MT11}  from a completely different perspective.
\begin{prop}
\label{cor:coherent}
For any $z, z'\in\C$, 
\begin{equation}\label{eq:83}
d_{\OO_\theta, D_\theta}(\Psi_z, \Psi_{z'}) = |z -z'| \,. 
\end{equation}
\end{prop}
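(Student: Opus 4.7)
The plan is to invoke Proposition~\ref{propBerezin} to sandwich the target distance
\[
d_{\A, D}(\Psi_z^\sharp, \Psi_{z'}^\sharp) \;\leq\; d_{\OO_\theta, D_\theta}(\Psi_z, \Psi_{z'}) \;\leq\; d_{\A, D}^{(\theta)}(\Psi_z^\sharp, \Psi_{z'}^\sharp),
\]
and then to show that the two outer quantities both equal $|z-z'|$, thereby collapsing the sandwich to a single value.

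First I would treat the upper bound. Thanks to \eqref{eq:107}, $\Psi_z^\sharp(f) - \Psi_{z'}^\sharp(f) = B_\theta(f)(z) - B_\theta(f)(z')$. Because the Berezin transform is a bijective self-map of $\mc{S}(\R^2)$ (established in the lemma immediately preceding Prop.~\ref{propBerezin}), I can change variables $g := B_\theta(f)$ in the defining supremum \eqref{eq:109}: the constraint $\|[D, B_\theta(f)]\|\leq 1$ becomes $\|[D,g]\|\leq 1$, while $f$ being Schwartz is equivalent to $g$ being Schwartz. Thus
\[
d_{\A,D}^{(\theta)}(\Psi_z^\sharp,\Psi_{z'}^\sharp) \;=\; \sup_{g\in\mc{S}(\R^2),\,\|[D,g]\|\leq 1}\bigl(g(z)-g(z')\bigr) \;=\; d_{\A,D}(\delta_z,\delta_{z'}) \;=\; |z-z'|,
\]
where the last equality is the standard identification of the spectral distance with the geodesic distance recalled in \S\ref{section2.1}.

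For the lower bound, I would first rewrite $\Psi_z^\sharp$ as the state on $\mc{S}(\R^2)$ given by integration against the Gaussian probability measure $\mu_z := \pi^{-1} K_z(\xi)\,\de^2\xi = (\pi\theta)^{-1} e^{-|z-\xi|^2/\theta}\,\de^2\xi$; this is immediate from \eqref{eq:BM1}--\eqref{eq:93} (or equivalently from the reproducing kernel formula~\eqref{eq:87}). Crucially, $\mu_z$ is the translate of $\mu_{z'}$ by the vector $z-z'$. Using the identification (recalled in \S\ref{section2.1}, see~\cite{DM09}) of the spectral distance with the Wasserstein distance of order one on the complete manifold $\R^2$, together with the translation invariance of $W_1$ on Euclidean space (the translate coupling is trivially optimal), one obtains $d_{\A,D}(\Psi_z^\sharp,\Psi_{z'}^\sharp) = W_1(\mu_z,\mu_{z'}) = |z-z'|$.

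The main subtlety is checking that $B_\theta$ really does map $\mc{S}(\R^2)$ bijectively onto itself, so that the change of variables in the upper-bound step goes through without loss; but this is precisely what was proved just before Prop.~\ref{propBerezin}, so no new work is required. A minor secondary point is that the Wasserstein identification applies to our non-pure, but very smooth, Gaussian states; since the classical manifold is complete and the measures are Schwartz, the well-known Kantorovich--Rubinstein duality applies directly. Combining the two equalities through Prop.~\ref{propBerezin} yields \eqref{eq:83}.
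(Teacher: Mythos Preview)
Your overall strategy is exactly the paper's: sandwich via Prop.~\ref{propBerezin} and show both ends equal $|z-z'|$. The upper-bound step (substitute $g=B_\theta(f)$ using bijectivity of $B_\theta$) is fine; after the substitution you only need $U\leq |z-z'|$, and that follows since the supremum over Schwartz $1$-Lipschitz functions is no greater than the geodesic distance.

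There is, however, a genuine gap in your lower-bound step, and it is precisely the point the paper pauses to address. You invoke the identification of the spectral distance with the Wasserstein distance $W_1$ recalled in \S\ref{section2.1}, but that identification is stated for $\A=C^\infty_0(M)$, whereas here the algebra in \eqref{eq:105} is $\A=\mc{S}(\R^2)$, which is strictly smaller. A priori one only gets
\[
d_{\mc{S}(\R^2),D}(\Psi_z^\sharp,\Psi_{z'}^\sharp)\;\leq\;d_{C^\infty_0(\R^2),D}(\Psi_z^\sharp,\Psi_{z'}^\sharp)\;=\;W_1(\mu_z,\mu_{z'})\;=\;|z-z'|,
\]
which is the wrong inequality for the sandwich: you need $d_{\A,D}(\Psi_z^\sharp,\Psi_{z'}^\sharp)\geq |z-z'|$ to force the middle term up. Your ``minor secondary point'' addresses the regularity of the Gaussian measures, but not the restriction of the test-function class. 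The paper closes this gap by exhibiting an explicit sequence of $1$-Lipschitz Schwartz functions (essentially $\xi\mapsto \mathrm{Re}\big(\bar\xi\cdot\tfrac{z-z'}{|z-z'|}\big)e^{-|\xi|^2/n}$) on which the supremum is attained in the limit $n\to\infty$; adding this one line to your argument makes it complete.
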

\begin{proof}
Due to \eqref{eq:104}, it is enough to prove that
\begin{equation}
d_{\A,D}(\Psi_z^\sharp, \Psi_{z'}^\sharp)=|z-z'| = d_{\A,D}^{(\theta)}(\Psi_z^\sharp,\Psi_{z'}^\sharp).
\label{eq:55}
\end{equation}
By \eqref{eq:107} and \eqref{eq:87}, $\Psi_z^\sharp$ and $\Psi_{z'}^\sharp$ are the non pure states of $\A$ given
by the Gaussian measures $\pi^{-1}K_z$, $\pi^{-1}K_{z'}$. The first equality in \eqref{eq:55} would
be immediate if $\A$ were the algebra $C^\infty_0(\R^2)$: in this case $d_{\A,D}$ would be the Wasserstein
distance on the Euclidean plane, and it is known that the distance between two Gaussians (with the same variance) is the
Euclidean distance between the peaks (see e.g.~\cite{DM09}). 

Here $\A=\mc{S}(\R^2)$
is smaller than $C^\infty_0(\R^2)$, so in principle $d_{\A,D}(\Psi_z^\sharp, \Psi_{z'}^\sharp)\leq
|z-z'|$, but one easily checks that the supremum is attained on the sequence of
$1$-Lipschitz Schwartz functions (of the variable $\xi$):
\begin{equation}\label{eq:seq}
f_{n,z,z'}(\xi)=\bar\xi\cdot\frac{z-z'}{|z-z'|}\,e^{-\frac{1}{n}\,|\xi|^2} \;.
\end{equation}
On the other hand, since $B_\theta$ is surjective on $\A$,
\begin{align*}
d_{\A,D}^{(\theta)}(\Psi_z^\sharp, \Psi_{z'}^\sharp)
&=\sup_{f=f^*\in\A}\big\{ B_\theta(f)(z)-B_\theta(f)(z') \,:\, \|[D,B_\theta(f)]\|\leq 1 \big\}
\\[3pt]
&=\sup_{g=g^*\in\A}
\big\{ g(z)-g(z') \,:\, \|[D,g]\|\leq 1 \big\} \leq |z-z'| \;,
\end{align*}
where $g=B_\theta(f)$ and, since $\A\subset C^\infty_0(\R^2)$, the
distance above is no greater than the geodesic distance. Using again the
sequence \eqref{eq:seq} one proves that the supremum is attained on Schwartz
functions, and last inequality is in fact an equality.
\end{proof}

\begin{rem}
Let $\inner{A,B}_{\mathrm{HS}}=\theta\,\tr(A^*B)$ be the Hilbert-Schmidt inner product on $\mc{L}^2(\HH_\theta)$.
By an explicit computation one checks that
$\inner{\pi_\theta(f),\pi_\theta(g)}_{\mathrm{HS}}=\inner{f,B_\theta(g)}_{L^2}$ for all $f,g\in\A$.
Since $\|B_\theta(f)\|_\infty\leq\|f\|_\infty$,
\begin{equation}
\|\pi_\theta(f)\|_{\mathrm{HS}}^2\leq \|f\|_{L^1}\|f\|_\infty
\end{equation}
is finite, proving that elements of $\pi_\theta(\A)$ are Hilbert-Schmidt operators.
If we replace $\OO_\theta$ by the algebra $\A_\theta$ generated by $\pi_\theta(\A)$,
we get a spectral triple $(\A_\theta,\HH_\theta\otimes\C^2,D_\theta)$ close to the irreducible
spectral triple of Moyal plane (for the latter, one uses the algebra of rapid decay matrices,
in the basis $h_n$, that is dense in the algebra of Hilbert-Schmidt operators). Note that this spectral triple is metrically equivalent to the one in~\cite{GGISV04}. 
\end{rem}

\subsection{Quantum discs, fuzzy spaces and other examples}

Berezin quantization  applies as well to the unit disc
$\mathfrak{D}=\{z\in\C:|z|<1\}$. With measure  
\begin{equation}
\de\mu=\frac{1}{(1-|z|^2)^2}\de^2z \;,
\end{equation}
one projects on the subspace
$\F:=L^2_{\mathrm{hol}}(\mathfrak{D},\de\mu)$ of holomorphic functions
(the Bergman space) of the Hilbert space $\HH=L^2(\mathfrak{D},\de\mu)$. This yields the quantum disc of
\cite{KL92}.
Using instead the measure
\begin{equation}
\de\mu_\alpha=\pi^{-1}(\alpha+1)(1-|z|^2)^\alpha\de^2z \quad -1<\alpha<\infty,
\end{equation}
the corresponding spaces $\F_\alpha:=L^2_{\mathrm{hol}}(\mathfrak{D},\de\mu_\alpha)$ are the \emph{weighted Bergman space}
(the Hardy space if $\alpha=0$). The corresponding truncated algebra describes the quantum disc
of \cite{GQBV06}.
With a more complicated measure (not absolutely continuous with respect to the Lebesgue one)
one get the $q$-disc \cite[Eq.~(3)]{Kli04}.
Another example, where the projection operator has finite rank, is
given by the fuzzy disk \cite{LVZ03}.
\smallskip

A similar construction holds for the torus $\T^2$ and the projective space $\CP^n$.
For $E\to\T^2$, resp. $E'\to\CP^n$, holomorphic line bundles with Chern number $N$, the subspaces of
$L^2(\T^2,E)$, resp. $L^2(\CP^n,E)$, of holomorphic sections is
finite-dimensional with dimension $|N|$, resp. $\binom{n+|N|}{n}$.
In both cases the corresponding projection has finite rank so that the
truncation yield a finite-dimensional spectral triple:
for instance for $n=1$, $\CP^1\simeq\bS^2$ and one gets the fuzzy sphere
(see e.g.~\cite{SS11}). 

A more general class of examples is given by the Berezin quantization of a compact K{\"a}hler manifold,
that is always given by finite-dimensional full matrix algebras. The quantization map $\pi_N:\A\to\A_N$ is surjective (so $\pi_N(\A)$ is already an algebra), and provides a strict deformation quantization in the sense of Rieffel \cite{BMS94}.
\medskip

Among all these examples, we study in the following the fuzzy
sphere (and keep the other examples for further works). Clearly the
Dirac operator of $\CP^n$ does not commute with
the projection (for example, $P \partial/\partial\bar z_i P=0$ for $z_i$ a homogeneous coordinate
on $\CP^n$ or the complex coordinate on the covering $\C$ of $\T^2$),
so that Lemma \ref{lemma:qAA} does not apply.
However, it is possible to obtain the fuzzy $\CP^n$ (and more generally fuzzy homogeneous spaces) using
projections that commute with the Dirac operator. For $\CP^n$ one
projects on a finite direct sum of irreducible
representations of $SU(n+1)$, the so called \emph{Weyl-Wigner
formalism} (see \cite{VG89}). At
least for the fuzzy sphere, this gives rise to the same quantized
space as Berezin quantization. 
A third way to obtain fuzzy spaces is via coherent states
quantization, that we investigate in the next section.
Metric properties of the fuzzy sphere are investigated in \cite{DLV12},
where we derive that the distance between coherent state converges to the
geodesic distance in the $N\to\infty$ limit. In this paper we show in Prop.~\ref{ineqfuzzysphere} and  Prop.~\ref{prop:fuzzysphere} the existence of an upper and lower bounds for the truncated states, and that also in this case pure states are at an infinite distance.

\medskip

Before that, let us point out that the regularization of the real line by spectral
projection, investigated in \S\ref{sec:5.3}, is an example of Berezin
quantization. Indeed consider the spectral triple \eqref{eq:50}, with $P_\Lambda$ the spectral projection of $D$ in the interval
$[-\Lambda,\Lambda]$. By  Fourier transform one proves that:
\begin{equation}
P_\Lambda=\int_{-\infty}^\infty\de t\,\frac{\sin \Lambda t}{\pi t}\,U_{t} \;,
\end{equation}
where $U_tf(x)=f(x+t)$.
Let $K_x$ be the following kernel:
\begin{equation}
K_x(t):=\frac{\sin\Lambda(x-t)}{\pi(x-t)} \;.
\end{equation}
For any $x$, $K_x$ is a vector in $P_\Lambda\HH$ (with norm $\sqrt{\Lambda/\pi}$). One has
$
P_\Lambda(f)(x)=\inner{f,K_x}
$
hence for any $f\in\HH_\Lambda$, $\inner{f,K_x}=f(x)$.
In other words, $P_\Lambda\HH$ is a reproducing kernel Hilbert space, and the usual
cut-off procedure on the real line is yet another example of Berezin
quantization.
\subsection{The fuzzy sphere as a coherent state quantization}
\label{sectionfuzzysphere}

The standard Berezin quantization of the sphere $\bS^2\simeq\CP^1$ consists in
taking a power of the quantum line
bundle (i.e.~the dual of the tautological bundle)
and project on the finite-dimensional space of
holomorphic sections. Here we follow the alternative approach of
coherent state quantization.

The canonical spectral triple for $\bS^2$ is $(C^\infty(\bS^2) , L^2(\bS^2)\otimes\C^2,D)$ where
\begin{equation}
D=\bigg(\!\begin{array}{cc}
\frac{1}{2}+\partial_H & \partial_F \\ \partial_E & \frac{1}{2}-\partial_H
\end{array}\!\bigg) 
\label{eq:110}
\end{equation}
and in spherical coordinates $\phi\in[0,2\pi]$ and $\vartheta\in[0,\pi]$ the derivatives in \eqref{eq:110} are:
\begin{equation}
\partial_H = - i \,\frac{\partial}{\partial\phi} \;,
\qquad
\partial_E = e^{ i \phi} \biggl( \frac{\partial}{\partial\vartheta} +  i  \cot\vartheta \frac{\partial}{\partial\phi} \biggr) \;,
\qquad
\partial_F = - e^{- i \phi} \biggl( \frac{\partial}{\partial\vartheta} -  i  \cot\vartheta \frac{\partial}{\partial\phi} \biggr) \;.
\label{eq:94}
\end{equation}
We write the inner product on $L^2(\bS^2)$ as
\begin{equation}
\inner{f,g}_{L^2}=\int_{\bS^2}\overline{f(x)}g(x)\de\mu_x\label{eq:99}
\end{equation}
with $\de\mu_x$ the $SU(2)$-invariant measure normalized to $1$.
An orthonormal basis is given by Laplace spherical
harmonics $Y_{\ell,m}$.{\footnote{Within our normalization, one has e.g.~$Y_{0,0}(x)=1$
and not $1/\sqrt{4\pi}$ as more commonly used.}}

The notation $\partial_H, \partial_E, \partial_F$ comes from the fact that
these operators are the image of the standard Chevalley generators $H=H^*$, $E$ and $F=E^*$ of
the Lie algebra $\mathfrak{su}(2)$ under the representation $\partial:\mathfrak{su}(2)\to\mathrm{Der}(C^\infty(\bS^2))$
as vector fields on $\bS^2$. Let us recall that, in the Chevalley basis, the defining relations
of $\mathfrak{su}(2)$ are $[E,F]=2H$, $[H,E]=E$, $[H,F]=-F$.
The irreducible representation $\rho_\ell:\mathfrak{su}(2)\to\mathrm{End}(V_\ell)$ with highest $\ell\in\frac{1}{2}\N$
is defined as follows;
the underlying vector space $V_\ell\simeq\C^{2\ell+1}$ has orthonormal basis $\ket{\ell,m}$, with $m=-\ell,\ldots,\ell$, and
\begin{equation}
 \rho_\ell(H)\ket{\ell,m} =m\ket{\ell,m} \;,\qquad
 \rho_\ell(E)\ket{\ell,m} =\sqrt{(\ell-m)(\ell+m+1)}\ket{\ell,m+1} \;,
\end{equation} 
with $\rho_\ell(F)=\rho_\ell(E)^*$. The representation $\partial$ decomposes as direct sum of all
$\rho_\ell$ with integer $\ell$, and the equivalence
\begin{equation}
U\partial_\xi U^*=\oplus_\ell
\rho_\ell(\xi)\;\;\forall\;\xi\in\mf{su}(2)
\label{eq:100}
\end{equation}
is implemented by the unitary map
\begin{equation}
U: L^2(\bS)\to \K:= \bigoplus_{\ell\in\N}V_\ell,\quad 
U(Y_{\ell,m}):=\ket{\ell,m}.
\label{eq:108}
\end{equation}

The irreducible spectral triple{\footnote{As for Moyal plane, we
  use a index notation instead of $\qA$,$\qH$,$\qD$ to stress the
  $\ell$-dependence of the objects.}}
  $(\A_\ell,V_\ell\otimes\C^2,D_\ell)$ of the fuzzy sphere is obtained by the
action of the orthogonal projection $P_\ell: \K\to V_\ell$ on
$(C^\infty(\bS^2),\K, UDU^*)$, which is unitary equivalent to the
canonical spectral triple of the sphere. Namely \cite[eq.~(4.1)]{DLV12} 
\begin{equation}
\A_\ell:=\mathrm{End}(V_\ell)\simeq M_{2\ell+1}(\C), \quad D_\ell:=P_\ell (UDU^*) P_\ell=\bigg(\!\begin{array}{cc}
\frac{1}{2}+\rho_\ell(H) & \rho_\ell(F) \\ \rho_\ell(E) & \frac{1}{2}-\rho_\ell(H)
\end{array}\!\bigg) \;.
\label{eq:111}
\end{equation}

We equip $\A_\ell$ with the Hilbert-Schmidt inner product:
\begin{equation}
\inner{A,B}_{\mathrm{HS}}:=\gamma_\ell^{-1}\tr(A^*B) \;\text{ with }\;  \gamma_\ell:=2\ell+1.
\end{equation}
The \emph{covariant Berezin symbol} $\sigma_\ell:\A_\ell\to\A$ is defined
as
$\sigma_\ell(a)(x):=
\Psi_{x;\ell}(a),
$ where 
$\Psi_{x;\ell}$
is the \emph{Bloch coherent state}. Namely $\Psi_{x;\ell}(a)=\tr(Q_{x;\ell}\, a)$  is the vector state of $\A_\ell$
defined by the rank $1$ projection
\begin{equation}
Q_{x;\ell}=
\sum_{m,n=-\ell}^{\ell}\binom{2\ell}{\ell+m}^{\frac{1}{2}}
\binom{2\ell}{\ell+n}^{\frac{1}{2}}e^{i(n-m)\phi}(\sin\tfrac{\vartheta}{2})^{2\ell+m+n}(\cos\tfrac{\vartheta}{2})^{2\ell-m-n}\ketbra{\ell,m}{\ell,n} \;.
\end{equation}
One easily checks that for any fixed $\ell$, the map
$\bS^2\to\mc{S}(\A_\ell)$, $x\mapsto\Psi_{x;\ell}$,
is injective.
We denote by $\pi_\ell$ the adjoint map:
\begin{equation}
\inner{f,\sigma_\ell(a)}_{L^2}=\inner{\pi_\ell(f),a}_{\mathrm{HS}}
\quad\forall f\in \A,\,a\in\A_\ell.
\label{eq:95}
\end{equation}
\begin{rem}
The operator $\pi_\ell(f)$ is not the Toeplitz operator $\breve{\pi}_\ell(f):=P_\ell UfU^*P_\ell$
given by the action of the projection, as in Moyal
case. It is an easy exercise to check that:
\begin{equation}
\breve{\pi}_\ell(f)=\gamma_\ell\int_{\bS^2}f(x)R_{x;\ell}\de\mu_x \;,
\end{equation}
where $R_{x;\ell}$ is the rank $1$ projection in the direction of the vector
$\sum_{m=-\ell}^\ell \overline{Y_{\ell,m}(x)}\ket{\ell,m}$,~while
\begin{equation}\label{eq:BSpil}
 \pi_\ell(f)=\gamma_\ell\int_{\bS^2}f(x)Q_{x;\ell}\de\mu_x \;.
 \end{equation}
That the quantization maps $\pi_\ell$ and $\breve{\pi}_\ell$ are
different can be seen for instance when $x=x_0$ is the north pole (so $\vartheta=0$): then $R_{x_0;\ell}$ projects in the direction
of $\ket{\ell,0}$, whereas $Q_{x_0;\ell}$ projects in the direction of the lowest weight vector $\ket{\ell,-\ell}$.

This point has been discussed in some detail in section 4 of \cite{VG89}.
\end{rem}

Since $[UDU^*, P_\ell ]= 0$,  using the quantization map
$\breve{\pi}_\ell$ one applies Lemma \ref{lemma:qAA} and get that the
distance $d_{\A_\ell, D_\ell}$ on the fuzzy sphere is not smaller than the
distance $d_{\A, D}$ on $\mc{S}(C^\infty(\bS^2))$ induced by the pull back
of $\breve{\pi}_\ell$. In the following we will use instead the map
$\pi_\ell$ so that to find -- as in Moyal plane -- a lower bound
$d^\flat_{\A, D}$ to
$d_{\A_\ell, D_\ell}$ and also an upper bound given by the distance
\begin{equation}
d_{\A,D}^{(\ell)}(\varphi,\psi) :=\sup_{f=f^*\in\A}
\big\{ \varphi(f)-\psi(f)\,:\, \|[D,B_\ell(f)]\|\leq 1 \big\} \;
\end{equation}
where $B_\ell:=\sigma_\ell\circ\pi_\ell:\A\to \A$
 is the Berezin transform,
\begin{equation}
B_\ell(f)(x)=\int_{\bS^2}K_x(y)f(y)\de\mu_y=\inner{K_x,f}_{L^2} \;,
\end{equation}
with reproducing kernel $K_x(y)=\gamma_\ell\tr(Q_{x;\ell}Q_{y;\ell})$.
Note that $K_x(y)$ can easily be expressed explicitly as a sum of Legendre polynomials, see for instance Eqn.~(4.7) of Ref.~\cite{VG89}.

\medskip

We begin with technical lemmas, similar to the ones for Moyal plane.
\begin{lemma}
Both the maps $\sigma_\ell$ and $\pi_\ell$ are unital, positive and norm
non-increasing. Moreover $\sigma_\ell:\A_\ell\to\A$ is injective, hence the adjoint map
$\pi_\ell:\A\to\A_\ell$ is surjective and the pull back
$\sharp:\mc{S}(\A_\ell)\to\mc{S}(\A)$ of $\pi_\ell$ on the space of state is injective.
\end{lemma}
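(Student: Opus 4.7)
The plan is to verify the four assertions in the order stated, leaning on the explicit integral representation \eqref{eq:BSpil} for $\pi_\ell$, the formula $\sigma_\ell(a)(x)=\tr(Q_{x;\ell}a)$, and the $SU(2)$-equivariance of the whole construction (coming from $U(g)Q_{x;\ell}U(g)^* = Q_{g\cdot x;\ell}$ and the $SU(2)$-invariance of $\de\mu_x$).

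First I would check unitality. For $\sigma_\ell$ this is immediate: $\sigma_\ell(\I_{V_\ell})(x)=\tr(Q_{x;\ell})=1$ because $Q_{x;\ell}$ is a rank-one projection. For $\pi_\ell(1)=\gamma_\ell\int_{\bS^2}Q_{x;\ell}\de\mu_x$, the integral intertwines $\rho_\ell$ with itself by $SU(2)$-invariance, so by Schur's lemma it is a scalar multiple of $\I_{V_\ell}$; taking the trace and using $\int\!\de\mu_x=1$ and $\tr Q_{x;\ell}=1$ fixes the multiple and yields $\pi_\ell(1)=\I_{V_\ell}$. Positivity is equally direct: $\tr(Q_{x;\ell}a)\geq 0$ for $a\geq 0$ as the trace of a product of two positive operators, while $\pi_\ell(f)$ is an integral of a non-negative scalar against the positive operator $Q_{x;\ell}$ when $f\geq 0$. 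Norm non-increasing then follows from the standard fact that a unital positive map between unital C*-algebras has norm $1$ (note that $\A_\ell$ is a $C^*$-algebra, and $\A$ sits inside $C(\bS^2)$ as the dense subalgebra of smooth functions).

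The only step that requires work is the injectivity of $\sigma_\ell$. I would give the direct argument: using stereographic coordinates $z$ on $\bS^2$, one sees from the explicit formula for $Q_{x;\ell}$ that, up to a fixed nonvanishing scalar factor depending only on $|z|^2$, $\sigma_\ell(a)(z)$ is a polynomial of degree at most $2\ell$ in $z$ and $\bar z$ whose coefficients are exactly the matrix entries of $a$ in the basis $\{\ket{\ell,m}\}$. Vanishing of this function therefore forces $a=0$. A more conceptual alternative is the $SU(2)$-equivariance plus Schur: $\A_\ell\cong\bigoplus_{k=0}^{2\ell}V_k$ and $\A\cong\bigoplus_{k\geq 0}V_k$ as $SU(2)$-modules with multiplicity one, so $\sigma_\ell$ acts as a scalar on each isotypic component, and one checks that all these scalars are nonzero (e.g.\ from the fact that $\sigma_\ell$ is non-zero on the trivial component since $\sigma_\ell(\I_{V_\ell})=1$, and repeating the argument on highest-weight vectors of each $V_k$).

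The remaining statements are formal consequences. Since $\pi_\ell$ is the adjoint of $\sigma_\ell$ with respect to $\inner{\,,\,}_{L^2}$ and $\inner{\,,\,}_{\mathrm{HS}}$, one has $\mathrm{range}(\pi_\ell)^\perp=\{a\in\A_\ell:\inner{f,\sigma_\ell(a)}_{L^2}=0\;\forall f\in\A\}=\ker\sigma_\ell=\{0\}$, so the range of $\pi_\ell$ is dense in the finite-dimensional space $\A_\ell$ and hence equal to it. Finally, if $\varphi_1\circ\pi_\ell=\varphi_2\circ\pi_\ell$ on $\A$, surjectivity of $\pi_\ell$ forces $\varphi_1(a)=\varphi_2(a)$ for every $a\in\A_\ell$, giving injectivity of $\sharp$. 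The main obstacle is thus the injectivity of $\sigma_\ell$; everything else is bookkeeping.
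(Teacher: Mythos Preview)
Your proof is correct and follows essentially the same route as the paper: Schur's lemma plus a trace computation for $\pi_\ell(1)=\I_{V_\ell}$, the state property of $\Psi_{x;\ell}$ for the properties of $\sigma_\ell$, and an explicit check for the injectivity of $\sigma_\ell$. If anything your version is slightly cleaner, since you argue $SU(2)$-equivariance of $\int Q_{x;\ell}\,\de\mu_x$ directly rather than invoking the intertwining relation \eqref{eq:lemma1}, which in the paper is stated only \emph{after} this lemma; you also spell out the injectivity argument that the paper leaves as ``explicit computation''.
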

\begin{proof}
For $\sigma_\ell$, all properties but injectivity follow from $\Psi_{x;\ell}$ being a state:
$|\sigma_\ell(a)(x)|=|\Psi_{x;\ell}(a)|\leq\|a\|$ for all $x$, hence
$\|\sigma_\ell(a)\|_\infty\leq\|a\|$.
The injectivity of $\sigma_\ell$ is checked by explicit computation.
For $\pi_\ell$, the only non-trivial point is unitality, i.e.~$\pi_\ell(1)=P_\ell$. From \eqref{eq:lemma1}
one has $[\rho_\ell(\xi),\pi_\ell(1)]=0$ for all $\xi\in\mf{su}(2)$. Since the representation $V_\ell$ is irreducible,
by Shur lemma $\pi_\ell(1)=\lambda P_\ell$ is proportional to the identity endomorphism of $V_\ell$. Since $\tr(\pi_\ell(1))=\gamma_\ell\int_{\bS^2}\tr(Q_{x;\ell})\de\mu_x=2\ell+1$
and $\tr(\lambda P_\ell)=\lambda(2\ell+1)$, one gets $\lambda=1$.
\end{proof}
\begin{lemma}\label{lemma:Qeq}
For all $\xi\in\mf{su}(2)$ and $x\in\bS^2$:
\begin{equation}
\partial_\xi Q_{x;\ell}+[\rho_\ell(\xi),Q_{x;\ell}]=0 \;.
\end{equation}
\end{lemma}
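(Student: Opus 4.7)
The strategy is to exploit the $SU(2)$-covariance of the Bloch coherent states. The vector $\ket{x;\ell}$ is obtained, up to a phase, as $\rho_\ell(g(x))\ket{\ell,-\ell}$ where $g(x)\in SU(2)$ is any rotation carrying the north pole $x_0$ to $x$. Since global phases cancel in the rank-$1$ projection $Q_{x;\ell} = \ket{x;\ell}\bra{x;\ell}$, one obtains the exact covariance
\begin{equation*}
Q_{g\cdot x;\ell} \;=\; \rho_\ell(g)\, Q_{x;\ell}\, \rho_\ell(g)^{-1}
\qquad \forall\,g\in SU(2),\ x\in\bS^2,
\end{equation*}
with $SU(2)$ acting on $\bS^2\simeq SU(2)/U(1)$ in the standard way. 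The claimed identity is then the infinitesimal form of this relation.

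Concretely, I would differentiate the covariance relation at $g=e$ in a direction $\xi\in\mathfrak{su}(2)$. The right-hand side produces $[\rho_\ell(\xi),Q_{x;\ell}]$. For the left-hand side, recall that $\partial_\xi$ is, by construction of $\partial$ through the equivalence \eqref{eq:100}, the vector field on $\bS^2$ generating the infinitesimal $SU(2)$-action on functions; with the standard pull-back convention $(\alpha_g f)(x) := f(g^{-1}\cdot x)$, the $t$-derivative at $0$ of $Q_{\exp(t\xi)\cdot x;\ell}$ equals $-\partial_\xi Q_{x;\ell}$, and equating the two sides yields the lemma. As a sanity check one can test the identity against any $a\in\A_\ell$: the relation becomes $\partial_\xi\sigma_\ell(a) = -\sigma_\ell([\rho_\ell(\xi),a])$, which is precisely the infinitesimal $SU(2)$-equivariance of the Berezin symbol map $\sigma_\ell$.

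For a direct verification, and to pin the sign convention, I would check the identity explicitly on the Chevalley generators $H,E,F$ using the expansion of $Q_{x;\ell}$ in the basis $\ketbra{\ell,m}{\ell,n}$. For $\xi=H$ the computation is immediate: $\partial_H=-i\partial_\phi$ multiplies the coefficient $e^{i(n-m)\phi}$ in $Q_{x;\ell}$ by $(n-m)$, while $[\rho_\ell(H),\ketbra{\ell,m}{\ell,n}]=(m-n)\ketbra{\ell,m}{\ell,n}$, and the two contributions cancel. For $\xi=E,F$ the derivatives $\partial_E,\partial_F$ in \eqref{eq:94} act on $(\sin\tfrac{\vartheta}{2})^{2\ell+m+n}(\cos\tfrac{\vartheta}{2})^{2\ell-m-n}e^{i(n-m)\phi}$ by the chain rule; after regrouping one recognises, up to relabelings $(m,n)\mapsto(m\mp 1,n)$ and $(m,n\mapsto m,n\pm 1)$, exactly the ladder-operator coefficients $\sqrt{(\ell\mp m)(\ell\pm m+1)}$ produced by commutation with $\rho_\ell(E),\rho_\ell(F)$, with the binomial ratios supplying the correct normalisation.

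The main obstacle is purely bookkeeping of sign and convention: one must be careful that the identifications (the $SU(2)$-action on $\bS^2$, the definition of $\partial_\xi$ on $C^\infty(\bS^2)$, and the embedding $\partial_\xi\leftrightarrow\rho_\ell(\xi)$ implemented by \eqref{eq:100}) are all compatible. The diagonal case $\xi=H$ above fixes the convention unambiguously; once it is established, the group-theoretic argument delivers the result for the whole Lie algebra without further computation.
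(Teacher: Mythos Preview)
Your approach is correct and, in fact, more self-contained than the paper's. The paper simply quotes the equivariance relation $\Psi_{x;\ell}([\rho_\ell(\xi),a])=\partial_\xi\Psi_{x;\ell}(a)$ from an earlier reference and then dualises via the Hilbert--Schmidt pairing and cyclicity of the trace to conclude. You instead establish the lemma directly by differentiating the group covariance $Q_{g\cdot x;\ell}=\rho_\ell(g)Q_{x;\ell}\rho_\ell(g)^{-1}$, which is exactly how that cited equivariance is ultimately proved anyway. Your explicit $\xi=H$ check is a clean way to pin the sign, and the sketch for $E,F$ is fine.

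One small slip: in your ``sanity check'' sentence you write that the dual relation is $\partial_\xi\sigma_\ell(a)=-\sigma_\ell([\rho_\ell(\xi),a])$. The sign is wrong. From $\partial_\xi Q_{x;\ell}=-[\rho_\ell(\xi),Q_{x;\ell}]$ and cyclicity one gets
\[
\partial_\xi\sigma_\ell(a)(x)=\tr\big((\partial_\xi Q_{x;\ell})a\big)=-\tr\big([\rho_\ell(\xi),Q_{x;\ell}]a\big)=\tr\big(Q_{x;\ell}[\rho_\ell(\xi),a]\big)=\sigma_\ell([\rho_\ell(\xi),a])(x),
\]
which is the relation the paper records in \eqref{eq:lemma1}. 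Your own $H$-computation already confirms this sign, so the discrepancy is just a typo in that aside and does not affect the argument.
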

\begin{proof}
From \cite[Lem. 4.2]{DLV12},
$\Psi_{x;\ell}([\rho_\ell(\xi),a])=\partial_\xi\Psi_{x;\ell}(a) \; \forall\xi\in\mf{su}(2)$, $a\in\A_\ell$ and $x\in\bS^2$. By  cyclicity of
the trace, 
$\inner{a,\partial_\xi
  Q_{x;\ell}+[\rho_\ell(\xi),Q_{x;\ell}]}_{\mathrm{HS}}=0\; \forall
a\in\A_\ell$, hence the lemma.
\end{proof}
\begin{cor}
For all $\xi\in\mf{su}(2)$, $f\in\A$ and $a\in\A_\ell$ one has {\footnote{When acting on $\HH\otimes\C^2$ and $V_\ell\otimes\C^2$, we
 write $f$ and $a$ for the operators $f\otimes\id_{\C^2}$ and $a\otimes\id_{\C^2}$. Similarly the maps $\pi_\ell\otimes\id_{M_2(\C)}$
 and  $\sigma_\ell\otimes\id_{M_2(\C)}$ are denoted by $\pi_\ell$ and $\sigma_\ell$.}}
\begin{equation}
\pi_\ell(\partial_\xi f)=[\rho_\ell(\xi),\pi_\ell(f)],\quad
\partial_\xi\sigma_\ell(a)=\sigma_\ell([\rho_\ell(\xi),a]) \;.\\
\label{eq:lemma1}
\end{equation}
Hence
\begin{equation}
\label{eq:lemma1bis}
\pi_\ell([D,f])=[D_\ell,\pi_\ell(f)],\quad
[D,\sigma_\ell(a)]=\sigma_\ell([D_\ell,a]).
\end{equation}
\end{cor}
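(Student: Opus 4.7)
The plan is to derive the four identities in two stages: first prove the two formulas in \eqref{eq:lemma1} for a single Lie algebra element $\xi\in\mf{su}(2)$, then assemble them into the matrix statements \eqref{eq:lemma1bis} using the explicit expressions \eqref{eq:110}--\eqref{eq:111}.

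For the first identity, start from the integral representation \eqref{eq:BSpil},
\begin{equation*}
\pi_\ell(\partial_\xi f)=\gamma_\ell\int_{\bS^2}(\partial_\xi f)(x)\,Q_{x;\ell}\,\de\mu_x .
\end{equation*}
Since $\partial_\xi$ generates a one-parameter subgroup of rotations and $\de\mu$ is the $SU(2)$-invariant measure, $\partial_\xi$ is skew-adjoint on $L^2(\bS^2)$, so integration by parts yields $\int(\partial_\xi f)Q_{x;\ell}\de\mu_x=-\int f(\partial_\xi Q_{x;\ell})\de\mu_x$. Applying Lemma \ref{lemma:Qeq} to rewrite $\partial_\xi Q_{x;\ell}=-[\rho_\ell(\xi),Q_{x;\ell}]$ and pulling the constant operator $\rho_\ell(\xi)$ outside the integral gives $\pi_\ell(\partial_\xi f)=[\rho_\ell(\xi),\pi_\ell(f)]$. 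For the second identity, differentiate $\sigma_\ell(a)(x)=\tr(Q_{x;\ell}\,a)$ directly:
\begin{equation*}
\partial_\xi\sigma_\ell(a)(x)=\tr\big((\partial_\xi Q_{x;\ell})\,a\big)=-\tr\big([\rho_\ell(\xi),Q_{x;\ell}]\,a\big)=\tr\big(Q_{x;\ell}[\rho_\ell(\xi),a]\big)=\sigma_\ell([\rho_\ell(\xi),a])(x),
\end{equation*}
where cyclicity of the trace is used in the third equality.

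For \eqref{eq:lemma1bis}, observe that the scalar part $\tfrac12\I_2$ of $D$ commutes with $f\otimes\I_2$, so using \eqref{eq:110} one gets
\begin{equation*}
[D,f\otimes\I_2]=\mat{\partial_H f & \partial_F f\\ \partial_E f & -\partial_H f},
\end{equation*}
a matrix of multiplication operators. Applying $\pi_\ell$ entrywise and using the first identity of \eqref{eq:lemma1} for $\xi=H,E,F$ turns each entry into the corresponding commutator $[\rho_\ell(\xi),\pi_\ell(f)]$; the resulting matrix is exactly $[D_\ell,\pi_\ell(f)\otimes\I_2]$ by \eqref{eq:111}. The symmetric argument with $\sigma_\ell$ and the second identity of \eqref{eq:lemma1} yields $[D,\sigma_\ell(a)]=\sigma_\ell([D_\ell,a])$.

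The only real subtlety is the integration-by-parts step: it relies on $\partial_\xi$ being anti-self-adjoint on $L^2(\bS^2,\de\mu)$, which follows from the $SU(2)$-invariance of $\de\mu$ together with the fact that $Q_{x;\ell}$ is smooth in $x$ so that no boundary contributions arise (and $\bS^2$ has no boundary anyway). Everything else is a bookkeeping exercise in the matrix structure of $D$ and $D_\ell$.
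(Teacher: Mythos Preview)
Your proof is correct and follows the same strategy as the paper's: Lemma~\ref{lemma:Qeq} combined with integration by parts for the first identity, Lemma~\ref{lemma:Qeq} and cyclicity of the trace for the second, then the explicit matrix form of $D$ and $D_\ell$ for \eqref{eq:lemma1bis}. You have simply spelled out the details (in particular the justification for integration by parts via the $SU(2)$-invariance of $\de\mu$) that the paper leaves implicit.
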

\begin{proof}
The first equation in \eqref{eq:lemma1} comes from Lemma \ref{lemma:Qeq} using integration by parts. The second
comes from Lemma \ref{lemma:Qeq} and the cyclic property of the
trace. Eq.~\eqref{eq:lemma1bis} then follows from the explicit form of $D$
and $D_\ell$.
\end{proof}
Following Prop.~\ref{propBerezin}, one gets the announced upper and
lower bound to the distance on the quantum sphere.
\begin{prop}
\label{ineqfuzzysphere}
For any $\varphi,\psi\in\mc{S}(\A_\ell)$, one has
\begin{equation}\label{eq:prop8}
d_{\A,D}^\flat(\varphi,\psi)\leq d_{\A_\ell,D_\ell}(\varphi,\psi)\leq d_{\A,D}^{(\ell)}(\varphi^{\sharp},\psi^{\sharp}).
\end{equation}
\end{prop}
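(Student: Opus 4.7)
The plan is to follow verbatim the strategy of Proposition \ref{propBerezin}, whose proof transposes almost word by word once one has the intertwining identities \eqref{eq:lemma1bis} together with the facts that $\pi_\ell$ and $\sigma_\ell$ are norm non-increasing and that $\pi_\ell$ is surjective (all established in the lemmas just above the statement). The two inequalities are proved independently by dualizing the two identities of \eqref{eq:lemma1bis}.

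For the lower bound, I would expand the definition:
\begin{equation*}
d^\flat_{\A,D}(\varphi,\psi)=\sup_{f=f^*\in\A,\,\|[D,f]\|\leq 1}\bigl(\varphi^\sharp(f)-\psi^\sharp(f)\bigr)=\sup_{f=f^*\in\A,\,\|[D,f]\|\leq 1}\bigl(\varphi(\pi_\ell(f))-\psi(\pi_\ell(f))\bigr).
\end{equation*}
For any admissible $f$, set $a:=\pi_\ell(f)\in\A_\ell^{\mathrm{sa}}$. Then by the first equation of \eqref{eq:lemma1bis} and the norm non-increasing property of $\pi_\ell$, one has $\|[D_\ell,a]\|=\|\pi_\ell([D,f])\|\leq\|[D,f]\|\leq 1$, so that $\varphi(a)-\psi(a)\leq d_{\A_\ell,D_\ell}(\varphi,\psi)$, and the lower bound follows.

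For the upper bound, given $a=a^*\in\A_\ell$ with $\|[D_\ell,a]\|\leq 1$, use the surjectivity of $\pi_\ell$ to pick $f=f^*\in\A$ with $\pi_\ell(f)=a$ (one may choose $f$ self-adjoint since $\pi_\ell$ preserves the $*$-structure). Then $\varphi(a)-\psi(a)=\varphi^\sharp(f)-\psi^\sharp(f)$, and the second equation of \eqref{eq:lemma1bis} combined with the norm non-increasing property of $\sigma_\ell$ gives
\begin{equation*}
\|[D,B_\ell(f)]\|=\|[D,\sigma_\ell(a)]\|=\|\sigma_\ell([D_\ell,a])\|\leq\|[D_\ell,a]\|\leq 1,
\end{equation*}
so $f$ is admissible in the supremum defining $d^{(\ell)}_{\A,D}(\varphi^\sharp,\psi^\sharp)$, yielding $\varphi(a)-\psi(a)\leq d^{(\ell)}_{\A,D}(\varphi^\sharp,\psi^\sharp)$ and hence the upper bound.

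There is essentially no obstacle: the whole content sits in the two intertwining identities \eqref{eq:lemma1bis}, which have already been established, and in the structural properties of $\pi_\ell$ and $\sigma_\ell$. The only mild point to keep an eye on is that in the upper-bound argument, the preimage $f$ of $a$ under $\pi_\ell$ should be chosen self-adjoint; this is harmless because $\pi_\ell$ is a real-linear positive surjection between the self-adjoint parts. Accordingly, I expect the final write-up to be just a couple of lines, mirroring the proof of Proposition \ref{propBerezin}.
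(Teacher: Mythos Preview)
Your proposal is correct and follows exactly the paper's approach: the paper's proof is nothing more than the sentence ``Following Prop.~\ref{propBerezin}'', and you have faithfully transposed that argument using the fuzzy-sphere analogues \eqref{eq:lemma1bis} in place of \eqref{eq:Dtpt}, together with the norm non-increasing and surjectivity properties of $\pi_\ell$, $\sigma_\ell$ established in the preceding lemmas. Your write-up is in fact more detailed than the paper's one-line proof, and the mild point you flag about choosing a self-adjoint preimage is handled correctly.
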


\begin{rem}
As for Moyal, coherent states converge to points in the weak$^*$ topology.
Indeed for all $f\in\A$ one has
\begin{equation}
\|f-B_\ell(f)\|_\infty\leq L_D(f)\cdot \frac{\pi}{2^{\gamma_\ell+2}}\binom{2\gamma_\ell}{\gamma_\ell} \;,\qquad\forall\;f\in\A\,.
\label{eq:70}
\end{equation}
This is a particular case of \cite [Thm.~2.3]{Rie04b}. The coefficient multiplying
$L_D(f)$ is given by $\int_{\bS^2}d_{\mathrm{geo}}(x_0,y)K_{x_0}(y)\de\mu_y$, and is
independent on $x_0$.
Taking for $x_0$ the north pole $\vartheta=0$,
one has $K_{x_0}(y)=\gamma_\ell(\cos\tfrac{\vartheta'\!}{2})^{4\ell}$,
and $d_{\mathrm{geo}}(x_0,y)=\vartheta'$ where $\vartheta'$ is the polar angle of $y$.
An explicit computation of the integral yields (\ref{eq:70}).
From the asymptotic behaviour $\binom{2n}{n}\sim
\frac{2^{2n}}{\sqrt{\pi n}}$, we deduce that $\Psi_{x;\ell}^\sharp(f)=B_\ell(f)(x)\xrightarrow{\ell\to\infty}f(x)$. \qed
\end{rem}
\medskip

To conclude, we observe that since $\|[D_\ell,\,.\,]\|$ is a Lipschitz seminorm on $\A_\ell$, and the algebra is finite-dimensional,
we know that $(\A_\ell,V_\ell\otimes\C^2,D_\ell)$ is a compact quantum
metric space, so that the distance
$d_{\A_\ell,D_\ell}$ is finite. But $d_{\A,D}^{(\ell)}$ is not: as for
the regularization by eigenprojections of section~\ref{sec:5.3}, 
 the representation $\pi_\ell$ cuts the components of $f$ with high angular
momentum and the distance
between pure states is infinite.
\begin{prop}
\label{prop:fuzzysphere}
  \begin{align}
\label{eq:911}
    d_{\A,D}^{(\ell)}(\delta_x,\delta_y)&=\infty\quad\forall\;x\neq y,\\
    d_{\A,D}^{(\ell)}(\varphi^{\sharp},\psi^{\sharp})&<\infty
    \quad \forall\; \varphi,\psi\in\mc{S}(\A_\ell).
\label{eq:91}
    \end{align}
\end{prop}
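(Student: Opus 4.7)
The plan is to treat the two statements in parallel, both hinging on the $SU(2)$-equivariance and the finite-dimensional range of the quantisation $\pi_\ell$. Under the adjoint action, $\A_\ell=\mathrm{End}(V_\ell)\simeq V_\ell\otimes V_\ell^{*}$ decomposes into $\mf{su}(2)$-isotypical components $V_0\oplus V_1\oplus\dots\oplus V_{2\ell}$; equivariance together with Schur's lemma then imply that $\pi_\ell$ is surjective from $\A^{\mathrm{sa}}$ onto $\A_\ell^{\mathrm{sa}}$, while $\pi_\ell(Y_{L,m})=0$ for every $L>2\ell$.

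For \eqref{eq:91}, I would parametrise the supremum defining $d_{\A,D}^{(\ell)}(\varphi^\sharp,\psi^\sharp)$ by $a:=\pi_\ell(f)$. Because $B_\ell(f)=\sigma_\ell(a)$ depends only on $a$, and using \eqref{eq:lemma1bis} to rewrite the commutator, the distance becomes
\[
d_{\A,D}^{(\ell)}(\varphi^\sharp,\psi^\sharp)=\sup_{a\in\A_\ell^{\mathrm{sa}}}\bigl\{\varphi(a)-\psi(a)\,:\,\|\sigma_\ell([D_\ell,a])\|\leq 1\bigr\}.
\]
The seminorm $N(a):=\|\sigma_\ell([D_\ell,a])\|$ on the finite-dimensional space $\A_\ell^{\mathrm{sa}}$ has kernel exactly $\C P_\ell$: injectivity of $\sigma_\ell$ reduces $N(a)=0$ to $[D_\ell,a]=0$, which via the block form \eqref{eq:111} forces $a$ to commute with $\rho_\ell(H),\rho_\ell(E),\rho_\ell(F)$, whence Schur's lemma applied to the irreducible module $V_\ell$ gives $a\in\C P_\ell$. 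Any two states agree on $P_\ell$, so $\varphi-\psi$ descends to the finite-dimensional quotient $\A_\ell^{\mathrm{sa}}/\C P_\ell$, on which $N$ becomes a genuine norm. Equivalence of all norms on a finite-dimensional vector space produces a constant $C$ with $\varphi(a)-\psi(a)\leq C\,N(a)$, so the supremum is bounded by $C$.

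For \eqref{eq:911}, the strategy is the opposite one: I would use that $\ker B_\ell$ contains every high-angular-momentum spherical harmonic to manufacture admissible test functions violating the constraint trivially. For any $L>2\ell$ and $m$ one has $B_\ell(Y_{L,m})=\sigma_\ell(\pi_\ell(Y_{L,m}))=0$, hence any real combination $f$ of $\Re Y_{L,m}$ and $\Im Y_{L,m}$ with $L>2\ell$ satisfies $f=f^{*}\in\A$ and $\|[D,B_\ell(f)]\|=0\leq 1$. Provided at least one such $Y_{L,m}$ has $Y_{L,m}(x)\neq Y_{L,m}(y)$, scaling $f$ by an arbitrary real constant makes $f(x)-f(y)$ unbounded, forcing $d_{\A,D}^{(\ell)}(\delta_x,\delta_y)=\infty$.

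The one delicate point is this last separation statement, which I would settle by a Fourier-theoretic argument on $\bS^2$: if on the contrary $Y_{L,m}(x)=Y_{L,m}(y)$ for all $L>2\ell$ and all $m$, the distributional spherical-harmonic expansion
\[
\delta_x-\delta_y=\sum_{L,m}\bigl(\overline{Y_{L,m}(x)}-\overline{Y_{L,m}(y)}\bigr)\,Y_{L,m}
\]
would reduce to a finite sum over $L\leq 2\ell$ and hence represent a smooth function, contradicting the singular support $\{x,y\}$ of $\delta_x-\delta_y$ when $x\neq y$.
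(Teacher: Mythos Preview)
Your argument is correct and rests on the same two pillars as the paper's proof: the kernel of $\pi_\ell$ contains all harmonics of degree $L>2\ell$, and the finite-dimensionality of the effective parameter space makes the relevant Lipschitz seminorm into a norm modulo constants. The implementations differ in two respects worth noting.

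For \eqref{eq:91} the paper stays on the function side: it works on the finite-dimensional subspace $V=\mathrm{Span}\{Y_{k,m}:k\leq 2\ell\}$, shows by a dimension count that $B_\ell|_V$ is injective, and then compares the norms $\|[D,B_\ell(\cdot)]\|$ and $L_D$ on $V/\C Y_{0,0}$. Your reparametrisation by $a=\pi_\ell(f)$ moves the whole computation to $\A_\ell^{\mathrm{sa}}$, where finite-dimensionality is immediate and the Lipschitz property reduces cleanly to $[D_\ell,a]=0\Rightarrow a\in\C P_\ell$ via Schur. This is slightly more economical: you avoid the dimension-counting step.

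For \eqref{eq:911} the paper uses the explicit family $f_k(x)=e^{-ik\phi}\sin\vartheta$ with $k>2\ell$ and performs a short case analysis on $(\vartheta,\phi)$ to locate a $k$ separating $x$ from $y$. Your distributional argument---that equality of all high-degree harmonics would force $\delta_x-\delta_y$ to be smooth---is a softer route that bypasses any case distinction; the paper's version is more hands-on but equally short.
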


\begin{proof}
Let $f_k(x)=e^{-ik\phi}\sin\vartheta$ with $k>2\ell$. Then $\pi_\ell(f)=0$ as one can see
by performing the integral in $\de\phi$ in \eqref{eq:BSpil}.
Let $x=(\phi,\vartheta)$ and $y=(\phi',\vartheta')$.
For $x\neq y$ we can always find a $k>2\ell$ such that $f_k(x)\neq
f_k(y)$, which proves \eqref{eq:911}: 

- when $\vartheta\neq\vartheta'$ or when
$\vartheta=\vartheta'$ and $\frac{\phi-\phi'}{2\pi}$ is irrational, any $k$ is fine;

- when $\vartheta=\vartheta'$ and $\frac{\phi-\phi'}{2\pi}=\frac{p}{q}$ with $p$ and $q$ coprime,
then any $k$ coprime to $q$ is fine.

\noindent
For the same reason above, $\pi_\ell(Y_{km})=0$ if $k>2\ell$ and the support of $\pi_\ell$
is $V=\mathrm{Span}\{Y_{km}:k\leq 2\ell\}$. Now $\dim(V)=(2\ell+1)^2$, and being $\pi_\ell$ surjective,
$\dim\mathrm{Im}(\pi_\ell)=\dim(\A_\ell)=\dim M_{2\ell+1}(\C)=(2\ell+1)^2$ too, proving that the restriction
of $\pi_\ell$ to $V$ is also injective. Since $\sigma_\ell$ is injective, we deduce that the map $B_\ell:V\to V$
is injective too. Therefore $[D,B_\ell(f)]=B_\ell([D,f])$ (from
\eqref{eq:lemma1bis}) is zero only if $[D,f]=0$, i.e.~$f$ is constant.
Hence $L_B := \|[D,B_\ell(\,.\,)]\|$ is a Lipschitz seminorm on
$V$. By construction $\varphi^{\sharp}$ and $\psi^{\sharp}$ depends only on
the component of $f$ belonging to $V$, thus
\begin{equation}
d_{\A,D}^{(\ell)}(\varphi^{\sharp},\psi^{\sharp}) =\sup_{f=f^*\in V^{\mathrm{sa}}}
\big\{ \varphi^{\sharp}(f)-\psi^{\sharp}(f)\,:\, \|[D,B_\ell(f)]\|\leq 1 \big\} \;.
\end{equation}
In fact, since we can add a constant to $f$ without changing $\varphi^{\sharp}(f)-\psi^{\sharp}(f)$
nor $\|[D,B_\ell(f)]\|$, the space $V$ can be replaced by $W=\mathrm{Span}\{Y_{km}:0<k\leq 2\ell\}$ (obtained from $V$
by removing the constant functions multiple of $Y_{00}$). Since $L_B$ and $L_D$ are norms on $W$, and the latter is
finite-dimensional, they are equivalent. In particular, $d_{\A,D}^{(\ell)}(\varphi^{\sharp},\psi^{\sharp})$ is
strongly equivalent to $d_{\A,D}(\varphi^{\sharp},\psi^{\sharp})$, and the latter (the Wasserstein distance) is no greater than $2$
(the diameter of $\bS^2$). This proves \eqref{eq:91}.
\end{proof}

\subsection*{Acknowledgments}\vspace{-4pt}{\small
We thank M.~Bordemann, G.~Dito and M.~Schlichenmaier for discussions and for suggesting some useful references.
P.M.~thanks A.~Roche for constant support.
F.D.\ and F.L.\ are partially supported by the ``Progetto FARO 2010'' of the University of Naples {\sl Federico II}.
F.L.\ acknowledges support by CUR Generalitat de Catalunya under project FPA2010-20807.}

\vspace{-4pt}

\end{document}